\newtheorem{hypothesis}{Hypothesis}
\newtheorem{theorem}{Theorem}
\newtheorem{definition}{Definition}
\newtheorem{remark}{Remark}
\newtheorem{lemma}{Lemma}
\newtheorem{claim}{Claim}
\newtheorem{proposition}{Proposition}
\newtheorem{corollary}{Corollary}
\newcommand\vphi{\varphi}
\renewcommand\le{\leqslant}
\renewcommand\ge{\geqslant}
\newcommand\proph{\textbf{PROPH}}
\newcommand\eproph{\textbf{EPROPH}}
\DeclareMathOperator\dist{dist}
\DeclareMathOperator\cl{cl}
\DeclareMathOperator\Span{span}
\begin{document}
\title{Non-Adaptive Prophet Inequalities for Minor-Closed Classes of Matroids}
\author{Kanstantsin Pashkovich, Alice Sayutina}
\affil{University of Waterloo\\
Department of Combinatorics \& Optimization\\
200 University Avenue West\\
Waterloo, ON, Canada\\
N2L 3G1}
\date{}\maketitle

\begin{abstract}

  We consider the matroid prophet inequality problem. This problem has been extensively studied in the case of adaptive mechanisms. In particular, 
there is a tight $2$-competitive mechanism for all matroids~\cite{kw2012matroid}.

  However, it is not known what classes of matroids admit non-adaptive mechanisms with constant guarantee. Recently, in~\cite{chawla2020non} it was shown that there
  are constant-competitive non-adaptive mechanisms for graphic
  matroids. In this work, we show that various known classes of matroids admit  constant-competitive non-adaptive mechanisms. 
\end{abstract}

\section{Introduction}
Let us consider the \emph{classical  prophet inequality problem}~\cite{krengel1977semiamarts}. A \emph{gambler} observes a sequence of non-negative independent random variables $X_1$, $X_2$, \ldots, $X_n$, which correspond to a sequence of values for $n$ \emph{items}. The gambler knows the  distributions of $X_1$, $X_2$, \ldots, $X_n$.  The gambler is allowed to accept at most one item; and the gambler is interested in maximizing the value of the accepted item. However, the gambler cannot simply select an item of the maximum value, because the values of the $n$ items are revealed to the gambler one by one; and each time a value of the current item is revealed the gambler has to make an irrevocable choice whether to accept the current item or not.

What stopping rule the gambler should use to maximize the expected
value of the item they accept? The gambler knows only the distributions of $X_1$, $X_2$, \ldots, $X_n$ while a \emph{prophet} knows the realization of $X_1$, $X_2$, \ldots, $X_n$. Thus, in contrast to the gambler the prophet can always obtain the maximum item's value. The seminal result of Krengel
and Sucheston~\cite{krengel1977semiamarts} showed that the gambler can obtain at least a half of the expected value obtained by
the prophet.

The classical prophet inequality problem led to a series of works on different variants of the problem. A natural variant of the problem is the generalization of the problem where a gambler can buy more than one item, but the set of bought items should satisfy a known feasibility constraint.
Formally, let us be given a collection~$\mathcal{S}\subseteq 2^{[n]}$ of item sets. Then both
gambler and prophet can select any item set~$S$ from~$\mathcal{S}$. So $\mathcal S$ defines a feasibility constraint for selecting items. In most standard examples of feasibility  constraints,
$\mathcal{S}$ can be defined as a collection of all item sets with cardinality at most $k$ for some natural number~$k$. More generally $\mathcal{S}$ can be defined as a collection
of all independent sets in some matroid, in this case we speak about the~\emph{matroid prophet inequality problem}.

The result in \cite{samuelcahn1984comparison} showed that in the single-item setting
a gambler can obtain at least half of the prophet value by using the following threshold-rule:
determine a constant $T$ as a function of known distributions and accept the first
item exceeding $T$. This rule results in a $2$-competitive mechanism, similar to the
adaptive approach of \cite{krengel1977semiamarts}. Note, that this approximation guarantee is known to be tight.
There is also another method to set a threshold presented in \cite{kw2012matroid},
which also results in a $2$-competitive mechanism. This was extended by Chawla et al.
in \cite{chawla2010multi} and \cite{chawla2020non} to the setting of several items.

The results presented in \cite{kw2012matroid} further extend to the \emph{matroid prophet
inequalities}, where accepted items need to form an independent set in a known matroid. It leads to
a $2$-competitive mechanism for every matroid, matching the single-item setting result. However,
unlike the mechanism in the single-item setting, the mechanism for matroids is adaptive:
the thresholds for  items are computed based on the
previously accepted items. By \cite{kw2012matroid}, there also exists a constant-competitive
adaptive mechanism for feasibility constraints defined as an intersection of constant
number of matroids. The mechanism by Kleinberg and Weinberg was further extended to a $2$-competitive mechanism for
polymatroids by D\"utting and Kleinberg in \cite{dutting2015polymatroid}.

Gravin and Wang \cite{gravin2019prophet} studied the bipartite matching version of this problem: in their version, the arriving items are the edges of the (known) bipartite graph. Gravin and Wang obtained a $3$-competitive non-adaptive mechanism, which assigns thresholds to each vertex in the graph and an edge is accepted only if its weight is at least the sum of  the thresholds associated with its endpoints.

Feldman, Svensson and Zenklusen \cite{feldman2016online} studied online item selection mechanisms called ``online contention resolution schemes" (OCRS). They showed that given special properties, OCRS translate directly into a constant-competitive prophet inequality for the same problem against almighty adversary, i.e. an adversary which knows in advance realizations of all the items and the random bits generated by an algorithm. As a result, they develop a constant-competitive mechanism for prophet inequalities of the intersection of a constant number of matroids, knapsack and matching constraints. Those mechanisms are ``almost'' non-adaptive in a sense that they fix thresholds for all items, however their mechanisms also impose a subconstraint: an item cannot be accepted if together with previously accepted items it forms one of the fixed forbidden sets.

Finally, in a later version of their paper \cite{feldman2021online}, they prove that pure non-adaptive mechanisms cannot achieve a constant-competitive approximation even against a ``normal'' adversary. They construct a family of gammoid matroids showing a lower bound of $\Omega(\log n{/}\log\log n)$ for a guarantee of non-adaptive mechanisms on gammoids with $n$ elements.

There have been works studying similar setups with other goals. Chawla et al. \cite{chawla2010multi} studied a Bayesian item selection process in a fixed item arrival order or against an adversary in control of the order. They studied it from a perspective of the revenue maximization for the auctioneer. The performance is constant-competitive
compared to the well-known Myerson  mechanism~ \cite{myerson1981optimal}, which achieves the largest possible expected revenue among truthful mechanisms.
The mechanism by Chawla et al. \cite{chawla2010multi}  has an advantage that it determines static thresholds together with a subconstraint so that each agent can be offered take-it-or-leave-it prices in an online fashion.

Recently, Chawla et al. \cite{chawla2020non} developed a $32$-competitive non-adaptive mechanism for graphic matroids against 
adversary item ordering.

\subsection{Our results}

First, we list the known results for non-adaptive mechanism that were mentioned in the previous section.

\begin{theorem}[\textbf{Uniform Rank $1$ Matroid}~\cite{samuelcahn1984comparison}] 
  \label{thm:twocompetitive}
  There exists a $2$-competitive non-adaptive mechanism for single-item setting.
\end{theorem}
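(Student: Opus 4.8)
The plan is to use a single fixed threshold $T$ that depends only on the known distributions and to accept the first item whose observed value is at least $T$; this rule is manifestly non-adaptive. Following Samuel-Cahn, the natural choice is to take $T$ with $\Pr[\max_i X_i \ge T] = \tfrac12$; I address below the case where no such $T$ exists exactly.

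First I would lower-bound the gambler's payoff. Let $p := \Pr[\exists i : X_i \ge T]$ and, for each $i$, let $q_i := \Pr[X_j < T \text{ for all } j < i]$, so that $q_i \ge 1-p$. Conditioning on which item is the first to clear $T$ and using independence together with $(X_i-T)\mathbb{1}[X_i \ge T] = (X_i-T)^+$,
$$\mathbb{E}[\mathrm{ALG}] \;=\; \sum_{i} q_i\Bigl(T\,\Pr[X_i \ge T] + \mathbb{E}\bigl[(X_i - T)^+\bigr]\Bigr).$$
Since the events ``$i$ is the first index to clear $T$'' are disjoint with union $\{\exists i : X_i \ge T\}$, we have $\sum_i q_i \Pr[X_i \ge T] = p$; combined with $q_i \ge 1-p$ this gives
$$\mathbb{E}[\mathrm{ALG}] \;\ge\; Tp + (1-p)\sum_i \mathbb{E}\bigl[(X_i-T)^+\bigr].$$

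Next I would upper-bound the prophet via $\max_i X_i \le T + \max_i (X_i - T)^+ \le T + \sum_i (X_i - T)^+$, so that
$$\mathbb{E}\Bigl[\max_i X_i\Bigr] \;\le\; T + \sum_i \mathbb{E}\bigl[(X_i - T)^+\bigr].$$
Substituting $p = \tfrac12$ into the two displays yields $\mathbb{E}[\mathrm{ALG}] \ge \tfrac12\bigl(T + \sum_i \mathbb{E}[(X_i-T)^+]\bigr) \ge \tfrac12\,\mathbb{E}[\max_i X_i]$, the claimed $2$-competitiveness.

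The only real obstacle is the existence of a threshold achieving $p = \tfrac12$: atoms in the distributions can make $t \mapsto \Pr[\max_i X_i \ge t]$ jump across $\tfrac12$. I would resolve this either by randomized tie-breaking at the critical value — accepting an item whose value equals $T$ with a probability tuned so that the effective clearing probability is exactly $\tfrac12$ — or, even more cleanly, by using the alternative threshold $T = \tfrac12\,\mathbb{E}[\max_i X_i]$: then the first display, together with $\sum_i \mathbb{E}[(X_i-T)^+] \ge \mathbb{E}\bigl[(\max_i X_i - T)^+\bigr] \ge \mathbb{E}[\max_i X_i] - T = T$, already gives $\mathbb{E}[\mathrm{ALG}] \ge Tp + (1-p)T = T = \tfrac12\,\mathbb{E}[\max_i X_i]$ with no assumption on the distributions. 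In every case $T$ is a function of the distributions alone, so the mechanism remains non-adaptive, and the guarantee is tight by the standard two-point lower bound.
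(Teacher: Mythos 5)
Your argument is correct: both the median-threshold computation and the mean-threshold fallback are the standard Samuel-Cahn and Kleinberg--Weinberg single-threshold proofs, and each step (the identity for $\mathbb{E}[\mathrm{ALG}]$, the bound $q_i\ge 1-p$, and the prophet upper bound via $T+\sum_i\mathbb{E}[(X_i-T)^+]$) checks out. The paper itself gives no proof of this theorem --- it is quoted directly from~\cite{samuelcahn1984comparison} --- so your write-up simply reconstructs the same classical argument the paper defers to, including the tie-breaking caveat the paper only mentions in a footnote to Lemma~\ref{lm:ex_ante}.
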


\begin{theorem}[\textbf{Graphic Matroid}~\cite{chawla2020non}]\label{thm:graphic32} There exists a $32$-competitive non-adaptive mechanism for graphic matroids.
\end{theorem}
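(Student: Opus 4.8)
The plan is to construct a mechanism that commits, before any realization is seen, to a threshold $\tau_e$ for every edge $e$ of the underlying graph $G$, and then, as the edges arrive in the adversarial order, accepts the current edge $e$ exactly when $w_e \ge \tau_e$ and the already-accepted edges together with $e$ still form a forest. For the benchmark I would pass to the ex-ante relaxation: writing $F^{*} = F^{*}(w)$ for the prophet's optimal forest, the vector $x := (\Pr[e \in F^{*}])_e$ is an average of indicator vectors of forests, hence lies in the matroid polytope of the graphic matroid of $G$, so $\sum_{e \in E(H)} x_e \le |V(H)| - 1$ for every subgraph $H$; moreover $\mathrm{OPT} = \mathbb{E}\big[\sum_e w_e \mathbf{1}[e \in F^{*}]\big] = \sum_e v_e$ with $v_e := \mathbb{E}\big[w_e \mathbf{1}[e \in F^{*}]\big]$. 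So it suffices to recover a constant fraction of $\sum_e v_e$ using only the polytope constraints on $x$.

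The structural fact I would exploit is that those constraints force sparsity hereditarily. The constraint for $H = G$ gives $\sum_e x_e < |V(G)|$, so some vertex has fractional degree below $2$; deleting it preserves every forest-polytope inequality, so iterating produces an ordering $v_1, \dots, v_n$ of $V(G)$ in which, for each $i$, the edges joining $v_i$ to lower-indexed vertices have total $x$-weight below $2$ (a fractional $2$-degeneracy ordering). Charging each edge to its higher-indexed endpoint then makes every vertex \emph{responsible} for a set of edges of total fractional weight below $2$. I would next choose $\tau_e$ so that $e$ is \emph{active}, meaning $w_e \ge \tau_e$, with probability a small fixed multiple $c$ of $x_e$; a standard quantile comparison shows that the expected value one would collect from $e$, if $e$ were accepted whenever active, is at least $c\,v_e$.

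The main obstacle is the feasibility analysis: an active edge $e = \{v_i, v_j\}$ is still rejected if $v_i$ and $v_j$ already lie in a common component of the accepted forest, and, because the arrival order is adversarial and the acceptance events of different edges are coupled through that order, this event has to be bounded carefully. I would dominate it by the probability that $v_i$ and $v_j$ are joined by a simple path of active edges in $G - e$ (legitimate, since the accepted forest is contained in the active subgraph and the edge-activations are independent), and then run the union bound over connecting paths using the degeneracy ordering, so that each edge of a path is paid for by its responsible vertex, whose total responsible weight is below $2$; this collapses the sum over paths to a geometric-type series that falls below a fixed $\beta < 1$ once $c$ is small enough. The per-edge acceptance probability is then at least $1-\beta$, and combining this with the loss $c\,v_e$ per edge, the factor from the ex-ante relaxation, and the factor from charging an edge to only one of its two endpoints is what the competitive ratio accumulates; carefully accounting for these losses is what pins the constant down to $32$.

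Finally, I would emphasize that the graphic structure is exactly what makes a fixed-threshold scheme succeed here: for a general matroid there is provably no constant-competitive non-adaptive mechanism — this is the content of the $\Omega(\log n / \log\log n)$ lower bound for gammoids in \cite{feldman2021online} — essentially because fixed thresholds cannot be protected from becoming redundant relative to the accepted set; for graphic matroids, being blocked means only that the two endpoints are already connected, a local, path-structured event that the fractional $2$-degeneracy ordering lets us control with the thresholds fixed in advance.
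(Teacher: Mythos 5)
Your ex-ante relaxation and the fractional $2$-degeneracy ordering match the paper's setup (Lemma~\ref{lm:ex_ante} and Lemma~\ref{lm:orient}), but the feasibility analysis has a genuine gap, and it is not just an artifact of the union bound: the mechanism as you describe it (scale down acceptance probabilities by a constant $c$ and accept greedily) is not constant-competitive. Take vertices $u,v,w_1,\dots,w_m$, the edge $e=uv$, and the $2m$ edges $uw_i$, $w_iv$; put $x_{uw_i}=x_{w_iv}=1/2$, which lies in the forest polytope, and let essentially all of the prophet's value sit on $e$ while the side edges have tiny positive values (so the prophet's optimal forest does select the side edges with constant probability). Your path union bound is $\sum_{P}\prod_{f\in P}c\,x_f=m\,c^2/4$, which is unbounded in $m$ for any fixed constant $c$; the degeneracy ordering does not help, because the obstruction is not long paths but many vertex-disjoint paths of length $2$. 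Worse, the true blocking probability also tends to $1$: if the adversary presents the pairs $uw_i,w_iv$ first, then $u$ and $v$ are already connected in the accepted forest unless no pair is fully active, an event of probability $(1-c^2x^2)^m\to 0$, so the valuable edge $e$ is rejected with probability tending to $1$ and the mechanism recovers only a vanishing fraction of $\eproph_M$. So no choice of the constant $c$ yields your $\beta<1$.

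The missing ingredient is exactly what the paper (following \cite{chawla2020non}) adds on top of the orientation: independently discard each edge with probability $1/2$, draw a uniformly random vertex set $S$, and keep only edges oriented from $S$ to $\overline{S}$ (Algorithm~\ref{alg:graphic}). After this step every surviving vertex is a pure head or a pure tail, so an active surviving edge $e_i$ with head $v$ can only be blocked by another accepted edge whose head is $v$; the path-connectivity event you tried to control collapses to a single-vertex event, bounded via the orientation by $\sum_{e_j\in\delta^-(v)}p_j/4\le 1/2$ (Lemma~\ref{lm:good_chance}). This is what defeats the example above, at the price of the extra factor lost to discarding and the random cut. The paper then obtains the stated constant $32$ not by a direct graphic argument but by observing that graphic matroids are $2$-column sparse and invoking Theorem~\ref{thm:k_column_sparse} with $k=2$ (and proves $16$ for simple graphs by rescaling); your accounting aiming at $32$ without the cut/discard step cannot be repaired by tuning $c$.
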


Now let us list our results. In case of a simple graph, i.e. a graph with no parallel edges or loops, we can slightly improve the above theorem by considering essentially the same mechanism as~\cite{chawla2020non} but considering a different scaling of a point from the matroid polytope. We provide this result for the sake of completeness.

\begin{theorem}\label{thm:graphic16} There exists a $16$-competitive non-adaptive mechanism for graphic matroids  in the case of simple graphs.
\end{theorem}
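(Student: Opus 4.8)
The plan is to run essentially the mechanism behind Theorem~\ref{thm:graphic32}, changing only the constant by which a point of the graphic matroid polytope is scaled down, and then to verify that for a simple graph this constant can be halved.

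Recall the shape of that argument. Via the ex-ante relaxation one first produces a point $q$ in the graphic matroid polytope of $G$ together with, for each edge $e$, a (possibly randomized) threshold $\tau_e$ such that $\Pr[X_e\ge\tau_e]=q_e$ and $\sum_e \mathbb{E}[X_e\cdot\mathbf{1}\{X_e\ge\tau_e\}]\ge\mathrm{OPT}$; a convenient choice is $q_e:=\Pr[e\in F^\star]$ for $F^\star$ a maximum-weight forest of the realized instance, so that $q$ is by construction a convex combination of forests and the top-$q_e$-quantile threshold on $e$ dominates the prophet's conditional gain on $e$. Moving from $q$ to $q/c$ --- equivalently, lowering $\Pr[X_e\ge\tau_e]$ to $q_e/c$ by passing to a higher quantile of $X_e$ --- costs at most a factor $c$ in attainable value, so the thresholds $\tau'_e$ so obtained satisfy $\sum_e \mathbb{E}[X_e\cdot\mathbf{1}\{X_e\ge\tau'_e\}]\ge\mathrm{OPT}/c$. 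The non-adaptive mechanism then posts the $\tau'_e$ and, when an arriving edge clears its threshold, accepts it unless it closes a cycle with the edges already accepted.

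The heart of the analysis is bounding the value destroyed by these cycle closures. The edges that clear their thresholds form a random subgraph $R$ in which edge $f$ appears independently with probability at most $q_f/c$; the accepted set is a maximal sub-forest of $R$, so an edge $e=uv$ closes a cycle exactly when $u$ and $v$ already lie in a common component of the already-arrived part of $R$. Using the forest-polytope inequalities $q(E(S))\le|S|-1$ together with the fact that $q$ is a mixture of forests, one argues that for a suitable $c$ the total value collected is a constant fraction of $\mathrm{OPT}/c$; tracking the constants gives $32$.

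Only this estimate changes for a simple graph. In a multigraph a single pair $u,v$ can be joined by many edges, so the analysis of Theorem~\ref{thm:graphic32} must scale $q$ aggressively enough to keep the chance that $e=uv$ closes such a short cycle under control; when $G$ is simple this cannot happen, the binding constraint for a pair of vertices is merely $q(E(\{u,v\}))\le 1$, and the relevant bound on the cycle-closure probability improves by a factor of $2$, so $c$ may be taken half as large while the cycle-loss estimate still goes through. Feeding the smaller $c$ through the otherwise unchanged computation yields a $16$-competitive non-adaptive mechanism. The main point to get right is exactly this bookkeeping: one must locate every place in the chain of inequalities of Theorem~\ref{thm:graphic32} where the multigraph case is invoked, confirm that simpleness saves precisely a factor of $2$ there (rather than something weaker, which would lower the constant below $32$ without reaching $16$), and check that the polytope and quantile estimates all survive the gentler scaling.
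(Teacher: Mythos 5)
Your proposal is a plan rather than a proof, and the one concrete claim it makes about where simplicity enters is not correct. You treat the $32$-competitive analysis behind Theorem~\ref{thm:graphic32} as a black box and assert that the only multigraph-specific step is the pair constraint $q(E(\{u,v\}))\le 1$, which you say improves by a factor of $2$ when $G$ is simple. But that constraint is a forest-polytope inequality that holds verbatim for multigraphs (the rank of the edge set spanned by two vertices is $1$ no matter how many parallel edges join them), so it cannot be the source of any saving; and your closing sentence explicitly defers the decisive work (``locate every place \dots confirm that simpleness saves precisely a factor of $2$ there'') to later, i.e.\ the key step is left unexecuted. Moreover, the analysis you sketch --- edges clearing their thresholds form a random subgraph with marginals $q_f/c$, and one bounds the probability that an arriving edge's endpoints already lie in a common accepted component via the inequalities $q(E(S))\le |S|-1$ --- is not the structure of either the cited mechanism or the paper's, and as stated it does not go through: whether two endpoints are already connected is a global connectivity event, and a union bound over polytope inequalities does not control it without additional structure.

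The paper's actual proof supplies exactly that structure, and it is where both the constant and the role of simplicity live. After the ex-ante step (Lemma~\ref{lm:ex_ante}), the edges are oriented so that every vertex has fractional in-weight at most $2$ (Lemma~\ref{lm:orient}); each edge is discarded independently with probability $1/2$; a uniformly random cut $S$ is drawn and only edges directed from $S$ into $\overline{S}$ survive; surviving edges get the $p_i$-quantile threshold. A surviving edge $e_i$ with head $v$ can then only be blocked by another surviving edge in $\delta^-(v)$, and the union bound gives blocking probability at most $\sum_{j\ne i:\,e_j\in\delta^-(v)} p_j/4 \le 1/2$ (Lemma~\ref{lm:good_chance}). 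Simplicity is used precisely in the conditional computation: given that $e_i$ crosses the cut, any other $e_j\in\delta^-(v)$ has a \emph{distinct} tail, so it crosses the cut with conditional probability $1/2$; an edge parallel to $e_i$ would cross with conditional probability $1$, the sum could reach $1$, and one would be forced to discard more aggressively (the factor-$4$ scaling), which is what yields $32$ instead of $16$. So the factor-$2$ saving sits inside the orientation/random-cut argument that your sketch omits entirely; to repair the proposal you would need to reproduce that mechanism and make this conditional-independence point explicit, rather than attribute the saving to the two-vertex rank constraint.
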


Furthermore, the mechanism~\cite{chawla2020non} can be generalized to the setting of $k$-column sparse matroids. This result we need later to obtain Theorem~\ref{thm:minor_closed}.

\begin{theorem}[\textbf{$k$-Column Sparse Matroids}] \label{thm:k_column_sparse} There exists a $(2^{k+2} k)$-competitive non-adaptive mechanism for $k$-column sparse matroids.
\end{theorem}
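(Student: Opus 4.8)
Following the blueprint behind Theorem~\ref{thm:graphic32}, the plan is to pass to an ex-ante relaxation of the matroid prophet problem, scale the resulting fractional point down, and run a static-threshold mechanism whose only extra bookkeeping is a fixed sub-constraint given by a partition matroid that refines $M$. Fix a $k$-column-sparse representation of $M$ on ground set $E$: a matrix with columns $\{a_e\}_{e\in E}$ and rows indexed by a set $U$, each column with at most $k$ nonzero entries; write $R_e\subseteq U$ for the support of $a_e$, so $|R_e|\le k$ (loops, with $R_e=\emptyset$, cannot lie in an independent set and are discarded). The standard ex-ante relaxation for prophet inequalities (as used in~\cite{kw2012matroid,chawla2020non}) supplies a point $y$ in the matroid polytope $P(M)$ and quantile thresholds $\tau_e(\cdot)$ with $\Pr[X_e\ge\tau_e(q)]=q$ (randomized where the distributions have atoms) such that $\sum_{e\in E}y_e\,\mathbb{E}[X_e\mid X_e\ge\tau_e(y_e)]\ge\mathrm{OPT}$, where $\mathrm{OPT}$ is the prophet's expected value. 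Setting this up is routine.

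The core step is to refine $M$ to a partition matroid at a loss of only a factor $\O(k)$, and this is the only place $k$-column-sparsity is used. I would first build a total order $\sigma$ on the rows $U$ by a fractional-degeneracy peeling: going from the $\sigma$-largest row downward, repeatedly delete a still-present row carrying the least total $y$-mass of columns whose support lies entirely within the set of rows not yet deleted. An averaging argument bounds this mass by $k$ at every step: with $m$ rows still present, summing the masses over them counts each surviving column $e$ exactly $|R_e|\le k$ times, so the sum is at most $k\sum_e y_e$ over surviving columns, at most $k$ times their matroid rank by the constraint $y\in P(M)$, and at most $km$ since those columns occupy an $m$-dimensional coordinate subspace; hence the minimum is at most $k$. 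Setting the home row $h(e):=\max_\sigma R_e$, each column gets charged to its home row precisely when that row is deleted (when $i$ is deleted the still-present rows are those $\sigma$-below $i$, so the columns supported within them and meeting $i$ are exactly those with home row $i$), giving $\sum_{e:\,h(e)=i}y_e\le k$ for every $i\in U$. The second ingredient is the structural claim: any $E'\subseteq E$ containing at most one element per home class is independent in $M$. Indeed, otherwise $E'$ contains a circuit $C$; letting $i^*$ be the $\sigma$-largest row met by $C$, a minimal linear dependence among $\{a_e:e\in C\}$ cannot have exactly one nonzero entry in row $i^*$, so at least two elements of $C$ contain $i^*$ in their support — and, $i^*$ being $\sigma$-largest in $\bigcup_{e\in C}R_e$, each of those elements has home row $i^*$, contradicting the one-per-class hypothesis. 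Thus the partition matroid $M'$ on $E$ whose independent sets have at most one element per home class satisfies $\mathcal I(M')\subseteq\mathcal I(M)$ and, by the mass bound, $y/(2k)\in P(M')$.

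The mechanism is then the obvious one: post the static threshold $\tau_e:=\tau_e(y_e/(2k))$ for each $e$, and in the adversarial arrival order accept $e$ exactly when $X_e\ge\tau_e$ and no previously accepted item lies in $e$'s home class. The accepted set is $M'$-independent, hence feasible for $M$, and the thresholds are fixed in advance, so the mechanism is non-adaptive in the sense of~\cite{chawla2020non}. Whether $e$ is available on arrival — its home class still empty — depends only on the other variables, hence is independent of $\{X_e\ge\tau_e\}$; since the rest of $e$'s home class has $y$-mass at most $k$ and each such competitor is active with probability $y_f/(2k)$, a union bound gives availability probability at least $1-\tfrac12=\tfrac12$. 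Because $y_e/(2k)\le y_e$, the posted threshold is at least $\tau_e(y_e)$, so the conditional value $\mathbb{E}[X_e\mid X_e\ge\tau_e]$ only improves. Multiplying the activation probability $y_e/(2k)$, the availability probability $\ge\tfrac12$, and this conditional value, and summing over $e$, yields $\mathbb{E}[\text{mechanism}]\ge\tfrac1{4k}\sum_e y_e\,\mathbb{E}[X_e\mid X_e\ge\tau_e(y_e)]\ge\mathrm{OPT}/(4k)$, which proves the theorem — indeed with the ratio $\O(k)$ in place of $2^{k+2}k$.

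I expect the main obstacle to be the structural claim tying circuits to home rows: this is the only point at which the argument uses the linear representation rather than just the matroid polytope, and it rests on the observation that the $\sigma$-largest row met by a circuit must be shared by at least two of its elements. The companion point — arranging the peeling order $\sigma$ so that every home class has $y$-mass at most $k$ — is a fractional-hypergraph degeneracy statement whose only subtlety is to apply the rank inequality $\sum_{e:\,R_e\subseteq T}y_e\le\mathrm{rk}_M(\{e:R_e\subseteq T\})\le|T|$ to exactly the set $T$ of columns supported inside the current undeleted rows. The remaining ingredients — the ex-ante relaxation, the randomized thresholds for atoms, and the final arithmetic — are all standard.
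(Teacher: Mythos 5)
Your mechanism is not a non-adaptive mechanism in the sense this paper uses, and that is a genuine gap rather than a cosmetic one. You accept an item $e$ only when ``no previously accepted item lies in $e$'s home class,'' i.e.\ you enforce a partition matroid $M'$ refining $M$ as a subconstraint. The paper's definition of a non-adaptive mechanism forces the gambler to accept \emph{every} arriving item whose value clears its fixed threshold and which is independent, in the original matroid $M$, with the items already taken (condition~\ref{enum: nonadaptive independence property}); it explicitly forbids replacing that condition by independence in a refined matroid $M'$. The paper even flags exactly your route when discussing $k$-column sparse matroids: Soto-style refinements give constant guarantees for the secretary problem but ``do not immediately lead to a non-adaptive mechanism \dots\ the reason \dots\ [is] implicit changes to the considered matroid.'' Your peeling order, the per-home-row mass bound $\sum_{e:h(e)=i}y_e\le k$, and the claim that one-element-per-class sets are independent in $M$ are all fine (this is essentially Soto's argument), but they only yield an OCRS/subconstraint mechanism, not the theorem as stated.

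If you drop the subconstraint, the analysis collapses: the gambler is then forced to accept any above-threshold item that keeps $M$-independence, so several items from one home class (or from many classes) may be taken, and the event ``$e$'s home class is empty'' no longer guarantees that $e$ can be added --- accepted items outside $e$'s class can span $e$ in $M$, so blocking is not localized to the home class and the mass bound $k$ per class gives no availability bound. This is precisely the difficulty the paper's proof is engineered around: it discards each item independently with probability $1-\frac{1}{2k}$ and keeps only items lying across a uniformly random cut with respect to an orientation of the support hyper-multigraph (Lemma~\ref{lm:orient2}). Conditioned on an item surviving, every other surviving item that could block it is oriented into the same head vertex, so a union bound against the orientation's mass bound $\sum p_j\le k$ gives selection probability at least $\tfrac12$ under the pure greedy-in-$M$ rule --- at the cost of the $2^{k+2}k$ factor, rather than your claimed $\O(k)$, which in this model you have not established. (Recall also that Feldman, Svensson and Zenklusen show pure non-adaptive mechanisms cannot be constant-competitive on all matroids, so the presence or absence of the subconstraint is substantive.) To repair your write-up you would need either to adopt the paper's discarding-plus-random-cut device or to otherwise show that forced acceptances outside the home classes cannot hurt, which your current argument does not address.
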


Note, that Theorem~\ref{thm:graphic32} of~\cite{chawla2020non} follows from Theorem~\ref{thm:k_column_sparse}, since a graphic matroid  is also a $2$-column sparse matroid over $\mathbb{F}_2$.

Using analogous approach to the one in \cite{soto2013matroid}, we also develop a mechanism for cographic matroids.

\begin{theorem}[\textbf{Cographic Matroids}] \label{thm:cographic} There exists a $6$-competitive non-adaptive mechanism for cographic matroids.
\end{theorem}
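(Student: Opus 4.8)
The plan is to follow the density‑based reduction that~\cite{soto2013matroid} uses for the matroid secretary problem: reduce the cographic matroid to a \emph{simple} cographic matroid, observe that such a matroid has density below $3$ and hence splits into three independent sets, and then run a single‑item threshold rule on each piece. Writing $c(H)$ for the number of connected components of a graph $H$, recall that a set $F\subseteq E(G)$ is independent in $M^*(G)$ iff $G-F$ has the same number of components as $G$, and that the circuits of $M^*(G)$ are the bonds (minimal edge cuts) of $G$.

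First I would reduce the instance. We may assume $G$ is $2$‑connected, since $M^*(G)$ is the direct sum of the cographic matroids of the blocks of $G$ and, for a direct sum, both the prophet's value and the mechanism decompose additively. A bridge of $G$ is a loop of $M^*(G)$: it lies in no independent set, contributes $0$ to the prophet's value and is never accepted, so discard it. A parallel class $C$ of $M^*(G)$ — equivalently a series class of $M(G)$, and note these do \emph{not} arise only from paths of degree‑$2$ vertices — contributes at most one element to any independent set; so in the prophet problem we replace $C$ by a single ``super‑element'' of value $\max_{e\in C}X_e$, and have the mechanism run the single‑item rule of Theorem~\ref{thm:twocompetitive} inside $C$, at a cost of a factor $2$. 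After these reductions we are left with the simplification $N$ of $M^*(G)$, which (blockwise) is the cographic matroid $M^*(G_0)$ of a $3$‑edge‑connected graph $G_0$, together with some coloops. For such a $G_0$, every partition of $V(G_0)$ into $t\ge 2$ parts has multi‑cut of size at least $\tfrac32 t$ (each part sees $\ge 3$ boundary edges), so for every nonempty $F\subseteq E(G_0)$ with $c(G_0-F)=c$ we get $|F|\ge\tfrac32 c$ and $r_N(F)=|F|-c+1\ge\tfrac c2+1$, whence $|F|<3\,r_N(F)$ (the case $c=1$ being trivial). By the matroid union / Nash–Williams theorem the ground set of $N$ therefore partitions into three independent sets $I_1,I_2,I_3$, and each restriction $N|_{I_\ell}$ is a free matroid.

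The mechanism picks $\ell\in\{1,2,3\}$ (say, the best one, or uniformly at random), assigns threshold $+\infty$ to every edge whose parallel class is not in $I_\ell$, and on each parallel class that \emph{is} in $I_\ell$ runs the single‑item rule of Theorem~\ref{thm:twocompetitive} with thresholds depending only on the distributions; this is non‑adaptive. It is feasible because the accepted edges form one representative from each of some sub‑collection $S\subseteq I_\ell$, $S$ is independent in $N$, and picking one element from each parallel class of an independent set of the simplification yields an independent set of $M^*(G)$. For the guarantee: in every realization an optimal independent set of $M^*(G)$ uses at most one edge per parallel class, so its value is at most $\sum_C\max_{e\in C}X_e$ and $\mathbb{E}[\mathrm{OPT}]\le\sum_C\mathbb{E}[\max_{e\in C}X_e]$; on the other hand, by Theorem~\ref{thm:twocompetitive} the rule collects at least $\tfrac12\mathbb{E}[\max_{e\in C}X_e]$ from each class $C$ that lies in the chosen part $I_\ell$, which happens for a $\tfrac13$ fraction of the classes, so the mechanism earns at least $\tfrac13\cdot\tfrac12\sum_C\mathbb{E}[\max_{e\in C}X_e]\ge\tfrac16\,\mathbb{E}[\mathrm{OPT}]$.

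The main obstacle is the structural step: one must check that stripping loops and merging all parallel classes really turns $M^*(G)$ (blockwise) into the cographic matroid of a $3$‑edge‑connected graph — in particular that full series reduction of $M(G)$ destroys every $2$‑edge‑cut — and then carry through the multi‑cut estimate that yields density below $3$. Everything afterward is the routine composition of the single‑item prophet inequality (Theorem~\ref{thm:twocompetitive}) with an oblivious choice of one of three free matroids, the two factors $2$ and $3$ multiplying to $6$.
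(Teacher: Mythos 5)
Your proposal is correct and follows essentially the same route as the paper: remove bridges (loops of $M^*$), collapse the parallel classes of the cographic matroid (series classes of $G$) at a factor-$2$ cost via the single-item rule of Theorem~\ref{thm:twocompetitive}, reduce to a blockwise $3$-edge-connected graph, and lose a factor $3$ by committing to one of three independent sets covering the simplification, for $2\cdot 3=6$. The only cosmetic difference is that you obtain the three-way cover from the density bound $|F|\le 3\,r(F)$ plus matroid union, whereas the paper invokes the three-spanning-trees statement (Proposition~\ref{prop:threeconnected}); the paper itself observes after Theorem~\ref{thm:gamma_sparse} that these are the same fact.
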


The approach in~\cite{soto2013matroid} immediately leads to the following result for $\gamma$-sparse matroids.
\begin{theorem}[\textbf{$\gamma$-Sparse Matroids}~\cite{soto2013matroid}]  There exists a $\gamma$-competitive non-adaptive mechanism for $\gamma$-sparse matroids.
\end{theorem}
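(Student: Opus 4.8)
The plan is to isolate exactly the structural ingredient that the density machinery of~\cite{soto2013matroid} supplies and then combine it with a one-line prophet-inequality estimate. Recall that a matroid $M=(E,r)$ is $\gamma$-sparse precisely when $|S|\le\gamma\,r(S)$ for every $S\subseteq E$. Writing $x^\star\in\mathbb R^E$ for the vector with $x^\star_e=1/\gamma$ for all $e$, this is exactly the statement that $x^\star$ lies in the matroid polytope $P(M)=\{x\in\mathbb R^E_{\ge 0}: x(S)\le r(S)\ \forall S\subseteq E\}$, since $x^\star(S)=|S|/\gamma$. In particular a $\gamma$-sparse matroid is loopless, so there is nothing to do about loops. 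This is precisely the observation Soto exploits for uniformly dense matroids, where $\frac1{d(M)}\,\mathbb 1_E\in P(M)$; here $\gamma$ simply upper-bounds the density $d(M)=\max_{S}|S|/r(S)$.

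First I would invoke the standard description $P(M)=\mathrm{conv}\{\mathbb 1_I: I\text{ independent in }M\}$ together with Carath\'{e}odory's theorem to write $x^\star=\sum_j\lambda_j\,\mathbb 1_{I_j}$ with independent sets $I_j$ and convex weights $\lambda_j$; such a decomposition can be produced in polynomial time from a rank oracle. The mechanism is then: before any value is revealed, draw $j$ with probability $\lambda_j$ and commit to $I:=I_j$; when an element $e$ arrives, accept it iff $e\in I$. This is the pure threshold mechanism with threshold $0$ on every $e\in I$ and $+\infty$ on every $e\notin I$, so it is non-adaptive in the paper's sense, it ignores the arrival order entirely (hence works against adversarial order), and the feasibility check never triggers because every set it accepts is contained in the independent set $I$.

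For the analysis I would first record the crude benchmark bound: since every $X_e\ge 0$ and every feasible set is a subset of $E$, the prophet's value obeys $\mathrm{OPT}=\mathbb E\big[\max_{I\text{ indep}}\sum_{e\in I}X_e\big]\le\mathbb E\big[\sum_{e\in E}X_e\big]=\sum_{e\in E}\mathbb E[X_e]$. On the other side, the set $I$ is selected using only $M$ and the mechanism's private randomness, hence is independent of the realizations, so $\mathbb E[\mathrm{ALG}]=\sum_e\mathbb E\big[X_e\,\mathbb 1[e\in I]\big]=\sum_e\mathbb E[X_e]\,\Pr[e\in I]=\sum_e x^\star_e\,\mathbb E[X_e]=\tfrac1\gamma\sum_e\mathbb E[X_e]\ge\tfrac1\gamma\,\mathrm{OPT}$. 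The randomness is inessential: committing deterministically to the $I_j$ maximizing $\sum_{e\in I_j}\mathbb E[X_e]$ also works, since that maximum is at least the $\lambda$-weighted average $\sum_e x^\star_e\,\mathbb E[X_e]$.

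I do not expect a genuinely hard step. The whole content is the identification of $\gamma$-sparseness with the membership $x^\star\in P(M)$ --- the density fact imported from~\cite{soto2013matroid} --- after which the argument is two lines. The only place to be careful is the bookkeeping around the definition of a non-adaptive mechanism: the thresholds must be fixed in advance, with no dependence on the arrival order or on previously accepted items, and the $\{0,+\infty\}$ assignment (randomized, or derandomized as above, over the choice of $I$) meets that requirement. If one wished the mechanism to resemble the value-sensitive threshold schemes used for the other matroid classes in this paper, one could additionally test each $e\in I$ against a single-item-style threshold, but since the elements of $I$ have no feasibility interaction this can only decrease the collected value and is not needed for the claimed $\gamma$-competitiveness.
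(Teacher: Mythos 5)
Your proposal is correct and follows essentially the same route as the paper: the observation that $\mathbb{1}/\gamma$ lies in $P_M$, its decomposition into a convex combination of independent sets, a randomized commitment to one independent set via $\{0,+\infty\}$ thresholds, and the bound $\mathbb{E}[\mathrm{ALG}]=\frac{1}{\gamma}\sum_e \mathbb{E}[X_e]\ge \frac{1}{\gamma}\,\eproph_M$. Your added derandomization remark is a harmless extra not present in the paper.
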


Combining the above results and using classic Seymour's decomposition results we obtain the following theorem.

\begin{theorem}[\textbf{Regular Matroids}]~\label{thm:regularcomp} There exists a $256$-competitive non-adaptive mechanism for regular matroids.
\end{theorem}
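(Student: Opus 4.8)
The plan is to combine the non-adaptive mechanisms already obtained for the ``basic'' classes with Seymour's structural decomposition of regular matroids. Recall Seymour's theorem: every regular matroid $M$ can be built from graphic matroids, cographic matroids, and copies of $R_{10}$ by iterated $1$-, $2$-, and $3$-sums. Fix such a construction; it is encoded by a decomposition tree whose leaves carry basic matroids and whose internal nodes carry $k$-sum operations with $k\le 3$, and the only alterations of the basic matroids that ever occur are deletions and contractions of the (at most three) connecting elements of the sums. Graphic and cographic matroids are minor-closed, and $R_{10}$ has only finitely many minors, so every matroid at a leaf admits a constant-competitive non-adaptive mechanism: Theorem~\ref{thm:graphic32} for graphic leaves, Theorem~\ref{thm:cographic} for cographic leaves, and for $R_{10}$ and its minors a direct check that they are $\gamma$-sparse with $\gamma$ an absolute constant (they have at most ten elements, so every rank-to-size ratio is bounded) together with the $\gamma$-sparse theorem. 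Hence each leaf matroid admits a $c$-competitive non-adaptive mechanism with $c\le 32$.

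Next I would record the standard description of the independent sets of a $k$-sum as a bounded union of direct sums of minors of its parts. For a $2$-sum $M=M_1\oplus_2 M_2$ with connecting element $p$,
\[
\mathcal I(M)=\mathcal I\bigl((M_1/p)\oplus(M_2\setminus p)\bigr)\cup\mathcal I\bigl((M_1\setminus p)\oplus(M_2/p)\bigr),
\]
and there is an analogous identity for $3$-sums involving at most $2^{3}$ delete-or-contract patterns on the three connecting elements. Applying these identities down the whole tree, $\mathcal I(M)$ becomes the union, over a family $\Sigma$ of globally consistent delete-or-contract patterns $\sigma$, of the independence systems of direct sums $\bigoplus_{\ell}N_\ell^{\sigma}$, where $N_\ell^{\sigma}$ is the minor of the basic matroid at leaf $\ell$ obtained by deleting or contracting the connecting elements lying in it; crucially the ground sets of the $N_\ell^{\sigma}$ partition $E(M)$, since connecting elements do not belong to $E(M)$.

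Given this, the mechanism commits to a single good pattern and then decomposes. Because the leaf ground sets partition $E(M)$ and restrictions of $M$-independent sets are independent in the corresponding leaf minors, for every fixed $\sigma$ the optimum of $\bigoplus_\ell N_\ell^{\sigma}$ splits as $\sum_\ell$ of the leaf optima, while $\mathrm{OPT}_M=\mathbb E\bigl[\max_{\sigma}\sum_\ell w\bigl(\mathrm{MIS}(N_\ell^{\sigma})\bigr)\bigr]$, where $\mathrm{MIS}$ denotes the maximum-weight independent set. Orienting each connecting element uniformly at random makes every leaf receive each of its at most $2^{3}$ patterns with probability at least $2^{-3}$, so bounding $\mathbb E\bigl[\max_\rho w(\mathrm{MIS}(N_\ell^{\rho}))\bigr]\le\sum_\rho \mathbb E\bigl[w(\mathrm{MIS}(N_\ell^{\rho}))\bigr]$ and comparing with this uniform average yields a pattern $\sigma^{\star}$ with $\mathrm{OPT}_{\bigoplus_\ell N_\ell^{\sigma^{\star}}}\ge \tfrac18\,\mathrm{OPT}_M$, and $\sigma^{\star}$ can be selected from the known distributions. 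The final mechanism fixes $\sigma^{\star}$ and runs, independently on each leaf, the $c$-competitive non-adaptive mechanism of $N_\ell^{\sigma^{\star}}$; feasibility in $\bigoplus_\ell N_\ell^{\sigma^{\star}}$ is exactly leafwise feasibility, so no cross-leaf conflict occurs, the accepted set is independent in $M$, and its expected value is at least $\tfrac1c\,\mathrm{OPT}_{\bigoplus_\ell N_\ell^{\sigma^{\star}}}\ge\tfrac1{8c}\,\mathrm{OPT}_M\ge\tfrac1{256}\,\mathrm{OPT}_M$.

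The main obstacle is the interaction of the last two steps: $2$-sums are controlled cleanly, but $3$-sums need a careful description of the corresponding independence system, and --- more seriously --- the family $\Sigma$ is exponentially large in the number of sums, so one must show that committing to a single pattern $\sigma^{\star}$ loses only an absolute constant rather than a factor growing with the size or depth of the decomposition. This is exactly where the bound of three connecting elements per sum and the independence of the item values across distinct leaves enter, and it forces the analysis to proceed globally on the decomposition tree instead of one $k$-sum at a time.
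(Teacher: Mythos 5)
Your high-level plan (Seymour's decomposition plus constant-competitive mechanisms on graphic, cographic and $R_{10}$ parts) is the same starting point as the paper, but the gluing step has a fundamental gap. Running, on each leaf, the mechanism designed for the contracted minor $N_\ell^{\sigma^{\star}}$ is not a legal non-adaptive mechanism for $M$: the acceptance rule must test independence with respect to the original matroid $M$, and you may not test independence in $N_\ell^{\sigma^{\star}}$ --- that would be exactly the kind of subconstraint/matroid change the model forbids. Under the laxer $M$-rule the gambler can accept, inside a leaf, items that are dependent in $N_\ell^{\sigma^{\star}}$ (elements that become loops or parallel after the contractions), and these early acceptances can block later items; the guarantee proved for the $N_\ell^{\sigma^{\star}}$-gambler does not transfer, and the two greedy processes are in general incomparable. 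This is precisely the contraction difficulty the paper isolates and resolves differently: Lemma~\ref{lm:representable_matroids} uses $\mathbb{F}_2$-representability to pass to a subset $S\subseteq X_\ell$ (chosen by a random coset/inner-product condition) on which independence in $M\mid_S$ \emph{already implies} independence in $M/\overline{X_\ell}$, at a price $p^{k+1}$; this is combined along the decomposition tree in Theorem~\ref{thm:td_of_representable} (thickness $k=2$, $p=2$, $\alpha=32$, giving $32\cdot 2^{3}=256$), together with the parallel-extension step and Lemma~\ref{lm:apx_of_restriction}. No pattern selection is needed there.

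Second, your pattern-commitment bound is incorrect as stated. A basic part of the Seymour decomposition can contain the connecting elements of many different sums (the decomposition tree can have vertices of large degree; in your expression-tree formulation the connecting elements of later sums may all lie in one basic summand). Hence the number of local patterns at a leaf is exponential in the number of connecting elements carried by that leaf, not at most $2^{3}$, and the uniform-orientation argument only shows the leaf matches the optimum's local pattern with probability exponentially small in that number; the factor $8$, and hence $256$, does not follow. Your closing paragraph concedes this is the main obstacle, but it is exactly the missing step, and it is not clear it can be repaired inside your framework (even the always-feasible ``contract everything'' pattern can lose one element per connecting element within a part, which is not a constant factor), whereas the paper's loss $p^{k+1}$ is governed by the connectivity across a single tree edge rather than by the number of sums a part participates in. The deferred $3$-sum identity is also more delicate than a per-element delete-or-contract choice, but that is a secondary issue compared to the two points above.
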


Subject to the Structural Hypothesis~\ref{hypothesis:tree_decomp} due to Geelen, Gerards and Whittle, which is stated later, we can also derive the following result.

\begin{theorem}\label{thm:minor_closed} Subject to the Structural Hypothesis~\ref{hypothesis:tree_decomp}, for every prime number~$p$ there exists
  a constant-competitive mechanism for every proper minor-closed class of matroids representable over $\mathbb{F}_p$.
\end{theorem}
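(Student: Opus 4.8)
The plan is to carry out, in the generality of the Structural Hypothesis~\ref{hypothesis:tree_decomp}, the same program that yields Theorem~\ref{thm:regularcomp} from Seymour's decomposition. Fix a proper minor-closed class $\mathcal{C}$ of $\mathbb{F}_p$-representable matroids. The hypothesis provides integers $k=k(\mathcal{C})$ and $t=t(\mathcal{C})$ so that every $M\in\mathcal{C}$ has a tree decomposition all of whose displayed separations have order at most $k$ and each of whose parts is, up to a rank-$t$ perturbation, a frame matroid, the dual of a frame matroid, or a matroid on at most $t$ elements. (Taking $p$ prime is what lets us discard the ``representable over a proper subfield'' alternative.) Two ingredients are then needed: (i) a constant-competitive non-adaptive mechanism for each type of basic part, and (ii) a composition lemma that assembles these along a bounded-order tree decomposition with only a constant-factor loss.

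Ingredient (i) is largely a matter of invoking the earlier theorems. A frame matroid representable over $\mathbb{F}_p$ has an $\mathbb{F}_p$-representation in which every column has at most two nonzero entries once the frame is placed in the standard basis, so it is $2$-column sparse and Theorem~\ref{thm:k_column_sparse} applies. A rank-$t$ perturbation of such a matroid is a minor of an $O(t)$-column sparse $\mathbb{F}_p$-matroid: realize the rank-$t$ correction by adjoining $O(t)$ auxiliary elements — expanding any dense auxiliary column into a gadget of $3$-sparse columns — and then contracting them; since $t$ is bounded this again reduces to Theorem~\ref{thm:k_column_sparse} after inflating the constant. Duals of frame matroids, and their perturbations, are handled by the uniform-density argument behind Theorem~\ref{thm:cographic}: the cographic case is precisely the dual of a $2$-column sparse matroid over $\mathbb{F}_2$, and the same analysis as in~\cite{soto2013matroid} covers duals of $O(1)$-column sparse matroids. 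A matroid on $O(1)$ elements is $O(1)$-competitive by the single-item threshold of Theorem~\ref{thm:twocompetitive}, since its optimum is at most $|E|$ times the largest single value. Thus every basic part, and hence every restriction of a basic part that arises once markers are deleted, has an $O_{\mathcal{C},p}(1)$-competitive non-adaptive mechanism.

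Ingredient (ii) is where the work lies. The benchmark decomposes cleanly: if $M$ is the tree sum of parts $M_1,\dots,M_s$ and $M_i^{-}$ denotes $M_i$ with all of its marker-elements deleted, then projecting an optimal independent set of $M$ to the parts gives $\mathrm{OPT}(M,X)\le\sum_i\mathrm{OPT}(M_i^{-},X)$, and the ground sets of the $M_i^{-}$ partition $E(M)$. So the natural mechanism is: run the part mechanisms on the $M_i^{-}$, with all thresholds scaled up by a factor $\beta=2^{O(k)}$, and when an item arrives accept it iff its value clears the scaled threshold \emph{and} the accepted set stays independent in $M$ — the last clause being the only coupling between parts. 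The output is automatically independent in $M$, and its expected weight, computed part-by-part and ignoring the coupling, is at least $\tfrac{1}{\beta c}\sum_i\mathrm{OPT}(M_i^{-},X)\ge\tfrac{1}{\beta c}\,\mathrm{OPT}(M,X)$ when each part mechanism is $c$-competitive. It therefore suffices to show that enforcing global independence discards only a bounded fraction of this weight. Here one uses that a violation of independence in a tree sum is a circuit passing through one or more markers, that a separation of order $\le k$ can carry only boundedly many ``independent'' such circuits, and that with thresholds scaled by $\beta=2^{O(k)}$ each discarded element can be charged to an already-accepted element of comparable or larger weight lying on the same violating circuit; rooting the decomposition tree organizes this charging so that, together with the rank bound $\le k$, no accepted element is charged too many times.

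The main obstacle is precisely this last charging argument: making the loss a constant factor that depends only on $k$ and on the finitely many basic-class constants — as in the passage from the $32$ of Theorem~\ref{thm:graphic32} to the $256$ of Theorem~\ref{thm:regularcomp}, where $k=3$ contributes the factor $2^{3}$ — rather than a loss that compounds with the number of parts. The delicate point is that the charging must be carried out simultaneously over all $O(|E|)$ markers without double-charging any accepted element, which is what forces both the tree orientation and the $2^{O(k)}$ scaling; establishing this is essentially the content of the composition lemma. Once it is in place, Theorem~\ref{thm:minor_closed} follows by summing the per-part guarantees, since each element of $M$ lies in exactly one part.
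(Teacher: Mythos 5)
There is a genuine gap, and it sits exactly where you flag it: your ``ingredient (ii)'' is not a proof but a description of what a proof would have to do. The whole difficulty of this theorem for \emph{non-adaptive} mechanisms is that composing part-mechanisms along a tree decomposition implicitly requires the part-mechanisms to respect contractions (the gambler on a part $X_v$ must only ever build sets that remain independent in $M/\overline{X_v}$), and a non-adaptive mechanism is not allowed the subconstraint ``also stay independent in $M$'' on top of its matroid, nor can one bolt it on and then wave at a charging argument. Your proposed fix --- scale all thresholds by $2^{O(k)}$ and charge each element discarded by the global-independence check to a heavier accepted element on the violating circuit --- is precisely the step you admit is the ``main obstacle,'' and you do not establish it; note also that in a prophet-inequality setting, scaling thresholds has no clean relation to the weights of the elements that end up on a violating circuit, so it is not clear the charging can even be set up. The paper resolves this entirely differently: for $\mathbb{F}_p$-representable $M$ and a part $T$ with $\lambda_M(T)\le k$, Lemma~\ref{lm:representable_matroids} produces (by choosing a random coset/affine slice determined by the at most $p^k$ vectors of $\Span\phi(T)\cap\Span\phi(\overline{T})$ and a random functional) a \emph{restriction} $S\subseteq T$ such that independence in $M\mid_S$ automatically implies independence in $M/\overline{T}$, at the cost of a factor $p^{k+1}$ on $\eproph$; Theorem~\ref{thm:td_of_representable} then composes the parts by induction on the tree with total loss $\alpha p^{k+1}$, no charging and no subconstraints. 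The factor $2^{3}$ in the regular-matroid bound $256=32\cdot 2^{3}$ comes from this $p^{k+1}$ with $p=2$, $k=2$, not from a threshold-scaling argument as you suggest.

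Two secondary problems. First, you replace the paper's Hypothesis~\ref{hypothesis:tree_decomp} by a different statement (parts are, up to rank-$t$ perturbation, frame, co-frame, or bounded-size). The theorem is conditional on the hypothesis as stated: parts either lie within lift/projection distance $t$ of a $2$-column sparse matroid, or have no $M(K_n)$ minor. The second alternative is handled in the paper via Lemma~\ref{lm:u2p2_not_fp}, Geelen's density bound and the $\gamma$-sparse mechanism (Corollary~\ref{cor:geelen}); your version instead requires constant-competitive non-adaptive mechanisms for duals of general $2$-column sparse matroids, which neither the paper nor your proposal provides (the cographic result uses a reduction to three-edge-connected components and parallel classes, not a blanket sparsity claim, and duals of frame matroids need not be $O(1)$-sparse). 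Second, your treatment of the rank-$t$ perturbations --- adjoin $O(t)$ auxiliary columns and then \emph{contract} them --- runs straight back into the contraction problem you have not solved; the paper instead proves Lemma~\ref{lm:apx_of_lift} and Lemma~\ref{lem:sequence_projections} (lifts cost $2\alpha+2$, projections cost $3\alpha$, hence Lemma~\ref{lm:dist_apx} gives $3^t\alpha$ for distance $t$), which is what makes the perturbed frame parts usable. Without the finite-field restriction lemma and the lift/projection lemmas, your outline does not yield the theorem.
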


We also would like to observe that some of the recent results on ``single sample prophet inequalities'' (SSPI) lead to non-adaptive constant-competitive mechanisms. For this, the single sample required by the gambler in SSPI can be directly sampled by our gambler from the available distributions. In particular, the results in~\cite{azar} and~\cite{caramanis} on laminar matroids and truncated partition matroids inspired by the mechanism in~\cite{ma} lead to non-adaptive mechanisms for prophet inequalities. To obtain these results, it is crucial that the mechanism in~\cite{ma} does not involve subconstraints, i.e. each item is accepted as long as the item is not in the ``observation phase'', the item passes its threshold based only on the ``observation phase'' and the item forms an independent set with previously accepted items. In comparison, it is not clear how from the results on regular matroids in~\cite{azar} based on the mechanism in~\cite{dinitz} one can obtain non-adaptive mechanisms.

So the following results can be directly obtained from~\cite{azar} and~\cite{caramanis}, respectively.

\begin{theorem}[\textbf{Laminar Matroid}~\cite{azar}] There exists a $9.6$-competitive non-adaptive mechanism for laminar matroids.
\end{theorem}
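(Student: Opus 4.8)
The plan is to reduce the prophet inequality problem on laminar matroids to the single-sample prophet inequality (SSPI) problem solved in~\cite{azar}, exploiting the fact that in our Bayesian setting the gambler knows each distribution $D_i$ and can therefore generate by itself the single sample that the SSPI algorithm requires as input.

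First I would recall the precise form of the SSPI mechanism of~\cite{azar} for laminar matroids: it receives one independent sample $\tilde X_i \sim D_i$ for each item, uses only these samples to compute a threshold $T_i$ for each item, and then, processing items online in adversarial order, accepts item $i$ if and only if $X_i \ge T_i$ and $i$ together with the previously accepted items forms an independent set of the laminar matroid. The key structural feature --- emphasized already in the discussion preceding the theorem --- is that this mechanism imposes no subconstraints: the only online test beyond the static threshold comparison is the matroid-independence check, which is permitted in the non-adaptive model.

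Next I would describe the reduction itself. In the prophet setting the gambler, before the online phase begins, draws a fresh independent sample $\tilde X_i \sim D_i$ for every $i \in [n]$, feeds the vector $(\tilde X_i)_i$ to the SSPI mechanism to obtain the thresholds $(T_i)_i$, and then runs exactly the online acceptance rule above on the true realizations $(X_i)_i$. Since $\tilde X_i$ and $X_i$ are i.i.d., the joint distribution of (samples, realizations) faced by this simulated mechanism is identical to the one in the SSPI model; hence the expected value collected by our gambler equals the expected value collected by the SSPI mechanism, which by~\cite{azar} is at least $\tfrac{1}{9.6}\,\mathbb{E}[\mathrm{OPT}]$, where $\mathrm{OPT}$ is the maximum-weight independent set of the realized instance --- precisely the prophet's benchmark. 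Finally I would note that the resulting mechanism is non-adaptive in the sense of this paper: the thresholds $T_i$ are fixed before any realization is revealed and are never updated.

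The only point that genuinely requires care is confirming that the mechanism of~\cite{azar} for laminar matroids is of this ``static threshold plus independence check'' type, with no forbidden-set subconstraints and with the maximum-weight independent set as its benchmark; once this is verified the reduction is immediate. (By contrast, as remarked above, the regular-matroid SSPI result of~\cite{azar} rests on the mechanism of~\cite{dinitz}, for which this structure is not apparent, so the same reduction does not transfer there.)
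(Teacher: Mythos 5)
Your reduction is exactly the argument the paper gives (in its discussion preceding the theorem): the gambler simulates the single sample from the known distributions, the mechanism of~\cite{azar} built on~\cite{ma} uses only static thresholds plus the independence check with previously accepted items, and the paper's allowance of randomized threshold construction makes this a legitimate non-adaptive mechanism. Your closing caveat --- verifying that the laminar SSPI mechanism has no subconstraints and benchmarks against the maximum-weight independent set --- is precisely the point the paper itself emphasizes, so the proposal is correct and matches the paper's approach.
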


\begin{theorem}[\textbf{Truncated Partition Matroid}~\cite{caramanis}] There exists an $8$-competitive non-adaptive mechanism for truncated partition matroids.
\end{theorem}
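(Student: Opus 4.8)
The plan is to obtain this result as a black-box consequence of the single-sample prophet inequality (SSPI) of~\cite{caramanis} for truncated partition matroids, observing — exactly as the excerpt already anticipates — that the SSPI mechanism there, modeled on the one in~\cite{ma}, is of a form that translates directly into a non-adaptive prophet inequality mechanism.

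First I would recall the relevant objects. A \emph{truncated partition matroid} has a ground set $E$ partitioned into parts $E_1,\dots,E_m$ with capacities $c_1,\dots,c_m$, together with a uniform truncation to some rank $k$: a set $S$ is independent iff $|S\cap E_j|\le c_j$ for every $j$ and $|S|\le k$. The SSPI mechanism of~\cite{caramanis} takes a single sample, uses it to designate an ``observation phase'' and to fix, for each non-observation item $i$, a threshold $\tau_i$; it then processes the realized values in an adversarial arrival order and accepts item $i$ whenever $X_i\ge\tau_i$ and $\{i\}$ together with the already-accepted set is independent. The essential feature — the one highlighted in the excerpt's discussion of~\cite{ma} — is that there is no additional subconstraint: acceptance is governed only by the fixed thresholds and by matroid independence with the previously accepted items.

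Next I would carry out the (short) reduction. Since the gambler knows the distributions of $X_1,\dots,X_n$, it can draw one independent copy $\widetilde X_i\sim X_i$ of each item and hand the vector $(\widetilde X_i)_i$ to the mechanism of~\cite{caramanis} as its single sample. This fixes the observation-phase set and the thresholds $\tau_i=\tau_i(\widetilde X_1,\dots,\widetilde X_n)$ \emph{before} any $X_i$ is revealed; thereafter the gambler accepts item $i$ iff $i$ is not an observation item, $X_i\ge\tau_i$, and $\{i\}\cup(\text{accepted so far})$ is independent. This is precisely a non-adaptive mechanism in the sense of this paper: the thresholds are never recomputed in light of realized values, and the only online test besides the thresholds is matroid independence with the current accepted set. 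For the guarantee, note that the joint law of $\big((\widetilde X_i)_i,(X_i)_i\big)$ is exactly the one assumed in~\cite{caramanis} — a fresh sample and an independent realization, with worst-case arrival order — so the analysis there applies verbatim and the gambler's expected payoff is at least $\tfrac{1}{8}\,\mathbb{E}\big[\max_{S\in\mathcal I}\sum_{i\in S}X_i\big]$, where $\mathcal I$ is the family of independent sets of the truncated partition matroid. The right-hand side is the prophet's expected value, so the mechanism is $8$-competitive.

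I expect the main obstacle to be purely a matter of bookkeeping the hypotheses of~\cite{caramanis} rather than any new argument: one must confirm that its mechanism imposes no forbidden-set subconstraint (so that it is admissible as ``non-adaptive'' here), that its thresholds are set once from the sample rather than adaptively updated as items arrive, and that its competitive ratio is stated against $\mathbb{E}[\mathrm{OPT}]$ under an adversarial arrival order. All of these hold for the SSPI mechanism of~\cite{caramanis}, so once they are checked, the sampling reduction above completes the proof with no further work.
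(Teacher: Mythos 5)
Your reduction is exactly the paper's argument: the paper proves this theorem only by the observation that the gambler can sample the single sample of the SSPI mechanism of~\cite{caramanis} from the known distributions, and that this mechanism (modeled on~\cite{ma}) uses only fixed thresholds, an observation phase, and independence with previously accepted items --- no subconstraints --- so it is non-adaptive and the $8$-competitive guarantee carries over verbatim. Your checklist of hypotheses to verify matches the paper's stated reasons, so the proposal is correct and essentially identical to the paper's approach.
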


\section{Comparison to known results}

Our results for cographic matroids and  $k$-column sparse matroids are obtained through modifications of the arguments in~\cite{soto2013matroid} and~\cite{chawla2020non}, respectively. The results on regular matroids and minor-closed families of matroids follow the approach outlined in~\cite{tonypeter2020matroid} for the secretary problem. As necessary building blocks we use our results for cographic and $2$-column sparse matroids. Note that a biggest challenge for us is the compatibility of non-adaptive thresholds with contractions. Indeed, standard tools for deriving mechanisms for contraction minors need subconstraints, while subconstraints are not permitted in non-adaptive mechanisms. To obtain our results, we resolve this issue only in the context of matroids representable over finite fields, see arguments in Lemma~\ref{lm:representable_matroids}. It would be interesting to see whether analogous results for contraction minors hold with no assumption about representability over finite fields.

\section{Preliminaries}

In this paper, we consider the matroid prophet inequality problem, where
items arrive online in adversarial order. Here, the adversary knows the distributions of all $X_1$, $X_2$, \ldots, $X_n$ and knows the gambler's mechanism, but the realization of $X_1$, $X_2$, \ldots, $X_n$ is not known to the adversary. Based on the available information, the adversary can decide on the order in which items and their values are observed by the gambler. 

\subsection{Prophet inequality}

\begin{definition} Let $M$ be a matroid  on the ground set $[n]:=\{1,\ldots,n\}$, where $[n]$ corresponds to $n$ items. Let $\vec{X}:= (X_1, \ldots, X_n)$ be non-negative independent random variables representing the values of these $n$ items.
  \begin{itemize}
  \item For every subset of items $S\subseteq [n]$ we define its weight as follows
  \[
  w(S):= \sum_{i \in S} X_i.
  \]
  \item Let $\proph_M$ be the random variable corresponding to the value obtained by the prophet
  \[\proph_M:= \max_{S\in \mathcal I(M)} w(S)\,,\]
  where $\mathcal I (M)$ is a collection of independent sets for $M$.
  \item Let  $\eproph_M$ be the expectation of the value obtained by prophet
    \[\eproph_M:=E[\proph_M]\,.\]
  \end{itemize}
\end{definition}

\begin{definition} Let us be given a number $\alpha>0$.
\begin{itemize}
    \item We call a mechanism \emph{$\alpha$-competitive} (alternatively, we say that the mechanism \emph{guarantees an $\alpha$-approximation}) on the matroid $M$ if the expected value obtained by the gambler via this mechanism is at least $\frac{1}{\alpha}\eproph_M$. 
    
    \item We call a mechanism \emph{$\alpha$-competitive} (alternatively, we say that the mechanism \emph{guarantees an $\alpha$-approximation}) on the matroid class $\mathcal M$ if this mechanism is $\alpha$-competitive for every matroid $M\in \mathcal M$.
\end{itemize}
\end{definition}

\subsection{Non-adaptive mechanism}
We say that a mechanism is \emph{non-adaptive} if it has the following structure:
\begin{itemize}
    \item Given the distributions of $\vec{X} = (X_1, \ldots, X_n)$, the mechanism determines the values of \emph{thresholds} $\vec{T} = (T_1, \ldots, T_n)$, where each $T_i$, $i\in [n]$ is a real number or $+\infty$.
    \item If the value of item $i\in [n]$ is observed, the gambler accepts the item $i$ if and only both conditions hold:
    \begin{enumerate}
        \item the observed value of $X_i$ is at least $T_i$
        \item \label{enum: nonadaptive independence property} the item $i$ together with all previously selected items forms an independent set with respect to the matroid $M$.
    \end{enumerate}
\end{itemize}

Note, that a non-adaptive mechanism does not change thresholds during its course. So, none of the thresholds depends on the realization of~$\vec{X} = (X_1, \ldots, X_n)$.

Another crucial feature of a non-adaptive mechanism is that the mechanism works only with the original matroid~$M$. A non-adaptive mechanism does not allow us to define a new matroid $M'$, such that a set of items is independent in $M'$ only if it is independent in $M$, and modify the condition~\eqref{enum: nonadaptive independence property} based on $M'$.

In this work, we focus on non-adaptive mechanisms. From here
and later we use the term \emph{mechanism} to
refer to non-adaptive mechanisms \textbf{exclusively}.

\begin{remark} In this work, non-adaptive mechanisms are allowed to make non-deterministic decisions. Hence,  we allow a non-adaptive mechanism to construct the thresholds $\vec{T} = (T_1, \ldots, T_n)$ non-deterministically. 

To measure the performance of such a mechanism we use the expected total value, where the expectation is taken not only with respect to $\vec{X} = (X_1, \ldots, X_n)$ but also  with respect to $\vec{T} = (T_1, \ldots, T_n)$.

\end{remark}

\subsection{Matroids}

We provide a review of matroids here. 
Experienced readers should consider skipping or skimming this section. For further results about matroids, consider consulting~\cite{oxley}.

 A \emph{matroid}~$M = (E, \mathcal{S})$ is a pair of a finite \emph{ground set}~$E$ and a collection~$\mathcal{S}\subseteq 2^E$ of \emph{independent} sets.
The collection~$\mathcal{S}\subseteq 2^E$ of subsets of~$E$ satisfies the following conditions:
\begin{itemize}
    \item[(i)] Empty set is an independent set, so $\varnothing \in \mathcal{S}$.
    \item[(ii)] The collection $\mathcal S$ is closed with respect to taking subsets, so for all $A \subseteq B \subseteq E$ if $B$ is in $\mathcal{S}$ then  $A$ is in $\mathcal{S}.$  \item[(iii)] The collection $\mathcal S$ satisfies so called \emph{augmenation property}. In other words, for all $A, B \subseteq E$ such that $A, B \in \mathcal{S}$ and $|A| > |B|$, 
    there exists $c \in A \setminus B$ such that $B \cup \{c\} \in \mathcal{S}$.
\end{itemize}

A subset of $E$ is called \emph{dependent} if it is not in $\mathcal{S}$. The inclusion-maximal independent sets are called \emph{bases} and the inclusion-minimal dependent sets are called \emph{circuits}. For every two bases, their cardinalities are equal: for every bases $A$ and $B$ of $M$ we have $|A|=|B|$. A rank function for the matroid $M$ is a function $r_M: 2^E \to \mathbb{N}$ such that for every $A\subseteq E$ the value $r_M(A)$ equals the cardinality of an inclusion-maximal independent subset of $A$. In the cases when the choice of the matroid is clear from the context, we write $r$ instead of $r_M$. 

Given a matroid $M$, we can define the \emph{dual matroid} $M^*$ over the same ground set $E$. A set $A$ is independent for matroid $M^*$ if and only if $E\setminus A$ contains a basis of $M$. An element $c\in E$ is called a \emph{loop} in $M$ if $r_M(c)=0$. An element $c\in E$ is called a \emph{free element} in $M$ if $r_{M^*}(c)=0$. To put it another way, an element $c$ is free, if and only if for every set $A$, which is independent in $M$, $A\cup \{c\}$ is also independent in $M$. We say that elements $c$ and $d\in E$ are \emph{parallel} in matroid $M$, denoted by $c\parallel d$, if $r_M(c)=r_M(d)=r_M(\{c,d\})=1$. One can show that ``being parallel'' defines an equivalence relation on the non-loop elements of $M$. A matroid is called \emph{simple} if it has no loops and no parallel edges.

Let $M = (E, \mathcal{S})$ be a matroid and $A \subseteq E$. The \emph{contraction} of $M$ by $A$, denoted as $M/A$, is a matroid over ground set $E \setminus A$ with the following independent sets
\[\{S\subseteq E\setminus A\,:\, S\cup A' \in \mathcal {S}\}\,,
\]
where $A'$ is an inclusion-maximal independent subset of $A$.

The restriction of $M$ to $A$, denoted as $M \mid_{A}$ or $M \setminus \overline{A}$, is a matroid over the ground set $A$ where a set $S\subseteq A$ is independent in $M \mid_{A}$ if and only if it is independent in $M$.

A matroid $M'$ is called a \emph{simple version} of $M$ if $M'$ is obtained from $M$ by deleting all loops and contracting every parallel class of elements into a single element.

For matroids $M$, $N$, we say that $N$ is a \emph{minor} of $M = (E, \mathcal{S})$
  if $N$ is isomorphic to $M{/}A{\setminus}B$ for some disjoint sets $A, B\subseteq E$. A matroid class $\mathcal{M}$ is called \emph{minor-closed} if for any $M \in \mathcal{M}$ every minor of $M$  is also in $\mathcal M$.

Let us now list some of the classical examples of matroids, which were extensively studied in the context of various mathematical fields.

\begin{itemize}
\item A \emph{uniform} matroid $M = (E, \mathcal{S})$ of rank $k$ is matroid where \[\mathcal S:=\{A\subseteq E\,:\, |A|\leq k\}\,.\]
When $|E|=n$, we denote the uniform matroid of rank $k$ as $U_{k,n}$.
    \item A \emph{graphic} matroid over graph $G = (V, E)$ is a matroid $M = (E, \mathcal{S})$, where
    \[\mathcal S:=\{A\subseteq E\,:\, A \text{ is acyclic}\}\,.\]
      The graphic matroid over graph $G$ is denoted as $M(G)$. 
    \item A \emph{cographic} matroid  over graph $G = (V, E)$ is a dual matroid  $M = (E, \mathcal{S})$ to the graphic matroid over the same graph $G$. In this case we have
    \[\mathcal S:=\{A\subseteq E\,:\, (V,E\setminus A) \text{ has the same number of components as } (V,E)\}\,.\]
    \item A \emph{vector} matroid $M = (E, \mathcal{S})$ is a matroid such that there is
     a vector space~$V$ and a map $\phi: E\rightarrow V$  satisfying
    \[\mathcal S:=\{A\subseteq E\,:\, \text{multiset }\, \phi(A) \text{ is linearly independent}\}\,.\]
    
    Given a field $\mathbb F$, we say that $M$ is \emph{representable} over field~$\mathbb{F}$ if $M$ is isomorphic
  to the vector matroid where $V$ is a vector space over field~$\mathbb{F}$. 
  
  A matroid is called \emph{regular} if it is representable over every field. A matroid is called \emph{binary} if it is representable over $\mathbb{F}_2$.

  \item A \emph{$k$-column sparse} matroid $M = (E, \mathcal{S})$ is a matroid such that there is
     a  field $\mathbb F$ and dimension $m$ and a map $\phi: E\rightarrow \mathbb{F}^m$  such that 
         \[\mathcal S:=\{A\subseteq E\,:\,\text{multiset }\, \phi(A) \text{ is linearly independent over }\mathbb F\}\,;\]
         and moreover $\phi(c)\in \mathbb{F}^m$ has at most $k$ nonzero coordinates for every $c\in E$.
         
         \item A \emph{$\gamma$-sparse} matroid $M = (E, \mathcal{S})$ is a matroid such that the inequality $|S| \le \gamma r_M(S)$ holds for every $S \subseteq E$.
         
         \item A \emph{laminar} matroid $M = (E, \mathcal{S})$ is a matroid such that there exists a laminar family $\mathcal F$ over the ground set $E$ and there are numbers $c_F\in \mathbb{N}$, $F\in \mathcal F$ such that
         \[\mathcal S:=\{A\subseteq E\,:\,|A\cap F|\leq c_F\text{ for every }F\in \mathcal F\}\,.\]
         Moreover, if $\mathcal F=\{E, E_1,\ldots, E_k\}$, where $E_1$, \ldots, $E_k$ form a partition of the ground set $E$, then $M$ is called a \emph{truncated partition matroid}.
         Recall, that a family $\mathcal F$ is called \emph{laminar} if for every $A$, $B\in \mathcal F$ we have $A\subseteq B$ or $B\subseteq A$ or $A\cap B=\varnothing$.
\end{itemize}

Given a matroid $M = (E, \mathcal{S})$ we can define the corresponding polytope $P_M \subseteq \mathbb{R}^E$ as the convex hull of points corresponding to the characteristic vectors of independent sets.  The polytope $P_M$ is known to admit the following outer description~\cite{schrijver2003combinatorial}.
\begin {align*}  P_M = \{ x \in \mathbb{R}^{E} \,:\,\, &x \geq 0\text{ and }\\
  &x(S) \leq r_M(S)\text{  for every  }S\subseteq E\}\,,
\end{align*}
where $x(S)$ stands for $\sum_{c\in S} x_c$.

For a matroid $M = (E, \mathcal{S})$ and  a set $A\subseteq E$ we can define the \emph{closure} of $A$ as the following set 
\[\cl_M(A):=\{ c \in E \mid r_M(A \cup \{c\}) = r_M(A)\}\,.\]

For a matroid $M = (E, \mathcal{S})$, we call the following function $\sqcap_M \colon E \times E \to \mathbb{Z}$ a \emph{local connectivity} function 
\[\sqcap_M(X, Y) = r(X) + r(Y) - r(X \cup Y)\,.\]
The following function $\lambda_M \colon E \to \mathbb{Z}_{\ge 0}$ is called a \emph{connectivity} function 
\[\lambda_M(X) := \sqcap_M(X, E \setminus X) = r(X) + r(E \setminus X) - r(E)\,.\]

Informally, connectivity functions measure dependence with respect to the matroid between parts of the ground set.
To illustrate it, let us consider the connectivity function for vector matroids. Suppose $M = (E, \mathcal{S})$ is a vector matroid defined by a vector space $V$ and a map $\phi: E\rightarrow V$. 
Then we have  \begin{align*}
\lambda_M(S) = &r(S) + r(E \setminus S) - r(E) =\\
&\dim(\Span \phi(S)) + \dim(\Span \phi(E \setminus S)) - \dim (\phi(E)) =\\
&\dim\left(\left(\Span \phi(S)\right) \cap (\Span \phi(E \setminus S))\right)\,.
\end{align*}

\subsection{Ex-ante relaxation to the matroid polytope}

The goal of ex-ante relaxation \cite{feldman2016online} or \cite{chawla2020non} is to reduce the original problem to the problem where item values are distributed as independent Bernoulli random variables. Note, that both problems are using the same matroid.

In the original problem item values $\vec{X} = (X_1, \ldots, X_n)$ are independent random variables with known distributions. For $i \in [n]$ let $F_i$ be the cumulative distribution function of $X_i$. The reduction of the original problem to a new problem is done using a point $p$ in the matroid polytope $P_M$. Let us first show that there is a point $p\in P_M$ with properties that prove to be desirable later following the argumentation in \cite{chawla2020non}.

\begin{lemma}\label{lm:ex_ante} Given a matroid $M$ over the ground set $[n]$ and
  random variables $\vec{X} = (X_1, \ldots, X_n)$, there exists $p \in P_M$ such that \[\eproph_M \le \sum_{i=1}^n p_i t_i\,,\]
  where $t_i:= E[X_i \mid X_i \ge F_i^{-1}(1-p_i)]$ for every $i\in [n]$\footnote{
Here, we assume that for every $i\in [n]$ the event $X_i = F_i^{-1}(1-p_i)$ happens with the zero probability, which is true for all continuous distributions. In case of discrete distributions one needs to introduce appropriate tie-breaking.}.
\end{lemma}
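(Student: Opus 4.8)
The plan is to exhibit the desired point $p$ as the "ex-ante" marginal probabilities of the prophet's optimal solution. Concretely, let $S^*$ be the (random) independent set achieving $\proph_M = w(S^*)$, chosen by a fixed tie-breaking rule so that $S^*$ is a well-defined function of $\vec X$. For each $i \in [n]$ set $p_i := \Pr[i \in S^*]$. First I would check that $p = (p_1,\dots,p_n)$ lies in $P_M$: for every $S \subseteq E$ we have $\sum_{i \in S} p_i = E[\,|S^* \cap S|\,] \le r_M(S)$, since $|S^* \cap S| \le r_M(S)$ pointwise (an independent set meets $S$ in at most $r_M(S)$ elements), and $p \ge 0$ is immediate; by the outer description of $P_M$ quoted above, $p \in P_M$.

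Next I would bound $\eproph_M$. Write
\[
\eproph_M = E\!\left[\sum_{i \in S^*} X_i\right] = \sum_{i=1}^n E\!\left[X_i \cdot \mathbb{1}[i \in S^*]\right].
\]
The key step is to show, for each fixed $i$, that
\[
E\!\left[X_i \cdot \mathbb{1}[i \in S^*]\right] \le p_i \, t_i = \Pr[i \in S^*]\cdot E[X_i \mid X_i \ge F_i^{-1}(1-p_i)].
\]
This is a statement purely about the single random variable $X_i$ together with the event $A_i := \{i \in S^*\}$, which has probability $p_i$. Among all events $A$ with $\Pr[A] = p_i$, the quantity $E[X_i \mathbb{1}[A]]$ is maximized by taking $A$ to be $\{X_i \ge \tau\}$ for the threshold $\tau = F_i^{-1}(1-p_i)$ (so that $\Pr[X_i \ge \tau] = p_i$, modulo the tie-breaking remarked in the footnote), because this event captures exactly the largest-value part of the distribution of $X_i$. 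Formally this is a standard rearrangement/"bathtub" inequality: for any event $A$,
\[
E[X_i \mathbb{1}[A]] - E[X_i \mathbb{1}[X_i \ge \tau]]
= E\big[(X_i - \tau)(\mathbb{1}[A] - \mathbb{1}[X_i \ge \tau])\big] + \tau\big(\Pr[A] - p_i\big),
\]
and when $\Pr[A] = p_i$ the last term vanishes while the first term is $\le 0$ since $(X_i - \tau)$ and $(\mathbb{1}[A] - \mathbb{1}[X_i \ge \tau])$ are oppositely signed wherever they differ. Summing the per-coordinate bound over $i$ gives $\eproph_M \le \sum_i p_i t_i$, as required.

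The only real subtlety — and the place to be careful rather than the place of genuine difficulty — is measurability and tie-breaking: ensuring $S^*$ is a genuine random variable (fix a deterministic rule for breaking ties among maximum-weight independent sets), and handling atoms of $F_i$ at the point $F_i^{-1}(1-p_i)$ so that an event of probability exactly $p_i$ of the form "$X_i$ large" exists; the footnote already signals that discrete distributions need appropriate tie-breaking, and one splits the atom accordingly. With those conventions in place the argument above goes through verbatim.
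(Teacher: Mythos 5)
Your proposal is correct and follows essentially the same route as the paper: take $p_i=\Pr[i\in I_{opt}]$, note $p\in P_M$, and bound each term $E[X_i\,\mathbb{1}[i\in I_{opt}]]$ by $p_i t_i$ using the fact that the top-$p_i$ tail event maximizes the conditional expectation; your rearrangement argument just spells out the step the paper states informally, and checking $p\in P_M$ via the rank inequalities rather than as a convex combination of independent sets is an equivalent verification.
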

\begin{proof} Let $I_{opt}$ be a random variable indicating an optimal independent set in~$M$ with respect to $\vec{X} = (X_1, \ldots, X_n)$. In case when for some realization of~$\vec{X} = (X_1, \ldots, X_n)$ there are several optimal independent sets, $I_{opt}$ can be selected as any of these sets. For $i\in [n]$, let $p_i$ be the probability that element $i$ is in $I_{opt}$.
Note that $p=(p_1,\ldots, p_m)$  is a convex combination of independent sets of $M$, and so lies in $P_M$.

Due to $\eproph_M=E[\sum_{i \in I_{opt}} X_i]$, it remains to show that
\[E[\sum_{i \in I_{opt}} X_i] \le \sum_{i=1}^n p_i t_i\,.\]
\relax
We have
\[E[\sum_{i \in I_{opt}} X_i] = \sum_{i=1}^n P[i \in I_{opt}] E[X_i \mid i \in I_{opt}] = \sum_{i=1}^n p_i E[X_i \mid i \in I_{opt}]\,.\]
\relax
For every $i\in[n]$ we have that $t_i$ and $E[X_i \mid i \in I_{opt}]$ are expectations of the same random variable $X_i$ but conditioned on the event $X_i \ge F_i^{-1}(1-p_i)$ and on the event $i \in I_{opt}$, respectively. Note, that the probability of both these events equals $p_i$. However, the expectation of $X_i$ conditioned on $X_i \ge F_i^{-1}(1-p_i)$ is the ``largest'' conditional expectation of $X_i$ on an event of probability $p_i$.
Thus, we have $p_i E[X_i \mid i \in I_{opt}] \le p_i t_i$ for every $i\in [n]$ and so we get the desired inequality
\[\sum_{i=1}^n p_i E[X_i \mid i \in I_{opt}] \le \sum_{i=1}^n p_i t_i\,.\]
\end{proof}

Let us show how one can use the point $p=(p_1,\ldots, p_n)$ guaranteed by Lemma~\ref{lm:ex_ante} to reduce the original problem. Let us define independent Bernoulli  random variables $\vec{X}' = (X'_1, \ldots, X'_n)$ as follows, for each $i\in [n]$
\[X'_i = \begin{cases}
t_i & \text{with probability~~} p_i \\
0 & \text{with probability~~}  1 - p_i\,,
\end{cases}\]
where $t_i: = E[X_i \mid X_i \ge F_i^{-1}(1-p_i)]$.

Let us assume that we have a non-adaptive mechanism for the original matroid $M$ and item values $\vec{X}' = (X'_1, \ldots, X'_n)$, which sets nonnegative thresholds $\vec{T}'=(T'_1,\ldots, T'_n)$. By definition of $\vec{X}' = (X'_1, \ldots, X'_n)$, for every $i\in [n]$ the exact value of $T'_i$ is not relevant per se, but it is crucial whether $t_i\geq T'_i$ or $t_i<T'_i$. If for some $i\in [n]$ we have $T'_i> t_i$ then this item $i$ is ``inactive'' and so is never selected by the gambler working with $M$ and  $\vec{X}' = (X'_1, \ldots, X'_n)$.

The key is to construct a non-adaptive mechanism for the original matroid $M$ and item values $\vec{X}' = (X'_1, \ldots, X'_n)$ with positive thresholds $\vec{T}'=(T'_1,\ldots, T'_n)$ such that for each item $i\in [n]$ the probability that $i$ is selected by the gambler is at least $\alpha p_i$. Now we can use such a non-adaptive mechanism for the original matroid $M$ and item values $\vec{X}' = (X'_1, \ldots, X'_n)$   to construct a non-adaptive $\alpha$-competitive mechanism  for the same matroid $M$ and random variables~$\vec{X} = (X_1, \ldots, X_n)$. Let us define the  thresholds $\vec{T}=(T_1,\ldots, T_n)$ as follows, for every $i\in [n]$
\[
T_i:=\begin{cases}
+\infty& \text{if }t_i<T'_i\\
F_i^{-1}(1-p_i)& \text{otherwise}\,.
\end{cases}
\]

To see that the thresholds $\vec{T}=(T_1,\ldots, T_n)$ lead to an  $\alpha$-competitive mechanism for $M$ and~$\vec{X} = (X_1, \ldots, X_n)$, let us couple random variables $X'_i$ with random variables $X_i$ as follows
\[
X'_i:=\begin{cases}
t_i& \text{if }X_i\geq F_i^{-1}(1-p_i)\\
0& \text{otherwise}\,.
\end{cases}
\]
Note that $\vec{X}' = (X'_1, \ldots, X'_n)$ are independent Bernoulli random variables, where for each $i\in [n]$ the variable $X'_i$ equals $t_i$ with probability $p_i$ and equals $0$ with probability $1-p_i$. When $\vec{X}'$ are coupled with $\vec{X}$ this way, $X_i$ and $X'_i$ have the same expected value when conditioned
on $X'_i$ being $t_i$. The mechanism with thresholds $\vec{T}$ selects an item $i\in [n]$ when run for $\vec{X}$ only if the mechanism with thresholds $\vec{T}'$ selects the item $i$ when run for $\vec{X}'$. Moreover, for both of these algorithms, conditionally on the event that the item $i$ is selected the expected value of $i$ equals $t_i$. Now, $\alpha$-competitiveness guarantee of the thresholds $\vec{T}$ for $M$ and $\vec{X}$ follows from Lemma~\ref{lm:ex_ante}.

\section{Graphic and $k$-column sparse matroids}

First, we construct a $16$-competitive non-adaptive mechanism for graphic matroids without parallel edges. Our construction is done through the ex-ante relaxation to the matroid polytope, following the works in \cite{feldman2016online} or \cite{chawla2020non}. Later, we present a constant-competitive non-adaptive mechanism for $k$-column sparse matroids whenever $k$ is constant.

\subsection{Graphic matroids}
\label{sec:graphic_matroid}

Now  we are ready to provide a $16$-competitive non-adaptive mechanism for graphic matroid. The provided mechanism is essentially the one constructed in~\cite{chawla2020non} but with saving a factor of $2$ in the guarantee, which is achieved by rescaling the point from the matroid polytope by $2$ and not by $4$.

Let us be given a simple graph $G = (V, E)$ and let us consider the corresponding graphic matroid $M$ over the ground set $E$. Recall that a subset of~$E$ is independent with respect to $M$ if and only if it is acyclic in $G$. Let us also assume that the graph $G$ has $n$ edges and so $E = \{e_1, e_2, \ldots, e_n\}$.

\begin{lemma}\label{lm:orient} Let $p=(p_1,\ldots, p_n)$ be a point in the polytope $P_M$. Thus we assume that for every $i\in [n]$ the coordinate $p_i$ of $p$ corresponds to the edge $e_i$.
Then there exists an orientation of edges $E = \{e_1, e_2, \ldots, e_n\}$ in the graph $G = (V, E)$ such that for every vertex $v\in V$ we have $\sum_{i\in [n]:e_i \in \delta^-(v)}p_i\leq 2$.
\end{lemma}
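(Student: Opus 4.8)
The statement is that for a point $p\in P_M$ in the graphic matroid polytope, we can orient the edges so that every vertex has in-weight at most $2$.  The natural approach is via a flow/Hall-type argument.  The key is the matroid polytope inequality: for any vertex set $W\subseteq V$, the edges with both endpoints in $W$ form a subgraph whose edge set $E[W]$ spans a subset of rank at most $|W|-c(W)\le |W|-1$ (where $c(W)$ is the number of connected components of $G[W]$), so $p(E[W])\le |W|-1\le |W|$.  More useful is the version with a factor of $2$: I want to bound $p(E[W])$ against $2|W|$ with room to spare, and then realize an orientation as an integral point of a polymatroid intersection / transversal-type polytope.

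First I would set up the orientation as a bipartite-style assignment: think of each edge $e_i$ as a ``job'' that must be assigned to one of its two endpoints, and each vertex $v$ as a ``machine'' with capacity $2$ measured in $p$-weight.  The fractional assignment ``send $p_i/2$ units of $e_i$ to each endpoint'' is feasible in the natural relaxation, but that does not immediately give an integral orientation.  Instead, the right tool is a classical result on orientations: a graph $G$ has an orientation with in-degree of each $v$ at most $b(v)$ if and only if for every $W\subseteq V$, $|E[W]|\le \sum_{v\in W} b(v)$ (this is Hakimi's theorem, proved via Hall's theorem or max-flow).  I would use a weighted / fractional analogue: since the $p_i$'s need not be integral, I would first argue it suffices to prove the claim for rational $p$ (by compactness, or because $P_M$ has rational vertices and the claim is closed), then scale by a common denominator $N$ so that $N p_i\in\mathbb Z$, and replace each edge $e_i$ by $Np_i$ parallel copies.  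In the resulting multigraph $G'$, the matroid-polytope bound gives $|E'[W]| = \sum_{e_i\subseteq W} N p_i = N\,p(E[W]) \le N(|W|-1) < 2N|W| = \sum_{v\in W} 2N$ for every nonempty $W$, and trivially $|E'[\emptyset]|=0$.  Hence by Hakimi's orientation theorem (with capacities $b(v)=2N$) there is an orientation of $G'$ with every in-degree at most $2N$.

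Finally I would push this orientation back down to $G$: the copies of a single edge $e_i$ might get oriented inconsistently in $G'$, so I would pick for $e_i$ the direction taken by the majority of its copies (breaking ties arbitrarily) — at least $\tfrac{1}{2}Np_i$ copies point that way.  Actually the cleaner route avoids majority rounding: apply Hakimi's theorem directly to $G$ but with the fractional/weighted in-degree constraint, i.e.\ use the integral orientation polytope.  The polytope of fractional orientations $\{y\ge 0 : y_{e,u}+y_{e,v}=1 \text{ for }e=uv\}$ intersected with $\{\sum_{e\ni v} p_e\,[\text{$e$ oriented into }v]\le 2\}$ — one shows the all-vertices constraint system is described by a network-flow matrix, hence the polytope is integral whenever it is nonempty, and nonemptiness follows from the fractional point $y_{e,u}=y_{e,v}=1/2$, which has in-weight $\tfrac12 p(\delta(v))\le \tfrac12\cdot 2\cdot$ hmm — that bound is not automatic either.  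So the multigraph route is the honest one, and the right statement to feed Hakimi's theorem is: for $W\ne\emptyset$, the $p$-weight of edges inside $W$ is at most $r_M(E[W])\le |W|-1< 2|W|$.

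The main obstacle is the rounding step — going from a fractional/weighted feasibility condition (which the matroid polytope inequality gives us almost for free, with slack factor $2$) to an honest integral orientation of the original edge set.  The slack factor of $2$ is exactly what absorbs the loss in replacing each $e_i$ by $Np_i$ parallel copies and then taking a majority direction: we need $\tfrac12 p(E[W])\le |W|$, which holds because $p(E[W])\le r_M(E[W])\le |W|-1$.  Everything else — reduction to rational $p$, the scaling argument, and the invocation of Hakimi's orientation theorem (equivalently a Hall / Gale–Ryser / max-flow–min-cut argument on the bipartite edge–vertex incidence structure) — is routine once this slack is in hand.
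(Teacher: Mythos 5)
Your route is genuinely different from the paper's. The paper proves the lemma by a short induction on $|V|$: since $p$ is a convex combination of indicator vectors of forests, the average over $v\in V$ of the weighted degree $\sum_{i:e_i\in\delta(v)}p_i$ is at most $2$, so some vertex $v$ has weighted degree at most $2$; orient every edge at $v$ into $v$, delete $v$, and recurse. Your argument instead goes through rational approximation, scaling by a common denominator $N$, Hakimi's orientation theorem applied to the multigraph of parallel copies, and a majority-rounding step, with the matroid-polytope bound $p(E[W])\le r_M(E[W])\le |W|-1$ as the driving inequality. That is a workable alternative, and it makes explicit the Hall-type feasibility condition underlying the statement, but it is considerably longer and needs the rationality/compactness detour that the paper's peeling argument avoids entirely.

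However, as written your constants do not close. You invoke Hakimi with capacities $b(v)=2N$, obtaining in-degrees at most $2N$ in the multigraph, and then orient $e_i$ by the majority of its $Np_i$ copies; this only yields $\frac{N}{2}\sum_{i:e_i\in\delta^-(v)}p_i\le 2N$, i.e. a final bound of $4$, not $2$. The fix is the one your closing remark gestures at but mis-quantifies: to absorb the factor-$2$ rounding loss you must run Hakimi with capacities $b(v)=N$, whose feasibility condition is $|E'[W]|=Np(E[W])\le N|W|$, i.e. $p(E[W])\le |W|$ --- not ``$\tfrac12 p(E[W])\le |W|$'', which is the (weaker) condition corresponding to capacity $2N$ and does not compensate for the rounding. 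Since you do verify the stronger inequality $p(E[W])\le |W|-1$, the corrected application with $b\equiv N$ is legitimate, and the majority rounding then gives the claimed bound of $2$; so the approach is sound, but the bookkeeping as submitted only proves the lemma with $2$ replaced by $4$ until this substitution is made.
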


\begin{proof}
Observe that the average degree of a vertex in a forest on $|V|$ vertices is at most $(2|V| - 2)/|V| = 2 - 1/|V| \le 2$.

Let us use this fact to prove the desired statement by induction on the number of vertices in the graph $G$.

If the graph $G$ has at most two vertices then the orientation is trivial. Otherwise, since $p$ is a convex combination of points corresponding to forests in $G$, we have that the average of the value $\sum_{i\in [n]:e_i \in \delta(v)}p_i$ over all vertices $v\in V$ is at most $2$. Thus there exists a vertex $v\in V$ such that we have
$\sum_{i\in [n]:e_i \in \delta(v)}p_i\leq 2$. We orient all edges incident to $v$ as edges in $\delta^{-}(v)$, so these edges are incoming with respect to $v$. Then we remove the vertex $v$ and all edges incident to it and orient the remaining edges according to the orientation guaranteed by the inductive hypothesis.
\end{proof}

Now we present an algorithm for graphic matroids of simple graphs. 
\begin{algorithm}
\begin{algorithmic}[1]
\caption{A non-adaptive 16-competitive mechanisms for graphic matroids of a simple graph}\label{alg:graphic}
\State Let $p$ be a point in the polytope $P_M$ so that the statement of Lemma~\ref{lm:ex_ante} is satisfied.

\State Let the edges of the original graph $G=(V,E)$  be oriented so that the statement of Lemma~\ref{lm:orient} is satisfied.

\State For every edge $e_i \in E$, $i\in [n]$, mark the edge $e_i$ as ``discarded" independently at random with probability $1/2$. 

\State Select a cut $S\subseteq V$ uniformly at random, mark all edges not in $[S;\overline{S}]$ as ``discarded". Here, $[S;\overline{S}]$ stands for the set of edges which are oriented such that their tail is in $S$ and their head is in $\overline S$.

\State Set thresholds $\vec{T}=(T_1,\ldots, T_n)$ as follows, for each $i\in [n]$
\[
T_i:=\begin{cases}
+\infty& \text{if }e_i\text{  is ``discarded''}\\
F_i^{-1}(1-p_i)& \text{otherwise}\,.
\end{cases}
\]

\end{algorithmic}
\end{algorithm}

\begin{lemma}\label{lm:good_chance} For every $i\in[n]$, we have 
\[
P[e_i \text{ is selected}\mid X_i\geq T_i \text{  and  } e_i  \text{ is not ``discarded''}]\geq 1/2\,.
\]
\end{lemma}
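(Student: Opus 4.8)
The plan is to condition on the realization of $X_i \ge T_i$ and on $e_i$ not being discarded, and then show that with probability at least $1/2$ the edge $e_i$ is actually accepted, i.e. that at the moment $e_i$ is observed it forms an independent (acyclic) set together with the previously accepted edges. The key structural observation is that after Steps 3 and 4 of Algorithm~\ref{alg:graphic}, every non-discarded edge $e_j$ that survives lies in $[S;\overline S]$, hence has its tail in $S$ and its head in $\overline S$. So the set of all non-discarded edges is a subset of a single directed cut, and in particular it is a bipartite graph between $S$ and $\overline S$ with all edges oriented from $S$ to $\overline S$.

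First I would fix $e_i$ with tail $u \in S$ and head $v \in \overline S$, and ask: under what circumstances can adding $e_i$ to the currently-accepted set create a cycle? Any cycle through $e_i$ in the graph $G$ restricted to non-discarded edges must use another edge incident to the head $v$ (to leave $v$) — and since all surviving edges point from $S$ into $\overline S$, every other surviving edge incident to $v$ is also an edge of $\delta^-(v)$, i.e. incoming to $v$. The crucial point is that the mechanism only ever accepts edges that pass their thresholds and maintain acyclicity, and among the edges incident to $v$, the total $p$-mass of edges in $\delta^-(v)$ is at most $2$ by Lemma~\ref{lm:orient}. So the plan is: a cycle through $e_i$ forces at least one previously accepted edge to be incident to $v$ and incoming to $v$; then a union bound over the (other) edges in $\delta^-(v)$ of the probability each such edge was accepted — which is at most its probability of passing its own threshold, namely $p_j$ after the discarding, but we should be careful to use the right quantity — bounds the failure probability by $\tfrac12 \sum_{j \ne i, e_j \in \delta^-(v)} p_j \le \tfrac12 \cdot 2 = 1$. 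That is too weak by itself, so the factor $1/2$ from Step~3 must be used: conditioning on $e_i$ surviving does not change whether the \emph{other} edges survive, each other edge in $\delta^-(v)$ survives with probability $1/2$ independently, so the expected $p$-mass actually present near $v$ is at most $\tfrac12 \cdot 2 = 1$, and Markov (or a direct union bound) gives that with probability $\ge 1/2$ there is \emph{no} surviving edge incident to $v$ other than $e_i$, whence $e_i$ cannot close a cycle and is accepted.

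Let me restate the clean version: condition on $X_i \ge T_i$ and $e_i$ not discarded. For $e_i$ to fail to be selected, it must close a cycle with earlier accepted edges; any such cycle needs a second edge at its head $v$, and that edge (being surviving and in the directed cut) lies in $\delta^-(v) \setminus \{e_i\}$ and must itself have been non-discarded. Now
\[
P[\exists\, e_j \in \delta^-(v)\setminus\{e_i\} \text{ surviving} \mid X_i \ge T_i, e_i \text{ not discarded}]
\le \sum_{e_j \in \delta^-(v)\setminus\{e_i\}} P[e_j \text{ not discarded}] \cdot \mathbb 1[\cdots] .
\]
Here surviving for $e_j$ means $e_j$ is not discarded in Step~3 (probability $1/2$, independent of $e_i$) \emph{and} $e_j \in [S;\overline S]$; but once we have conditioned on the cut $S$ via the event that $e_i \in [S;\overline S]$, the contribution of $e_j$ to the threshold structure is governed by $p_j$. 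The honest bookkeeping is to bound the failure probability by $\sum_{e_j \in \delta^-(v), j \ne i} \tfrac12 p_j \le \tfrac12(2 - p_i) \le 1$ — again not quite enough — so the actual argument must combine the Step~3 factor $1/2$ with the fact that $p_i > 0$ is itself subtracted, OR, more likely, the intended argument is that we need no edge of $\delta^-(v)$ other than $e_i$ to be simultaneously non-discarded \emph{and} to have passed its threshold and been accepted, and $P[e_j \text{ accepted}] \le p_j$, giving failure probability $\le \tfrac12\sum_{j\ne i} p_j \le 1$; to get $1/2$ one observes the events "$e_j$ accepted" are for distinct $e_j$ mutually exclusive with... — I would need to check which of these refinements the authors intend.

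The main obstacle, and the place to be most careful, is exactly this last probabilistic bookkeeping: getting the constant down to $1/2$ rather than just $1$. The structural fact (surviving edges form a sub-cut, so cycles can only be closed through the head and only via $\delta^-(v)$ edges) is clean and is the heart of the matter; the combinatorial budget $\sum_{\delta^-(v)} p_i \le 2$ from Lemma~\ref{lm:orient} is the other ingredient. The delicate step is arguing that conditioning on $e_i$'s survival leaves the other edges' discard bits independent and that the relevant union bound over $\delta^-(v)\setminus\{e_i\}$ of the acceptance probabilities — each at most $p_j$, further halved by Step~3's independent coin — sums to at most $1/2$. I expect the proof to invoke that for the gambler running on $\vec X'$ the item $e_j$ is accepted only if $X'_j = t_j$, an event of probability $p_j$, and that it suffices to preclude any neighbor of $v$ in $\delta^-(v)$ from being simultaneously surviving (prob.\ $1/2$) and value-active (prob.\ $p_j$), so a union bound yields $\tfrac12\sum_{e_j\in\delta^-(v)}p_j \le \tfrac12\cdot 2 = 1$; closing the remaining gap to $1/2$ is the one spot where I would slow down and follow the authors' precise accounting rather than guess.
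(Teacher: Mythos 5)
Your structural analysis matches the paper exactly: the surviving edges form a sub-cut, so a cycle through $e_i$ would require a previously accepted edge at the head $v$, all surviving edges at $v$ lie in $\delta^-(v)$, and a union bound over $\delta^-(v)\setminus\{e_i\}$ combined with the budget $\sum_{e_j\in\delta^-(v)}p_j\le 2$ from Lemma~\ref{lm:orient} is the intended argument. However, you never close the quantitative gap from $1$ to $1/2$, and you explicitly say you would have to consult the authors' accounting; none of the refinements you float (mutual exclusivity of acceptance events, subtracting $p_i$) is the right mechanism. The missing idea is the conditional probability of surviving Step~4 of Algorithm~\ref{alg:graphic}. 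Let $J$ be the event that $e_i$ survives the cut, i.e.\ its tail is in $S$ and its head $v$ is in $\overline S$. For another edge $e_j\in\delta^-(v)$, its head $v$ is already forced into $\overline S$ by $J$, but its tail is a \emph{different} vertex (here simplicity of $G$ is essential: $e_j$ and $e_i$ share the head $v$, so a shared tail would make them parallel), and that vertex lands in $S$ with probability exactly $1/2$ independently of $J$. Hence $P[e_j\text{ not discarded}\mid J]=\tfrac12\cdot\tfrac12=\tfrac14$ (Step~3 coin times the cut), not the $\tfrac12$ you used, and each union-bound term is $p_j/4$ rather than $p_j/2$. This gives $P[\text{some other }e_j\in\delta^-(v)\text{ is eligible}\mid J]\le\tfrac14\sum_{j\ne i,\,e_j\in\delta^-(v)}p_j\le\tfrac12$, which is precisely the claimed bound. (Independence of the event $X_j\ge T_j$, which has probability $p_j$, from the discard coins conditioned on $J$ is also used, but that part you had.)

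So the gap is concrete: you halved only via the Step~3 coin and treated the cut-membership of the other edges as already paid for by conditioning on $J$, whereas in fact $J$ fixes only the head side of those edges, and their tails contribute an independent factor $1/2$ each. This is also exactly where the simple-graph hypothesis of Theorem~\ref{thm:graphic16} enters (and why the general graphic-matroid bound of~\cite{chawla2020non} scales the polytope point by $4$ instead of $2$): with parallel edges, an $e_j$ parallel to $e_i$ would survive the cut with conditional probability $1$, and the $p_j/4$ accounting would break down.
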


\begin{proof}
Let us assume that the vertex $v$ is the head of the  oriented edge $e_i$. Let us also assume that $e_i$ is not marked as ``discarded'' and $X_i\geq T_i$.

Since the edge $e_i$ is not ``discarded'',  the edge $e_i$ is in the selected set $[S; \overline{S}]$. Hence, every not ``discarded'' edge incident to $v$ has the vertex $v$ as its head.

Thus, as long as no other edge with the head at the vertex $v$ is selected by the gambler, the gambler has to select $e_i$. We claim, that with probability at least $1/2$ no other edge with the head at $v$ was selected by the gambler. 

Let $I$ be the event indicating that "the gambler selected an edge $e_j$, $j\neq i$ such that $v$ is the head of $e_j$", in other words  ``there is $j\in [n]$, $j\neq i$ such that $v$ is the head of $e_j$ and $X_j\geq T_j$ and $e_j$ is not ``discarded''". Let $J$ indicate the event that "$e_i$ is not marked as ``discarded'' after the selection of the cut", in other words, "the head of $e_i$ is in $\overline S$ and  the tail of $e_i$ is in $S$". 

Let us show 
\[P[I\mid J]\leq 1/2\,.\]
By the union bound, we have
\[P[I \mid J] \le \sum_{j\in [n]\setminus\{i\}: e_j \in \delta^-(v)} P[X_j\geq T_j\text{  and
$e_{j}$ is not ``discarded''} \mid J]\]
Note that for each edge $e_j \in \delta^-(v)$ we have $P[X_j\geq T_j|J]=p_j$ and we also have $P[\text{$e_{j}$ is not ``discarded''} |J]=1/4$. Note that any edge  is not ``discarded'' in Step~3 of Algorithm~\ref{alg:graphic} with probability $1/2$, and not ``discarded'' in Step~4 of Algorithm~\ref{alg:graphic} with probability $1/4$. However, since the probabilities are with respect to the edge $e_j \in \delta^-(v)$ and are  counted conditioned on the event $J$, the conditioned probability of not being ``discarded''  in Step~4 of Algorithm~\ref{alg:graphic} is $1/2$.
 Moreover, even conditioned on $J$ the events  "$X_j\geq T_j$" and "$e_{j}$ is not ``discarded''" are independent events. Thus we have
\begin{align*} \sum_{j\in [n]\setminus\{i\}: e_j \in \delta^-(v)} P[X_j\geq T_j\text{  and
$e_{j}$ is not ``discarded''} \mid J]\leq \\ \sum_{j\in [n]\setminus\{i\}: e_j \in \delta^-(v)}p_j/4\le 1/2\,,
\end{align*}
where the last inequality follows from the orientation.
\end{proof}
  
We are ready to prove Theorem~\ref{thm:graphic16} by showing that  Algorithm~\ref{alg:graphic} is a $16$-competitive for graphic matroids without parallel edges.

\begin{proof}[Proof of Theorem~\ref{thm:graphic16}]

By Lemma~\ref{lm:good_chance} for every $i\in [n]$ the probability of edge $e_i$ being accepted conditional on $X_i\geq T_i$ and being not ``discarded'' is at least $1/2$.

Overall, the probability of edge $e_i$ being accepted is at
least $\frac{1}{16} p_i$. Thus mechanism guarantees at least
$\sum_{i=1}^n \frac{1}{16} p_i t_i$ of the expected total value. By Lemma~\ref{lm:ex_ante}, we have $\sum_{i\in[n]} \frac{1}{16} p_i t_i \ge \frac{1}{16} \eproph_M$, finishing the proof.
\end{proof}

\subsection{$k$-column sparse matroids}

There are known constant-competitive mechanisms
for $k$-column sparse matroids in the context of the secretary problem~\cite{soto2013matroid}. However they do not immediately lead to a non-adaptive mechanism of constant competitiveness guarantee. The reason for that are not the updated thresholds but implicit changes to the considered matroid.

Here, we present a constant competitive mechanism for $k$-column sparse matroid class for each constant $k$. Note, graphic matroids form a subclass of  $2$-column sparse matroids. Because of their significance,  $2$-column sparse matroids
are also known in literature as \emph{represented frame matroids}. Later,
we use $2$-column sparse matroids to prove results in Section~\ref{sec:minor_closed}.

Suppose $M$ is a $k$-column sparse matroid over
  field $\mathbb{F}$. In this section, we prove that there exists a $(2^{k+2} k)$-competitive mechanism for $M$.

 Suppose a $k$-sparse representation of $M=(E,\mathcal S)$ is defined by a map $\phi:E\rightarrow\mathbb{F}^d$. Note, if for some element $t\in E$ the vector $\phi(t)$ is a zero vector then $c$ is a loop and therefore can be removed from consideration.

  Now we consider an undirected hyper-multigraph $G$ with vertex
  set $[d]$. Each matroid element $t \in E$ induces
  a hyperedge $e_t$ in this graph between non-zero coordinates of $\phi(t)$.
  Formally, the hyperedge $e_t$ is defined as follows $e_t := \{ i \in [d] \,:\, \phi(t)_i \ne 0\}$.
  We say that  a vertex $i\in [d]$ of the hyper-multigraph $G$ is incident to every edge $e$ of $G$ such that $i\in e$. For a vertex $i\in[d]$ we denote the collection of  incident hyperedges by $\delta(i)$. The degree of a vertex $i$ in the hyper-multigraph $G$ equals $|\delta(i)|$.     
  
    \begin{claim}\label{claim:avg2} Suppose $I$ is an independent set of the matroid $M$. Then the average degree of a vertex is at most $k$ when one considers the hyper-multigraph with vertices $[d]$ and hyperedges $\{e_t \,:\, t \in I\}$.
  \end{claim}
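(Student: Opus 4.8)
The plan is to double-count vertex degrees in the hyper-multigraph with vertex set $[d]$ and hyperedge set $\{e_t \,:\, t \in I\}$. First I would invoke the handshake identity for hyper-multigraphs, $\sum_{i \in [d]} |\delta(i)| = \sum_{t \in I} |e_t|$, which holds because a vertex $i$ contributes to the left-hand side once for each hyperedge incident to it. Then the $k$-column sparsity of the representation $\phi : E \to \mathbb{F}^d$ gives $|e_t| = |\{ i \in [d] \,:\, \phi(t)_i \neq 0 \}| \le k$ for every $t \in E$, and hence $\sum_{i \in [d]} |\delta(i)| = \sum_{t \in I} |e_t| \le k|I|$.

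The only role of independence is to bound $|I|$: since $I \in \mathcal{S}$, the multiset $\phi(I)$ is linearly independent in $\mathbb{F}^d$, so $\phi$ is injective on $I$ (a repeated vector would already make the multiset dependent) and $|I| = |\phi(I)| \le \dim \mathbb{F}^d = d$. Combining the two bounds, $\sum_{i \in [d]} |\delta(i)| \le k|I| \le kd$, so the average degree $\frac{1}{d}\sum_{i \in [d]} |\delta(i)|$ over the $d$ vertices is at most $k$, which is the claim.

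There is no genuine obstacle here; the only point worth a moment's care is that independence of $I$ is used through linear independence of the \emph{multiset} $\phi(I)$, which rules out repeated columns and therefore yields the cardinality bound $|I| \le d$ (and not merely $r_M(I) \le d$) that the averaging requires.
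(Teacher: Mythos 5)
Your proof is correct and follows essentially the same route as the paper: bound $|I|\le d$ via linear independence of $\phi(I)$, note each hyperedge has at most $k$ incident vertices by $k$-column sparsity, and conclude the total degree is at most $kd$, hence average degree at most $k$. The explicit handshake identity and the remark about injectivity of $\phi$ on $I$ are just a more detailed write-up of the paper's two-line argument.
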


  \begin{proof} Observe that $|I| \le d$ because having more than
    $d$ vectors in $d$-dimensional vector space $\mathbb{F}^d$ leads to a a linear dependency.

    Since $M$ is $k$-column sparse, we have that every edge in $\{e_t \,:\, t \in I\}$ is incident to at most $k$ vertices in $[d]$. Hence, the total degree is at most $k d$ and thus the average degree of a vertex is at most $k$.
  \end{proof}
  
  Now we consider \emph{orientations} of the graph $G$.
  An orientation of the graph $G$ is a function $\vphi$ which maps every edge $e_t$ into one vertex of $G$ incident to $e_t$.
  We call $\vphi(e_t)$ to be the head of the edge $e_t$, and all other vertices, if any,
  to be tails. For every vertex $i\in [d]$ we denote the set of incoming edges by
  $\delta^{-}(i)$, formally $\delta^{-}(i) = \{ e_t \,:\, \vphi(e_t) = i,\, t\in E \}$.

  \begin{lemma}\label{lm:orient2} Let $p$ be a point in the polytope $P_M$. We assume that for every $t\in E$, the coordinate $p_t$ of $p$ corresponds to the element $t$.
Then there exists an orientation $\vphi$ of hyperedges in the hyper-mulrigraph $G$  such that for every vertex $i\in [d]$ we have $\sum_{t\in E : e_t \in \delta^{-}(i)} p_t \le k$.
  \end{lemma}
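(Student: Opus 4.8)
The plan is to follow the proof of Lemma~\ref{lm:orient} almost verbatim, replacing the forest average-degree bound of $2$ by the bound of $k$ furnished by Claim~\ref{claim:avg2}, and running the induction on the number of vertices $d$ of the hyper-multigraph $G$ instead of on the number of vertices of a graph. Concretely, I would argue by induction on $d$: if $d=0$ there is nothing to orient (loops of $M$, i.e.\ elements $t$ with $\phi(t)=0$, were removed at the outset, so $G$ has no hyperedges), and if $d\ge 1$ I would first locate a single vertex of small weighted degree, orient all hyperedges at that vertex toward it, delete the vertex, and recurse on the smaller hyper-multigraph.

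To find such a vertex, recall that $p\in P_M$ is a convex combination of characteristic vectors of independent sets, say $p=\sum_j\lambda_j\,\chi_{I_j}$ with $\lambda_j\ge 0$, $\sum_j\lambda_j=1$ and each $I_j\in\mathcal S$. Writing $w(i):=\sum_{t\in E:\,i\in e_t}p_t$ for the weighted degree of a vertex $i\in[d]$ and double-counting vertex-edge incidences, I get $\sum_{i\in[d]}w(i)=\sum_{t\in E}p_t|e_t|=\sum_j\lambda_j\sum_{t\in I_j}|e_t|$. For each $j$, the quantity $\sum_{t\in I_j}|e_t|$ is the total degree of the hyper-multigraph on $[d]$ with edge set $\{e_t:t\in I_j\}$, which by Claim~\ref{claim:avg2} is at most $kd$, since that claim bounds the average degree by $k$ and there are $d$ vertices. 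Hence $\sum_{i\in[d]}w(i)\le kd$, so there exists a vertex $i^\star\in[d]$ with $w(i^\star)\le k$.

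For the recursive step I would set $\vphi(e_t):=i^\star$ for every $t$ with $i^\star\in e_t$, and then delete $i^\star$ together with all hyperedges incident to it. This deletion corresponds to restricting the matroid: with $E':=\{t\in E:\,i^\star\notin e_t\}$ and $M':=M\mid_{E'}$, every $t\in E'$ has $\phi(t)_{i^\star}=0$, so dropping coordinate $i^\star$ yields a $k$-column-sparse representation of $M'$ in dimension $d-1$ whose hyper-multigraph is exactly $G$ with $i^\star$ and $\delta(i^\star)$ removed; moreover $p$ restricted to $E'$ lies in $P_{M'}$, since the inequalities $x\ge 0$ and $x(S)\le r_M(S)$ for $S\subseteq E'$ that cut out $P_{M'}$ are among those satisfied by $p$. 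By the induction hypothesis there is an orientation $\vphi'$ of the hyper-multigraph of $M'$ with $\sum_{t\in E':\,e_t\in\delta^-(i)}p_t\le k$ for every $i\in[d]\setminus\{i^\star\}$. Gluing $\vphi'$ with the already-chosen values on the edges incident to $i^\star$, I would check that the incoming edge set of every vertex $i\neq i^\star$ is unchanged (every edge not incident to $i^\star$ lies in $E'$ and keeps its $\vphi'$-head, and no edge incident to $i^\star$ has head $i$), while the incoming edge set of $i^\star$ is all of $\delta(i^\star)$, whose weight is $w(i^\star)\le k$; this completes the induction.

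I do not expect a serious obstacle here. The one place that needs care is the bookkeeping in the recursion: one must verify that deleting a vertex of $G$ really corresponds to restricting $M$ to $E'$, that $M\mid_{E'}$ is again $k$-column sparse with the obvious reduced representation, and that the restriction of $p$ still lies in the matroid polytope of $M\mid_{E'}$ — only then is Claim~\ref{claim:avg2}, and hence the averaging step, available again at the next level of the induction.
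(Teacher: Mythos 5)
Your proposal is correct and follows essentially the same route as the paper: the paper gives no separate proof of Lemma~\ref{lm:orient2}, stating only that it is analogous to Lemma~\ref{lm:orient}, and your argument is exactly that analogue --- averaging the weighted degrees via the convex-combination representation of $p$ and Claim~\ref{claim:avg2} to find a vertex $i^\star$ with weighted degree at most $k$, orienting $\delta(i^\star)$ into $i^\star$, deleting it, and recursing. The extra bookkeeping you flag (that deleting $i^\star$ corresponds to restricting $M$ to $E'$, that the reduced representation stays $k$-column sparse in dimension $d-1$, and that $p$ restricted to $E'$ stays in the matroid polytope of the restriction) is handled correctly.
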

  
  The proof of Lemma~\ref{lm:orient2} is analogous to the proof of Lemma~\ref{lm:orient}.
 Now let us describe an algorithm for $k$-column sparse matroids.
  \begin{algorithm}[h!]
\begin{algorithmic}[1]
\caption{A non-adaptive $2^{k+2}k$-competitive mechanisms for $k$-column sparse matroids}\label{alg:ksparse}
\State Let $p$ be a point in the polytope $P_M$ so that the statement of Lemma~\ref{lm:ex_ante} is satisfied.

\State Let the edges of the hyper-multigraph $G$  be oriented so that the statement of Lemma~\ref{lm:orient2} is satisfied.

\State For every edge $e_i \in E$, $i\in [n]$, mark the edge $e_i$ as ``discarded" independently at random with probability $1 - \frac{1}{2k}$.

\State Select a cut $S\subseteq [d]$ uniformly at random, mark all edges not in $[S;\overline{S}]$ as ``discarded''. Here, $[S;\overline{S}]$ stands for the set of edges which are oriented such that  all their tails are in $S$ and their head is in $\overline S$.    In particular, for $t\in E$ we say that $e_t$ lies in a cut $[S; \overline{S}]$ with respect to the orientation
  $\vphi$ if $\vphi(e_t) \in \overline{S}$ and for every
  $i \in e_t \setminus \{\vphi(e_t)\}$ we have $i\in S$.

\State Set thresholds $\{T_t\,:\, t\in E\}$ as follows, for each $t\in E$
\[
T_t:=\begin{cases}
+\infty& \text{if }t\text{  is ``discarded''}\\
F_t^{-1}(1-p_t)& \text{otherwise}\,.
\end{cases}
\]

\end{algorithmic}
\end{algorithm}

    \begin{lemma}\label{lm:good_chance2} For every $t\in E$ we have 
\[
P[t \text{ is selected}\mid X_t\geq T_t \text{  and  } t  \text{ is not ``discarded''}]\geq 1/2\,.
\]
\end{lemma}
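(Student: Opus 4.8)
The plan is to mirror the proof of Lemma~\ref{lm:good_chance} for graphic matroids, using the notion of ``lying in a cut'' that is set up for hypergraphs in Step~4 of Algorithm~\ref{alg:ksparse}. Assume that $X_t \ge T_t$ and that $t$ is not ``discarded''. Since $t$ survives, the hyperedge $e_t$ lies in the random cut $[S;\overline S]$, i.e. $\vphi(e_t)\in\overline S$ and every tail vertex of $e_t$ lies in $S$. Let $v:=\vphi(e_t)\in\overline S$ be the head of $e_t$. I claim that any other surviving hyperedge $e_s$ (with $s\neq t$) that is ``competing'' with $t$ — in the sense that it shares a vertex with $e_t$ in a way that could conflict with accepting $t$ — must actually have $v$ as its head. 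The point is this: for $t$ to be blockable we would need some circuit formed; but the cleaner claim, matching the graphic case, is that it suffices to bound the probability that some $e_s$ with $\vphi(e_s)=v$ and $s\neq t$ is selected by the gambler before $e_t$'s turn, because as long as no such edge is selected, the partial solution together with $e_t$ stays independent. (This independence claim is exactly the place where $2$-column sparsity / the frame structure is used in the graphic proof; for general $k$ one needs the analogous linear-algebra fact that surviving edges all point into distinct heads, so the corresponding columns have disjoint ``head supports''.)

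Granting that reduction, I would bound $P[I\mid J]$ where $J$ is the event that $e_t$ lies in the cut (equivalently, survives Step~4) and $I$ is the event that some $e_s$, $s\in\delta^-(v)\setminus\{t\}$, satisfies $X_s\ge T_s$ and is not ``discarded''. By the union bound,
\[
P[I\mid J]\ \le\ \sum_{s\in E\setminus\{t\}:\, e_s\in\delta^-(v)} P[X_s\ge T_s\ \text{and}\ e_s\ \text{not ``discarded''}\mid J]\,.
\]
For each such $s$, the threshold event has $P[X_s\ge T_s\mid J]=p_s$ (independent of the discarding randomness), surviving Step~3 has probability $\frac{1}{2k}$, and surviving Step~4 conditioned on $J$ has probability $\tfrac12$ by the same conditioning argument as in Lemma~\ref{lm:good_chance} (the cut randomness at $v$'s partition side is pinned by $J$, while the partition of the tails of $e_s$ other than $v$ is still uniform). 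These are independent even given $J$, so each summand is at most $p_s\cdot\frac{1}{2k}\cdot\frac12=\frac{p_s}{4k}$. Hence
\[
P[I\mid J]\ \le\ \sum_{s\in E\setminus\{t\}:\, e_s\in\delta^-(v)}\frac{p_s}{4k}\ \le\ \frac{1}{4k}\cdot k\ =\ \frac14\ \le\ \frac12\,,
\]
using the orientation guarantee $\sum_{s:\,e_s\in\delta^-(v)}p_s\le k$ from Lemma~\ref{lm:orient2}. Therefore $P[t\text{ selected}\mid X_t\ge T_t,\ t\text{ not ``discarded''}]\ge 1-P[I\mid J]\ge \tfrac12$.

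The step I expect to be the main obstacle is the independence/combinatorial claim that underlies the reduction to counting edges in $\delta^-(v)$ — namely, that if $e_t$ lies in the cut and no other selected edge has head $v$, then adding $e_t$ keeps the set independent in $M$. In the graphic ($k=2$) case this is transparent: an oriented edge lies in the cut iff its head is in $\overline S$ and tail in $S$, so all surviving edges at $v$ point into $v$, and a set of edges with a common head endpoint that is otherwise untouched cannot close a cycle through $v$ without two such edges meeting at $v$. For general $k$-column sparse matroids the right statement is the linear-algebra analogue: because each surviving $e_s$ has its head vertex in $\overline S$ and all its other vertices in $S$, two distinct surviving hyperedges can share at most tail vertices, and one shows via the support structure that the corresponding columns, together with $\phi(t)$, cannot form a circuit unless some previously accepted column also has a nonzero entry in coordinate $v=\vphi(e_t)$ — i.e. unless some selected $e_s$ has $v$ as an endpoint, and (being selected, hence surviving) therefore has $v$ as its head. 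I would isolate this as a short separate claim about cuts in the hyper-multigraph and circuits of $M$, prove it by a minimal-dependency / support-counting argument, and then plug it into the union-bound computation above; the rest is routine and identical in spirit to the graphic case. Finally, combining Lemma~\ref{lm:good_chance2} with Steps~3–4 (survival probability $\frac{1}{2k}\cdot\frac12 = \frac{1}{4k}$ per edge, but note the cut survival is $\tfrac12$ only after conditioning, unconditionally it is $2^{-|e_t|}\ge 2^{-k}$, giving overall acceptance probability at least $\frac{1}{2k}\cdot 2^{-k}\cdot\frac12\cdot p_t = \frac{p_t}{2^{k+2}k}$) and Lemma~\ref{lm:ex_ante} yields the $(2^{k+2}k)$-competitiveness of Algorithm~\ref{alg:ksparse}.
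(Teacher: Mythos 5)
Your proposal follows the paper's argument essentially verbatim: condition on the event $J$ that $e_t$ survives the cut, apply a union bound over competitors $e_s$ with $e_s\in\delta^-(\vphi(e_t))$, and combine the Step-3 survival probability $\frac{1}{2k}$ with the orientation bound $\sum_{s:\,e_s\in\delta^-(\vphi(e_t))} p_s\le k$ from Lemma~\ref{lm:orient2}. The structural claim you single out as the main obstacle --- that $t$ is accepted as soon as $X_t\ge T_t$, $e_t$ lies in the cut, and no other selected item has head $\vphi(e_t)$ --- is indeed the crux, and your sketch of it is correct: every surviving column has exactly one nonzero coordinate indexed by a vertex of $\overline S$, namely at its head, so a linear dependency involving $\phi(t)$ forces some selected column to be nonzero at the coordinate $\vphi(e_t)\in\overline S$, and a selected (hence surviving) column can meet $\overline S$ only at its own head. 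The paper states this acceptance criterion without proof, so spelling it out is a presentational improvement rather than a different method.

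The one step that is wrong as stated is your claim that, conditioned on $J$, a competitor $e_s\in\delta^-(\vphi(e_t))$ survives Step~4 with probability exactly $\tfrac12$. That is true in the graphic case only because the graph is simple, so $e_s$ and $e_t$ share no tail; for general $k$-column sparse matroids no simplicity is assumed, distinct elements may have nested or identical supports, and the conditional survival probability is $2^{-|e_s\setminus e_t|}$, which can be as large as $1$ (e.g.\ parallel columns). Hence the per-term bound $\frac{p_s}{4k}$ is not justified. This does not endanger the lemma: bounding the conditional Step-4 survival by $1$, as the paper implicitly does, gives $\frac{p_s}{2k}$ per term and therefore $P[I\mid J]\le \frac{1}{2k}\sum_{s:\,e_s\in\delta^-(\vphi(e_t))} p_s\le \tfrac12$, which is exactly what the statement requires. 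Your concluding accounting (unconditional cut survival $2^{-|e_t|}\ge 2^{-k}$, overall acceptance probability at least $\frac{p_t}{2^{k+2}k}$, then Lemma~\ref{lm:ex_ante}) matches the paper's proof of Theorem~\ref{thm:k_column_sparse} and is fine.
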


  \begin{proof}     
    Note that item $t\in E$ is accepted whenever $X_t\geq T_t$ and no other item was selected from
   non-discarded edges in $\delta^{-}(\vphi(t))$. By the union bound, for every event $J$ we can upper bound the probability that 
    \begin{align*}
    P[\text{there $j\in E\setminus\{t\}$ such that $j$ is selected and  $e_j\in \delta^{-}(\vphi(t))$}\mid J] \le
    \\ \sum_{j\in E\setminus\{t\}: e_j\in \delta^{-}(\vphi(t))} P[\text{$e_j$ is not ``discarded'' and } X_j\geq T_j\mid J]\,.
    \end{align*}
Let $J$ indicate the event that "$e_t$ is not marked as ``discarded'' after the selection of the cut".  Then for each $j\in E\setminus\{t\}$ we have $P[\text{$e_j$ is not ``discarded'' and } X_j\geq T_j\mid J]\leq \frac{1}{2k}p_j$.
By Lemma~\ref{lm:orient2},  we have $\sum_{j\in E: e_j\in \delta^{-}(\vphi(t))} p_j \le k$, leading to the desired inequality.
  \end{proof}

Note that the proof of Lemma~\ref{lm:good_chance2} is analogous to the proof of Lemma~\ref{lm:good_chance}.
We are ready to prove Theorem by showing that the Algorithm~\ref{alg:ksparse} is a $2^{k+2} k$-competitive for $k$-column sparse matroids.

\begin{proof}[Proof of Theorem~\ref{thm:k_column_sparse}]

For every item $t\in E$ we have $P[X_t\geq T_t]=p_t$ and $P[t\text{  is not ``discarded''}]\geq\frac{1}{2^{k+1} k}$.
  By Lemma~\ref{lm:good_chance2}, we have that with probability at least $1/2$ the  item $t$ is selected when it is not ``discarded'' and $X_t\geq T_t$. Thus the expected total value of Algorithm~\ref{alg:ksparse} is at least $\sum_{j\in E} \frac{1}{2^{k+2} k} p_j t_j$ which is at least $\frac{1}{2^{k+2} k} \eproph_M$ by Lemma \ref{lm:ex_ante}.
  
  \end{proof}

\section{Cographic and gamma-sparse matroids}

\subsection{Cographic matroids}

Let us revisit a mechanism of Soto~\cite{soto2013matroid} for the cographic matroid secretary problem which is based on the following corollary of Edmond's matroid partitioning theorem~\cite{edmonds1965minimum}. This mechanism leads to a non-adaptive mechanism for cographic matroids.

\begin{proposition}\label{prop:threeconnected} Let $G=(V,E)$ be a three edge-connected graph.
  Then there exist spanning trees $H_1, H_2, H_3$ in $G$ such
  that the union of their complements contains all the edges $E$, i.e. $E=(E\setminus H_1)\cup (E\setminus H_2)\cup  (E\setminus H_3)$.
\end{proposition}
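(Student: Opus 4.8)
# Proof Proposal for Proposition~\ref{prop:threeconnected}

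The plan is to invoke Edmonds' matroid partitioning / matroid union theorem applied to the \emph{cographic} matroid of $G$, or equivalently to use the classical Nash-Williams–Tutte arboricity argument on the cographic side. The key observation is that the complement of a spanning tree $H$ of $G$ is exactly a basis of the cographic matroid $M^*(G) = M(G)^*$: indeed $E \setminus H$ is independent in $M^*(G)$ iff $H$ contains a basis of $M(G)$, i.e. a spanning tree, and counting dimensions shows $|E \setminus H| = |E| - (|V|-1) = r(M^*(G))$. So asking for three spanning trees whose complements cover $E$ is the same as asking for three bases of $M^*(G)$ whose union is $E$, which (since bases are maximum independent sets) is exactly the statement that $M^*(G)$ can be covered by $3$ independent sets.

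First I would recall the covering form of the matroid union theorem (a corollary of Edmonds~\cite{edmonds1965minimum}): a matroid $N$ on ground set $E$ can be written as a union of $\ell$ independent sets if and only if $|A| \le \ell \cdot r_N(A)$ for every $A \subseteq E$. Applying this with $N = M^*(G)$ and $\ell = 3$, it suffices to verify that $|A| \le 3\, r_{M^*(G)}(A)$ for every $A \subseteq E$. Second, I would translate the rank function of the cographic matroid into graph terms: for $A \subseteq E$, one has $r_{M^*(G)}(A) = |A| - r_{M(G)}(A) + r_{M(G)}(E \setminus A) - $ wait, more cleanly, $r_{M^*(G)}(A) = |A| - \bigl( r_{M(G)}(E) - r_{M(G)}(E \setminus A) \bigr)$, i.e. $r_{M^*(G)}(A) = |A| - (|V| - 1) + r_{M(G)}(E\setminus A) = |A| - c(E\setminus A) + (\text{number of components of } (V, E\setminus A)) - 1$ ... the point is just that $|A| - r_{M^*(G)}(A)$ equals the number of edges of $A$ that lie in some spanning tree after contracting $E \setminus A$, which is at most $|V| - c(V, E\setminus A)$ where $c$ denotes the number of connected components.

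So the inequality to check becomes: $|A| - r_{M^*(G)}(A) \le \tfrac23 |A|$, i.e. $r_{M^*(G)}(A) \ge \tfrac13 |A|$. Using the formula above, $|A| - r_{M^*(G)}(A)$ is the number of edges in a spanning forest of the graph $G/(E\setminus A)$, which has at most as many vertices as $G$ has components after deleting $A$. The main obstacle — and where three-edge-connectivity enters — is bounding this quantity: I would argue that if $A$ were too large relative to $r_{M^*(G)}(A)$, then $G$ would contain a small edge cut, contradicting $3$-edge-connectivity. Concretely, each component of $(V, E \setminus A)$ contributes, and summing the edge-boundaries of these components counts each edge of $A$ at most twice, so $|A| \ge \tfrac12 \sum (\text{cut sizes}) \ge \tfrac32 (\text{number of nontrivial components contributions})$; combined with the spanning-forest bound $|A| - r_{M^*(G)}(A) \le (\text{vertices}) - (\text{components})$ this yields $r_{M^*(G)}(A) \ge \tfrac13 |A|$ after rearranging. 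I expect the careful bookkeeping of "each $A$-edge counted at most twice across component boundaries, and each boundary has size $\ge 3$" to be the delicate step; everything else is the standard dictionary between cographic matroids and graph cuts plus a direct citation of Edmonds' theorem.
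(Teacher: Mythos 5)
Your proposal is correct and matches the route the paper intends: the paper states this proposition as a corollary of Edmonds' matroid partitioning/covering theorem~\cite{edmonds1965minimum} without writing out the verification, and your argument supplies exactly the standard details, namely that complements of spanning trees are the bases of the cographic matroid and that $3$-edge-connectivity gives $|A|\le 3\,r_{M^*(G)}(A)$ via the count $2|A|\ge\sum_i|\delta(V_i)|\ge 3\,c(G-A)$ over the components of $(V,E\setminus A)$, using $|A|-r_{M^*(G)}(A)=c(G-A)-1$. Despite the slightly garbled intermediate rank formula, the bookkeeping you sketch does close correctly (the one-component case being trivial), so the proof is sound and consistent with the paper's later remark that this proposition is equivalent to $3$-sparsity of the cographic matroid.
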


 \begin{algorithm}[h]
\begin{algorithmic}[1]
\caption{A non-adaptive $3$-competitive mechanisms for cographic matroids in the case of three edge-connectivity}\label{alg:cogrpahic3connected}
\State Let $H_1$, $H_2$ and $H_3$ be the spanning trees as in Proposition~\ref{lm:threecompet_cogr}.

\State Uniformly at random select a spanning tree $H^*$ from  $H_1$, $H_2$ and $H_3$. Set thresholds $\{T_e\,:\, e\in E\}$ as follows, for each $e\in E$
\[
T_e:=\begin{cases}
+\infty& \text{if }e\text{  is not in }H^*\\
0& \text{otherwise}\,.
\end{cases}
\]

\end{algorithmic}
\end{algorithm}

\begin{lemma}\label{lm:threecompet_cogr} Let $G=(V,E)$ be a three edge-connected graph and  let $M$ be the cographic matroid over $G$. Then Algorithm~\ref{alg:cogrpahic3connected}  is a $3$-competitive non-adaptive mechanism for the matroid~$M$.
\end{lemma}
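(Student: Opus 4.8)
The plan is to show that Algorithm~\ref{alg:cogrpahic3connected} selects, for every edge $e\in E$, with probability at least $1/3$, so that the expected value collected is at least $\tfrac13\sum_{e\in E}X_e = \tfrac13 w(E)$, and then to observe that for a cographic matroid of a $3$-edge-connected graph the prophet's optimum is exactly $w(E)$. Indeed, by Proposition~\ref{prop:threeconnected} every edge $e$ lies in at least one of the complements $E\setminus H_1$, $E\setminus H_2$, $E\setminus H_3$; equivalently $e$ is absent from at least one $H_i$. Since $H^*$ is chosen uniformly among the three trees, the event ``$e\notin H^*$'' has probability at least $1/3$. When $e\notin H^*$ we have $T_e=0\le X_e$, so the gambler will accept $e$ provided accepting it keeps the selected set independent in the cographic matroid $M$.

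The key remaining point is therefore the independence check. I would argue that the set of edges the gambler ever tries to accept is exactly $E\setminus H^*$, the complement of a spanning tree, and that $E\setminus H^*$ is independent in the cographic matroid $M$. Recall a set $A$ is independent in the cographic matroid $M(G)^*$ iff $(V, E\setminus A)$ has the same number of connected components as $(V,E)$; taking $A = E\setminus H^*$, the graph $(V, E\setminus A) = (V, H^*)$ is a spanning tree of the connected graph $G$, which is connected, so indeed it has the same number of components (namely one). Hence every subset of $E\setminus H^*$ is independent in $M$, so condition~\eqref{enum: nonadaptive independence property} never blocks an edge of $E\setminus H^*$; consequently every $e$ with $e\notin H^*$ is accepted with probability $1$ conditioned on that event, and so is accepted with (unconditional) probability at least $1/3$.

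It remains to relate $\tfrac13 w(E)$ to $\eproph_M$. For this I would note that $\proph_M = \max_{S\in\mathcal I(M)} w(S)$, and since $X_e\ge 0$ and, as just shown, $E\setminus H_1$ (say) is itself independent in $M$, while in fact the whole of $E$ need not be independent — but what we need is the reverse inequality: the gambler's value is at least $\tfrac13 w(E)\ge\tfrac13\eproph_M$ is false in general since $\eproph_M\le w(E)$ always (any independent set is a subset of $E$ and weights are nonnegative). So $E[\text{gambler}] \ge \tfrac13 E[w(E)] \ge \tfrac13 E[\proph_M] = \tfrac13\eproph_M$, which is exactly $3$-competitiveness. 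Assembling: linearity of expectation gives $E[\text{gambler's value}] = \sum_{e} X_e\cdot P[e\text{ selected}\mid \vec X]$ — more carefully, since the randomness of $H^*$ is independent of $\vec X$, $E[\text{gambler}] = \sum_e E[X_e]\,P[e\notin H^*] \ge \tfrac13\sum_e E[X_e] = \tfrac13 E[w(E)] \ge \tfrac13\eproph_M$.

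I expect the only mild subtlety to be the bookkeeping in the last displayed chain: one must be a little careful that acceptance of $e$ really is implied by ``$e\notin H^*$'' alone and does not depend on the arrival order or on $\vec X$, which is precisely what the independence argument of the second paragraph guarantees (the candidate set $E\setminus H^*$ is hereditarily independent, so order and realized values are irrelevant to feasibility). Everything else is a direct application of Proposition~\ref{prop:threeconnected} and the definition of the cographic matroid; there is no real obstacle.
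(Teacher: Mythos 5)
Your proof is correct and follows essentially the same route as the paper: Proposition~\ref{prop:threeconnected} gives that each edge avoids $H^*$ with probability at least $1/3$, the complement of a spanning tree is independent in the cographic matroid so the independence check never blocks, and $w(E)\ge \proph_M$ closes the argument (the paper phrases this as one expectation computation rather than per-edge probabilities, but it is the same reasoning). The only blemish is the momentarily garbled sentence claiming the inequality ``is false in general,'' which your final chain $E[\text{gambler}]\ge \tfrac13 E[w(E)]\ge\tfrac13\eproph_M$ correctly supersedes.
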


\begin{proof}
The expected total value of the mechanism provided by Algorithm~\ref{alg:cogrpahic3connected} equals  $E[\sum_{e\in E\setminus H^*}X_e]$ which can be estimated as follows
\[E[\sum_{e\in E\setminus H^*}X_e]=\frac{1}{3}E[\sum_{i\in [3]}\sum_{e\in E\setminus H_i}X_e]\geq \frac{1}{3}E[\sum_{e\in E}X_e]\geq \frac{1}{3}\eproph_M\,.
\]
\end{proof}

 \begin{algorithm}[h]
\begin{algorithmic}[1]
\caption{A non-adaptive $6$-competitive mechanisms for cographic matroids}\label{alg:cogrpahic}

\State Delete all loops of $M$ to obtain a matroid $M'$. Remove all bridges from $G=(V,E)$ and obtain a graph $G'=(V', E')$.

\State Let $C_1$,\ldots, $C_k$ be equivalence classes of $M'$ with respect to the relation of being parallel. Construct the matroid $M''$ from $M'$ by contracting all but one edge in each  class $C_1$, $C_2$, \ldots, $C_k$. Note, that the ground set of $M''$ has $k$ elements and matroid $M''$ is the cographic matroid over a graph $G''$, where each connected component of $G''$ is three edge-connected. Abusing the notation we refer to the elements of the ground set of $M''$ as $C_1$, $C_2$, \ldots, $C_k$. 

\State Let $H_1$, $H_2$ and $H_3$ be forests in $G''$ such that the restriction of $H_1$, $H_2$ and $H_3$ to each connected component of $G''$ satisfies Proposition~\ref{lm:threecompet_cogr} for the respective connected component.

\State Uniformly at random select a forest $H^*$ from  $H_1$, $H_2$ and $H_3$.

\State For each $i \in [k]$ select thresholds $\overline{T}_e$, $e\in C_i$ according to Theorem~\ref{thm:twocompetitive} when the gambler is allowed to accept only one item of $C_i$ and the distributions of $X_e$, $e\in C_i$ are the same as original distributions of values for $e\in C_i$.

\State
Set thresholds $\{T_e\,:\, e\in E\}$ as follows, for each $e\in E$
\[
T_e:=\begin{cases}
\overline{T}_e  &\text{if }e\in C_i\text{ and }C_i\in H^* \text{ for some }i\in [k]\\
+\infty& \text{otherwise}\,.
\end{cases}
\]

\end{algorithmic}
\end{algorithm}

The next theorem provides a proof for Theorem~\ref{thm:cographic}.
 
\begin{theorem}Let $G=(V,E)$ be a graph and  let $M$ be the cographic matroid over $G$. Then Algorithm~\ref{alg:cogrpahic}  is a $6$-competitive non-adaptive mechanism for the matroid~$M$.

\end{theorem}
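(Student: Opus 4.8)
The plan is to combine the three-edge-connected building block (Lemma~\ref{lm:threecompet_cogr}, realized by Algorithm~\ref{alg:cogrpahic3connected}) with the rank-$1$ single-item mechanism (Theorem~\ref{thm:twocompetitive}) through a reduction that strips away loops, bridges, and parallel classes, losing a further factor of $2$. First I would verify that the structural claims asserted in Steps 1--3 of Algorithm~\ref{alg:cogrpahic} are correct: deleting loops of the cographic matroid $M$ corresponds exactly to deleting bridges of $G$ (a bridge $e$ is in every spanning set of $G$, hence $E\setminus e$ has more components, so $\{e\}$ is dependent in $M$, i.e.\ a loop); contracting all but one edge in each parallel class of $M'$ yields $M''$, which is again cographic, over a graph $G''$; and, crucially, each connected component of $G''$ is three-edge-connected. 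The last point is the standard fact that after removing bridges and suppressing parallel cographic elements (which correspond to two-edge cuts, i.e.\ series classes in $G$), every remaining component has edge-connectivity at least $3$ — this is exactly the setting in which Proposition~\ref{prop:threeconnected} applies componentwise, giving the forests $H_1,H_2,H_3$ of Step 3.

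Next I would set up the value accounting. Loops contribute nothing to $\eproph_M$ and bridges contribute nothing either (a bridge $e$ is a loop of $M$, so no independent set of $M$ contains it), so $\eproph_M=\eproph_{M'}$, and it suffices to compete against $\eproph_{M'}$. Now I compare $\eproph_{M'}$ to $\sum_{i\in[k]}\eproph_{M''\mid\{C_i\}}$-type quantities: for each parallel class $C_i$, an independent set of $M'$ can contain at most one element of $C_i$ (that is what "parallel class" means for the cographic matroid), and whether a transversal of the classes is independent in $M'$ depends only on which classes are hit, i.e.\ on the corresponding subset of the ground set of $M''$. Hence an optimal independent set of $M'$ picks, among the classes it hits, the maximum-value element of each, and the set of hit classes is independent in $M''$. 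Combining this with the three-forest decomposition: with probability $1/3$ the random forest $H^*$ equals $H_j$, the mechanism then runs, on each class $C_i$ with $C_i\notin H^*$, the $2$-competitive single-item rule of Theorem~\ref{thm:twocompetitive}, and since $E(M'')\setminus H_1,\ E(M'')\setminus H_2,\ E(M'')\setminus H_3$ together cover all classes $C_1,\dots,C_k$, averaging over the three choices of $H^*$ recovers, up to the factor $3$, a $2$-competitive guarantee against $\sum_i \eproph_{\text{class }C_i}\ge \eproph_{M'}=\eproph_M$, for a total factor of $6$.

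The one genuinely delicate point — and the place I expect to have to be careful — is verifying that the composed mechanism is legitimately \emph{non-adaptive} and that the independence check in condition~\eqref{enum: nonadaptive independence property} behaves as the analysis wants, because that check is performed against the \emph{original} matroid $M$, not against $M''$. Concretely: when $H^*=H_j$, items in classes $C_i\notin H_j$ have finite threshold $\overline T_e$ and items in classes $C_i\in H_j$ have threshold $+\infty$; I must argue that within a single class only one item can pass (which follows since any two elements of $C_i$ are parallel in $M$, so the second is rejected by the independence check regardless of thresholds, exactly reproducing the single-item rule of Theorem~\ref{thm:twocompetitive} on that class), and that across classes, the set of classes whose chosen representative is accepted is independent in $M''$ hence the accepted set is independent in $M'\subseteq M$ — here one uses that $E(M'')\setminus H_j$ is independent in $M''$ (it is contained in the complement of the spanning forest $H_j$, which is independent in the cographic matroid) so no independence conflict ever arises among the active classes, and thus every active class that has an above-threshold item actually gets one accepted. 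Once this is pinned down, the expected value is at least $\tfrac13\cdot\tfrac12\sum_{i\in[k]}\eproph_{\text{class }C_i}\ge\tfrac16\eproph_M$, which is the claim.
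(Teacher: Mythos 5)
Your proposal is correct and follows essentially the same route as the paper: delete bridges (the loops of $M$), group the two-edge-cut parallel classes, apply Proposition~\ref{prop:threeconnected} componentwise to the simplified cographic matroid to get the three forests, run the rank-$1$ rule of Theorem~\ref{thm:twocompetitive} inside each active class, and combine the factors $3\cdot 2=6$. Your additional care about the independence check being made against the original $M$ (within a class, parallelism reproduces the single-item rule; across classes, the active classes lie in the complement of $H^*$, which is independent in $M''$, so no blocking occurs) is precisely the point the paper leaves implicit, and you resolve it correctly.
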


\begin{proof}
 We can assume that $G$ does not have bridges, because every such bridge
  is a loop in $M$. Thus these edges can be selected neither by the gambler nor by the prophet. So we can assume $G=G'$ and $M=M'$.

  In the case when each connected component of $G$ is three edge-connected, then Algorithm~\ref{alg:cogrpahic} runs Algorithm~\ref{alg:cogrpahic3connected} for each component to obtain a $3$-competitive non-adaptive mechanism.

  Otherwise, there is one or more pairs of edges $e$,$e'$ such that $\{e, e'\}$ corresponds to a cut in $G$. In this case, the edges $e$,$e'$ correspond to parallel elements of the cographic matroid $M$. 

Algorithm~\ref{alg:cogrpahic}  considers the partition of $E$ into classes of parallel elements $C_1$, $C_2$, \ldots, $C_k$. Let us construct the matroid $M''$ from $M$ by contracting all but one edge in each  class $C_1$, $C_2$, \ldots, $C_k$. Note, that the ground set of $M''$ has $k$ elements. Abusing the notation we refer to these elements of the ground set as $C_1$, $C_2$, \ldots, $C_k$. 
  The matroid $M''$ is isomorphic to the cographic matroid over a graph $G''$, where each connected component of $G''$ is three edge-connected. Following Lemma~\ref{lm:threecompet_cogr}, Algorithm~\ref{alg:cogrpahic} constructs forests $H_1$, $H_2$, $H_3$ for the graph~$G''$.
  
 So Algorithm~\ref{alg:cogrpahic} leads us to a $6$-competitive mechanism. Indeed, the prophet with $M$ and with the original distributions of $X_e$, $e\in E$
  performs exactly as the prophet with $M''$ and with the corresponding distributions of $X''_i:=\max_{e\in C_i}X_e$, $i\in [k]$. By selecting forests in  Algorithm~\ref{alg:cogrpahic} the gambler acheives in expectation $E[\sum_{i\in[k]}X''_i]/3$ when all classes $C_1$, $C_2$, \ldots, $C_k$ are singletons. However, for classes that are not singletons we need to take into account another $2$ approximation factor with respect to the prophet, who can  achieve the expected value  $E[X''_i]$ for each $i\in [k]$, while the gambler is guaranteed in expectation to achieve only $E[X''_i]/2$ for each $i\in [k]$.
\end{proof}

\subsection{Gamma-sparse matroids}

Let us also revisit a mechanism of Soto~\cite{soto2013matroid} for $\gamma$-sparse matroids to verify that it directly leads to  a non-adaptive mechanism.

\begin{theorem}\label{thm:gamma_sparse} Let $M = (E, \mathcal{S})$ be a $\gamma$-sparse matroid. There exists
  a $\gamma$-competitive non-adaptive mechanism for $M$.
\end{theorem}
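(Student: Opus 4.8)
The plan is to exploit the sparsity-to-partition connection that underlies Soto's argument: a $\gamma$-sparse matroid admits a covering of its ground set by few independent sets, and a uniformly random choice among them gives each element a decent chance of being ``available for free.'' Concretely, by a corollary of Edmonds' matroid partitioning / base covering theorem, a matroid $M=(E,\mathcal S)$ with $|S|\le\gamma\, r_M(S)$ for all $S\subseteq E$ has the property that $E$ can be written as a union $E = B_1\cup\cdots\cup B_{\lceil\gamma\rceil}$ of $\lceil\gamma\rceil$ independent sets of $M$ (this is exactly the hypothesis that makes the covering number at most $\lceil\max_S |S|/r_M(S)\rceil \le \lceil\gamma\rceil$). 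The mechanism is then the direct analogue of Algorithm~\ref{alg:cogrpahic3connected}: pick $j\in\{1,\dots,\lceil\gamma\rceil\}$ uniformly at random, set $T_e = 0$ for $e\in B_j$ and $T_e=+\infty$ otherwise. Since $B_j$ is independent, the independence subconstraint in a non-adaptive mechanism is never triggered, so every element of $B_j$ is actually accepted with its full value.

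First I would state and cite the matroid-union corollary in the exact form needed (covering $E$ by $\lceil\gamma\rceil$ independent sets), analogous to how Proposition~\ref{prop:threeconnected} was used for the three-edge-connected case. Then I would describe the mechanism and verify it is non-adaptive: the thresholds are fixed (here, randomly, but not data-dependent), only the original matroid $M$ is used, and because each $B_j\in\mathcal S$ the condition~\eqref{enum: nonadaptive independence property} is automatically satisfied for every element of $B_j$ in whatever order items arrive. Finally I would run the same expectation computation as in Lemma~\ref{lm:threecompet_cogr}: the gambler's expected value is $\frac{1}{\lceil\gamma\rceil}\,E\!\left[\sum_{j}\sum_{e\in B_j} X_e\right] \ge \frac{1}{\lceil\gamma\rceil}\,E\!\left[\sum_{e\in E} X_e\right] \ge \frac{1}{\lceil\gamma\rceil}\,\eproph_M$, where the first inequality is because the $B_j$ cover $E$ (so each $X_e$ is counted at least once) and the second is the trivial bound $\eproph_M \le E[w(E)]$. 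This already gives a $\lceil\gamma\rceil$-competitive mechanism; to get exactly $\gamma$ (which may be non-integer) I would note that one can instead take a fractional/randomized covering — by LP duality the minimum ``fractional cover by independent sets'' of every element equals $\max_S |S|/r_M(S) \le \gamma$, so there is a distribution over independent sets under which each element lies in the sampled set with probability at least $1/\gamma$; sampling from that distribution and setting $T_e=0$ on the sampled set, $+\infty$ elsewhere, yields the $\gamma$-competitive bound.

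The main obstacle is the non-integrality of $\gamma$ and making the ``fractional cover'' step rigorous as a genuinely non-adaptive randomized mechanism: one must check that the distribution over independent sets guaranteed by LP duality (equivalently, by applying matroid union to a suitable integer multiple $N\gamma$ of the matroid and normalizing) can be sampled before seeing any realization of $\vec X$, so that the thresholds remain data-independent as the definition of non-adaptive requires. Once that distribution is in hand, everything else is the routine covering computation above; there is no interaction with subconstraints precisely because each sampled set is independent, which is what makes the $\gamma$-sparse case so much cleaner than the contraction-minor arguments elsewhere in the paper. I would close by remarking that this is exactly the observation of~\cite{soto2013matroid} transported verbatim to the prophet setting, since Soto's secretary mechanism for $\gamma$-sparse matroids already uses only a fixed independent set and no adaptive matroid modification.
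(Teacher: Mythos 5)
Your proposal is correct and is essentially the paper's argument: the final mechanism is exactly the paper's (sample an independent set from a distribution covering each element with probability at least $1/\gamma$, set $T_e=0$ on it and $+\infty$ elsewhere, and run the same expectation computation against $E[w(E)]\ge\eproph_M$). The only difference is packaging: the paper sidesteps your ``main obstacle'' about non-integral $\gamma$ and the fractional cover by simply observing that $\mathbb{1}/\gamma\in P_M$ (since $|S|/\gamma\le r_M(S)$ for all $S$) and decomposing this point as a convex combination of characteristic vectors of independent sets, which is the distribution you obtain via Edmonds' covering theorem and LP duality; the integer cover into $\lceil\gamma\rceil$ independent sets is an unnecessary detour.
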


\begin{proof} First observe that the point $x := \mathbb{1}{/}\gamma$ lies in the matroid polytope $P_M$. Indeed, it is non-negative and for every set $S \subseteq E(M)$ we have $x(S) = |S|{/}\gamma \le r_M(S)$.

  Then $x$ can be expressed as a convex combination of indicator variables corresponding to
  the independent sets of $M$. In other words, we have $x = \sum_{S \in \mathcal{S}} \alpha_S \mathbb{1}_{S}$ for
  some $\alpha \ge 0$, $\sum_{S \in \mathcal{S}} \alpha_S = 1$, where $\mathbb{1}_{S}$ refers to the characteristic vector of $S$.

  Now sample an independent set $S$ in matroid $M$ randomly with probability~$\alpha_S$. Let the gambler select all items in $S$ and let the gambler leave all the items not in $S$ unselected.

  If $X_e$ is the random variable corresponding to the weight of element $e \in E(M)$,
  then this mechanism results in a total expected value as follows
  \[
  \sum_{S \in \mathcal{S}} \alpha_S \sum_{e \in S} E[X_e] = \sum_{e \in E} (1/\gamma) E[X_e]
  = E[ \sum_{e \in E} X_e]/\gamma \ge \eproph/\gamma\,,
  \]
 finishing the proof.
\end{proof}

Observe that  Proposition \ref{prop:threeconnected} implies
  that for a three edge-connected  graph~$G$, the cographic matroid of $G$
  is $3$-sparse. Thus Lemma~\ref{lm:threecompet_cogr} is a corollary of
  Theorem~\ref{thm:gamma_sparse}.
  
  Similarly, for a planar graph $G$ the graphic matroid is $3$-sparse, leading us to the following corollary.
  
\begin{corollary} Let $G$ is a planar graph and let $M$ be the corresponding graphic matroid. There is  a $3$-competitive non-adaptive mechanism for $M$.
\end{corollary}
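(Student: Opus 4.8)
The plan is to show that the graphic matroid $M$ of a planar graph $G$ is $3$-sparse and then invoke Theorem~\ref{thm:gamma_sparse}. Recalling the definition of a $\gamma$-sparse matroid, it suffices to verify the inequality $|S| \le 3\, r_M(S)$ for every $S \subseteq E$, where for the graphic matroid $r_M(S)$ is the size of a spanning forest of the subgraph $(V,S)$.

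First I would fix $S \subseteq E$ and pass to the subgraph $G_S := (V_S, S)$, where $V_S$ denotes the set of vertices incident to at least one edge of $S$. As a subgraph of the planar graph $G$, the graph $G_S$ is planar; assuming $G$ is simple, $G_S$ is a simple planar graph. I would then split $G_S$ into its connected components: a component with $v_i \ge 3$ vertices has at most $3v_i - 6 \le 3(v_i - 1)$ edges by Euler's formula, and a component with $v_i \le 2$ vertices has at most $v_i - 1 \le 3(v_i - 1)$ edges trivially. Summing these bounds over the components and using that $r_M(S) = \sum_i (v_i - 1)$ is exactly the total size of a spanning forest of $(V,S)$, we obtain $|S| \le 3\, r_M(S)$, so $M$ is $3$-sparse.

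It then remains to apply Theorem~\ref{thm:gamma_sparse} directly: the point $x = \mathbb{1}/3$ lies in the matroid polytope $P_M$, hence it is a convex combination $\sum_{T \in \mathcal S}\alpha_T \mathbb{1}_T$ of indicator vectors of independent sets; the mechanism samples $T$ with probability $\alpha_T$, sets the threshold to $0$ on the edges of $T$ and to $+\infty$ elsewhere, thereby selecting exactly the set $T$ and collecting expected value $\sum_{e\in E}\tfrac{1}{3}E[X_e] = E[\sum_{e\in E}X_e]/3 \ge \eproph_M/3$.

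The only real obstacle is the presence of loops or parallel edges: the argument above is exactly correct for simple planar graphs, which is the intended reading, but in a planar multigraph a parallel class of more than three edges already violates $3$-sparsity (and a single loop violates it too). To handle that more general case one would first delete loops, since they are matroid loops and hence chosen by neither the gambler nor the prophet, then contract each parallel class to a single element exactly as in Step~2 of Algorithm~\ref{alg:cogrpahic}, run the mechanism above on the resulting simple planar graph, and combine it inside each parallel class with the single-item mechanism of Theorem~\ref{thm:twocompetitive}; this costs an additional factor of $2$ in the competitive ratio, analogously to the cographic case.
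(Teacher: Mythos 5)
Your proposal follows essentially the same route as the paper: observe that the graphic matroid of a planar graph is $3$-sparse (which the paper asserts without detail and you verify via Euler's formula on each component) and then apply Theorem~\ref{thm:gamma_sparse} directly. Your additional remark about loops and parallel edges is a fair caveat --- the paper's claim of $3$-sparsity indeed presumes a simple planar graph, consistent with the simplicity assumption used elsewhere for graphic matroids --- so your proof is correct and matches the intended argument.
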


\section{Representable matroids}

Many results in the theory of matroids  make use of minors coming from restrictions and contractions. To get access to the toolbox provided by matroid theory, we need to understand how prophet inequality guarantees change when we consider minors.

\subsection{Preliminaries}

\begin{lemma}\label{lm:apx_of_restriction} Let $M$ be a matroid and let matroid $N$ be a restriction of the matroid~$M$. If there exists an $\alpha$-competitive non-adaptive mechanism on~$M$, then there is
an $\alpha$-competitive non-adaptive mechanism for $N$.
\end{lemma}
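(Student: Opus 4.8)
The plan is to reduce the prophet inequality on the restriction $N = M\mid_A$ to the one we already have on $M$ by a padding/coupling argument: given the distributions for the items in $N$ (the ground set $A$), we extend them to a distribution on all of $E$ by assigning the ``phantom'' items in $E \setminus A$ the constant value $0$ (equivalently, a degenerate random variable). First I would invoke the assumed $\alpha$-competitive non-adaptive mechanism on $M$ with this extended instance, obtaining thresholds $\{T_i : i \in E\}$. For the mechanism on $N$ we simply keep the thresholds $\{T_i : i \in A\}$ and discard the rest.

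The key steps, in order: (1) Observe that an independent set of $N$ is exactly an independent set of $M$ contained in $A$, by definition of restriction, so the non-adaptive acceptance rule on $N$ (check $X_i \ge T_i$, then check independence in $N$) coincides with the acceptance rule on $M$ restricted to items arriving from $A$. (2) Argue that the phantom items with value $0$ are never accepted by the mechanism on $M$: if $T_i \le 0$ for a phantom item it contributes nothing to the value, and in any case we may assume without loss of generality the mechanism never accepts a zero-value item, so the phantom items do not interfere with which items of $A$ get selected. (3) Conclude that, for any adversarial arrival order of the items of $A$, the gambler running the $N$-mechanism selects exactly the same subset of $A$ as the gambler running the $M$-mechanism would on the extended instance with the phantoms inserted arbitrarily (say, all at the end); hence the expected value obtained on $N$ equals the expected value obtained on $M$. (4) Finally, note $\eproph_N \le \eproph_M$ for the extended instance, since every independent set of $N$ is an independent set of $M$ and the phantom items contribute $0$ to any set's weight; combining with the $\alpha$-competitiveness on $M$ gives that the $N$-mechanism collects at least $\tfrac1\alpha \eproph_M \ge \tfrac1\alpha \eproph_N$.

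One subtlety worth addressing explicitly is the handling of the adversary: the adversary for $N$ controls only the order of items in $A$, whereas the adversary for $M$ also gets to place the phantom items. Since inserting value-$0$ items at any position cannot change the selected subset of $A$ (they fail their thresholds or are harmlessly ignored), any adversarial order for $N$ lifts to an adversarial order for $M$ achieving the same gambler value, so the worst case over $N$-orders is dominated by the worst case over $M$-orders, which is what the $\alpha$-guarantee controls.

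The main obstacle is really just bookkeeping around degenerate distributions and tie-breaking at the threshold: one must make sure that assigning a phantom item the deterministic value $0$ is legitimate in the model (the $X_i$ are only required to be non-negative, so this is fine) and that the mechanism's behavior on zero-valued items is pinned down — either $T_i > 0$ for such items, or we adopt the convention that an item of value $0$ is never selected. Once that convention is fixed, every step above is immediate from the definitions of restriction, independent set, and non-adaptive mechanism, and no real estimation is needed.
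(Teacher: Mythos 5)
Your padding argument is correct and is essentially the paper's own proof: the paper likewise keeps the $M$-mechanism's thresholds on the surviving items and neutralizes the deleted ones (there by assigning them threshold $+\infty$, in your version by giving them deterministic value $0$); both hinge on the observation that independence in $N$ is exactly independence in $M$ within the restricted ground set, so the two gamblers select the same items and $\eproph_N \le \eproph_M$ for the padded instance. The only fragile spot in your write-up --- a zero-value phantom inserted mid-stream could be accepted (if its threshold is $\le 0$) and block real items, so ``inserted arbitrarily'' is not quite harmless --- is already defused by your explicit choice of placing all phantoms at the end, since the $\alpha$-guarantee on $M$ holds for every arrival order, in particular that one.
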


\begin{proof} To obtain a mechanism for the matroid $N$, we can impose thresholds $+\infty$ for the items that were removed from the ground set to obtain the restriction~$N$ from the matroid~$M$. The remaining items are assigned the same thresholds in both mechanisms.
\end{proof}

A similar result for contractions is harder to obtain in the case of non-adaptive mechanisms. Indeed, a straightforward approach would require us to impose the thresholds $+\infty$ for the contracted items, while using the given mechanism on the remaining items. Unfortunately, this would also require us to ``change" the underlying matroid, in other words a gambler might be forced to reject an item even though its value is over the assigned threshold and its addition to the currently selected items keeps the selected set independent with respect to $M$.

Because of this difficulty, in this work we provide a matching result for contractions only for matroids representable over a finite field. This result is sufficient for the purpose of this work.

\begin{lemma}\label{lm:representable_matroids} Let $M=(E, \mathcal S)$ be a matroid representable over the field $\mathbb{F}_p$ for some $p$. Let  $T \subseteq E$ be a subset of the ground set such that $\lambda_M(T) \le k$ for some~$k$.

  Then there exists $S \subseteq T$ so that every set that is independent in 
  $M\mid_S$ is also independent in $M{/}_{\overline{T}}$ and   \[\eproph_{M \mid_{S}} \ge \frac{1}{p^{k+1}} {\eproph_{M{/}_{\overline{T}}}}\,.\]
  Recall that $\overline{T}$ stands for the complement of $T$ with respect to the ground set $E$.
  
\end{lemma}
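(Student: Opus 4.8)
The plan is to reduce the contraction $M/_{\overline{T}}$ to a restriction of $M$ by choosing the right subset $S \subseteq T$, exploiting the representation over $\mathbb{F}_p$ to control the loss. First I would fix an $\mathbb{F}_p$-representation $\phi : E \to \mathbb{F}_p^d$ of $M$ with $d = r_M(E)$. Contracting $\overline{T}$ corresponds, in representation terms, to projecting the vectors $\{\phi(e) : e \in T\}$ onto the quotient space $\mathbb{F}_p^d / \Span \phi(\overline{T})$; call this projection $\pi$, and let $W := \Span\{\pi(\phi(e)) : e\in T\}$, so that $M/_{\overline{T}}$ (restricted to $T$; note its ground set is exactly $T$) is the vector matroid of $e \mapsto \pi(\phi(e))$. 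The local-connectivity hypothesis $\lambda_M(T)\le k$ says $\dim\bigl(\Span\phi(T) \cap \Span\phi(\overline{T})\bigr) \le k$, which is precisely the dimension of the kernel of $\pi$ restricted to $\Span\phi(T)$. Hence $\phi|_T$ and $\pi\circ\phi|_T$ differ only "by a subspace of dimension at most $k$": the map $\pi$ restricted to $\Span\phi(T)$ has rank $\dim\Span\phi(T) - j$ where $j := \lambda_M(T) \le k$.

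The key combinatorial step is then: a set $A\subseteq T$ is independent in $M/_{\overline T}$ iff $\pi\phi(A)$ is linearly independent; a set $A$ is independent in $M|_T$ iff $\phi(A)$ is linearly independent; and since $\pi$ can only drop dimension, \emph{every} set independent in $M/_{\overline T}$ need NOT be independent in $M|_T$, but the reverse does hold, so $M|_T$ is "finer" than $M/_{\overline T}$. That is the wrong direction, so instead I would pass to a suitable $S \subsetneq T$. The idea is to pick a subspace $U$ of $\Span\phi(T)$ of dimension $j$ that is a complement to a choice of $\ker(\pi|_{\Span\phi(T)})$-avoiding section — more concretely, choose coset representatives so that $T$ is partitioned according to which of the at most $p^{j}$ cosets of $\ker(\pi|_{\Span\phi(T)})$ (intersected with the relevant set) the vector $\phi(e)$ falls into; on a single such "class" the map $\phi$ is, after an affine shift, injective enough that linear independence in the quotient upgrades to linear independence upstairs. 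Taking $S$ to be the most valuable of these $\le p^{j} \le p^{k}$ classes gives both properties: independence in $M|_S$ implies independence in $M/_{\overline T}$, and $\eproph_{M|_S} \ge \frac{1}{p^{k}}\eproph_{M/_{\overline T}}$ — and one extra factor of $p$ (giving $p^{k+1}$) covers the fact that on each class we really have an \emph{affine} rather than linear independence statement, which costs one additional dimension. The prophet bound then follows by the standard averaging argument: $\eproph_{M/_{\overline T}} \le \sum_{\text{classes } c} \eproph_{M|_c}$ is false in general, but for the prophet we do have $\eproph_{M/_{\overline T}} = E[\max_{I} w(I)]$ over independent $I$, and any such $I$ splits across the $\le p^{k+1}$ classes into pieces each independent in the corresponding $M|_c$, so by pigeonhole one class captures a $1/p^{k+1}$ fraction of the weight in expectation, and averaging picks it out.

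The main obstacle I expect is making the "upgrade from quotient-independence to ambient-independence" precise while keeping the number of classes at $p^{k+1}$ and not larger. Naively, two vectors $\phi(e), \phi(e')$ with the same image under $\pi$ differ by an element of $\ker(\pi)\cap\Span\phi(T)$, a space of dimension $j\le k$ containing $p^{j}$ elements, so grouping by "value in a fixed transversal of that kernel inside $\Span\phi(T)$" is the natural move; but one must check that within a group, a quotient-independent set lifts to an independent set, which requires the lift to be by a \emph{single} common kernel vector per group rather than arbitrary ones — this is why an affine/coset structure (costing the "+1" in the exponent) rather than a purely linear one appears. I would handle this by explicitly: (i) fixing a basis of $\ker(\pi)\cap\Span\phi(T)$, extending to a basis of $\Span\phi(T)$; (ii) writing each $\phi(e)$ in coordinates and grouping $e$'s by their projection onto the first $j$ coordinates — there are at most $p^{j}$ such groups; (iii) observing that on a fixed group, the $\phi(e)$ all share the same first-$j$-coordinate vector $u$, so $\phi(e) = u + \psi(e)$ with $\psi(e)$ in the complementary subspace, and $\pi\phi(e)$ determines $\psi(e)$; hence linear independence of $\{\pi\phi(e) : e\in A\}$ is equivalent to linear independence of $\{\psi(e): e\in A\}$, which implies linear independence of $\{u + \psi(e) : e\in A\} = \{\phi(e):e\in A\}$ \emph{provided} $u\notin\Span\{\psi(e):e\in A\}$ — and this last caveat is exactly what forces either a mild genericity argument or the extra factor of $p$ (treat the homogenizing coordinate as an extra group coordinate, giving $p^{j+1}\le p^{k+1}$ groups). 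Once the group with the largest $\eproph_{M|_{\text{group}}}$ is selected as $S$, both conclusions of the lemma are immediate.
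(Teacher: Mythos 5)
Your first stage coincides with the paper's: group the elements of $T$ by the component of $\phi(t)$ in $L:=(\Span \phi(T))\cap(\Span \phi(\overline T))$, of which there are at most $p^k$ choices since $\dim L\le \lambda_M(T)\le k$, and keep a most valuable class; the value bound by averaging is fine, because independence in $M{/}_{\overline T}$ automatically implies independence in $M$ (note you state this containment backwards in your second paragraph: it is the sets independent in the contraction that are automatically independent in $M\mid_T$, not conversely). The genuine gap is in the structural half of the lemma, namely that every set independent in $M\mid_S$ is independent in $M{/}_{\overline T}$, i.e.\ the implication ``$\phi(A)$ linearly independent $\Rightarrow$ $\pi\phi(A)$ linearly independent''. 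That is exactly the non-automatic direction, yet your step (iii) verifies the automatic one (quotient independence lifts to ambient independence), and the caveat $u\notin\Span\{\psi(e)\}$ you attach to it is vacuous anyway, since $u\in L$ while $\Span\{\psi(e)\}\subseteq C$ and $L\cap C=\{0\}$. Grouping by the kernel component alone does not give what the lemma needs: for $p\ge 3$ take $u\in L\setminus\{0\}$, $v\in C\setminus\{0\}$ and two elements with $\phi(e_1)=u+v$, $\phi(e_2)=u+2v$; they lie in the same class and $\{\phi(e_1),\phi(e_2)\}$ is independent, but $\pi\phi(e_1)$ and $\pi\phi(e_2)$ are parallel, so the pair is dependent in $M{/}_{\overline T}$.

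Your proposed repair (``a mild genericity argument'' or ``treat the homogenizing coordinate as an extra group coordinate'') is not a construction, and this is precisely where the paper does the real work. After fixing the best class value $a^*\in L$, the paper draws $c\in C$ uniformly at random and keeps only $H_c:=\{t\in T_{a^*}\,:\,c\cdot(\phi(t)\mid_C)=1\}$; each element survives with probability $1/p$, which is the true source of the extra factor $p$ in $p^{k+1}$. The normalization ``inner product equal to $1$'' is what makes the needed implication hold: if $x=\sum_{s\in S}\alpha_s\phi(s)$ lies in $\Span\phi(\overline T)$, then $x\in L$, so $x\mid_C=0$; dotting with $c^*$ forces $\sum_s\alpha_s=0$, and then $x\mid_L=\bigl(\sum_s\alpha_s\bigr)a^*=0$ as well, so $x=0$. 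Hence $\Span\phi(S)\cap\Span\phi(\overline T)=\{0\}$ and independence in $M\mid_S$ passes to $M{/}_{\overline T}$. Without some such additional normalization, your $p^{k+1}$ classes do not have this property, so the proposal as written does not prove the lemma.
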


\begin{proof} Consider the representation of the matroid $M$ over $\mathbb{F}_p$. Let $\phi: E\rightarrow \mathbb{F}^m_p$  be the map describing the representation of $M$. 
  Thus, for every $S\subseteq E$ we have that the set $\phi(S)=\{\phi(e)\in \mathbb{F}_p^m\,:\, e\in S\}$ is independent over the field $\mathbb{F}_p$ if and only if $S$ is an independent set for the matroid $M$.
  
  Since $\lambda_M(T) \le k$ holds, by definition of $\lambda_M$ we have
  \[r_M(T) + r_M(\overline{T}) - r_M(E) \le k\,.\]
We have $r_M(R)=\dim\Span(\phi(R))$ for every $R\subseteq E$. Thus, we have
\[
\dim\Span\phi(E) = \dim\Span \phi(T) + \dim\Span \phi(\overline T) -
  \dim \left((\Span \phi(T))\cap (\Span \phi(\overline{T}))\right).
\]
and so
 \[
\dim \left((\Span \phi(T))\cap (\Span \phi(\overline{T})) \right) \le k\,.
\]
 Since we are working over the field $\mathbb{F}_p$, the linear space $L:=(\Span \phi(T))\cap (\Span \phi(\overline{T}))$ has at most $p^k$ vectors.
  Let $C$ be the orthogonal complement of the linear space $L$ in the space $\Span \phi(T)$. Thus, we can represent $\Span \phi(T)$ as  $L \oplus C$.
  For every vector~$v \in \Span \phi(T)$ we denote $v$ orthogonal projection to $L$ and $C$ by $v \mid_{L}$
 and $v \mid_{C}$, respectively.

 For each vector $a\in L$, define the set $T_a:= \{ t \in T \,:\, \phi(t) \mid_L = a,\, \phi(t) \ne a \}$. Note
  that by definition for every $a\in L$ we have $T_a\cap L=\varnothing$. Now let us
  select $a$ uniformly at random from $L$.

  \begin{claim} 
  
  $E_a [ \eproph_{M \mid_{T_a}}] \geq \frac{1}{p^k} \eproph_{M{/}_{\overline{T}}}$.
  \end{claim}

  \begin{proof} To prove the desired inequality, we prove the corresponding inequality for any realization of item values. From now on we consider the realization of item values fixed and thus we prove the following inequality
\[ 
E_a [ \proph_{M \mid_{T_a}}] \geq \frac{1}{p^k} \proph_{M{/}_{\overline{T}}}
\]

Let us consider the set $I_{opt}$ on which the prophet achieves $\proph_{M{/}_{\overline{T}}}$. Note   that the set $I_{opt}$ does not contain any item $e$ such that $\phi(e)$ is in $L$, because every such an item $e$ is a
    loop in $M{/}_{\overline{T}}$. Thus, the set $I_{opt}$ can be partitioned into sets $I_{opt, a}$, ${a \in L}$ where $I_{opt, a}$ is a subset of $T_a$.

    The set $I_{opt}$ is independent in $M{/}_{\overline{T}}$ and so $I_{opt}$ is also independent in $M$. Hence the sets $I_{opt, a}$, ${a \in L}$ are also independent in $M$. Thus for every $a \in L$, $\proph_{M \mid_{T_a}} \ge w(I_{opt, a})$. Then we have
\[
E_a [ \proph_{M \mid_{T_a}}] \ge \frac{\sum_{a \in L} w(I_{opt, a})}{|L|} = \frac{1}{|L|} w(I_{opt}) \ge \frac{1}{p^k} \proph_{M{/}_{\overline{T}}}\,,
\]
 finishing the proof of the claim. 
  
  \end{proof}

Let us now select $a^*\in L$ such that $ \eproph_{M \mid_{T_a}}$ is maximized.  By the previous claim, we have
  \[
  \proph_{M \mid_{T_{a^*}}} \ge \frac{1}{p^k} \proph_{M{/}_{\overline{T}}}\,.
  \]
  Now for every $c \in C$ define set
  $H_c := \{t \in T_{a^*} \,:\, (\phi(t) \mid_{C}) \cdot c = 1 \}$.
 Now let us
  select $c$ uniformly at random from $C$.

  \begin{claim} $E_c[\eproph_{M \mid_{H_c}} ] \geq \frac{1}{p} \eproph_{M{\mid}_{\overline{T}_{a^*}}}.$

  \end{claim}

  \begin{proof} To prove the desired inequality, we prove the corresponding inequality for any realization of item values. From now on we consider the realization of item values fixed and thus we prove the following inequality
  \[ 
E_c[\proph_{M \mid_{H_c}} ] \geq \frac{1}{p} \proph_{M{\mid}_{\overline{T}_{a^*}}}\,.
\]

    Let  $I_{opt}$ be the set corresponding to $\proph_{M \mid_{T_{a^*}}}$. Thus, we have that
    for every  $e \in I_{opt}$, $\phi(e)$ is not in $L$ and hence $\phi(e) \mid_{C}$ is not the zero vector. Due to $P_{c} [ c \cdot t = 1 ] = 1{/}p$, for every $t\in T_{a^*}$, we have 
    \[
    E_c[w(I_{opt}\cap H_c)]=\sum_{t\in I_{opt}}P_{c} [ c \cdot t = 1 ] w(t)= \frac{1}{p}\sum_{t\in I_{opt}}w(t)=\frac{1}{p}w(I_{opt})=\proph_{M \mid_{T_{a^*}}}\,.
    \]
    Finally, since $I_{opt}$ is independent in $M$ so is $I_{opt}\cap H_c$. Thus, we have 
    \[E_c[\proph_{M \mid_{H_c}} ] \geq \frac{1}{p} \proph_{M{\mid}_{\overline{T}_{a^*}}}\,,\]
    finishing the proof of the claim.
  \end{proof}

  Now let us select $c^*$ so that $\eproph_{M \mid_{H_c}}$ is maximized and let $S^* := H_{c^*}$.
  Then we have $\eproph(M \mid_{S^*}) \ge \frac{1}{p^{k+1}} \eproph(M{/}_{\overline{T}})$. 
  
  Finally, we need to show that every set independent in $M\mid_{S^*}$ is an  independent set in $M{/}_{\overline{T}}$.
  Suppose the contrary, i.e. there exists a set that is independent in $M\mid_{S^*}$ but is not  an independent set in $M{/}_{\overline{T}}$. Then $\Span S^*$ has a non-trivial intersection with
  $\Span \overline{T}$, suppose $x \in (\Span \phi(S^*)) \cup (\Span \phi(\overline{T}))$. Let us show that $x$ is a zero vector. Since $x\in \Span S^*$, we have $x=\sum_{s\in S^*} \alpha_s \phi(s)$ for some $\alpha_s\in \mathbb{F}_p$, $s\in S^*$.

  Let us consider the projections of $x$ on $C$ and $L$. Since $x\in \Span \phi(\overline{T})$ we have that $x$ lies in $L$ and so $x\mid_C$ is the zero vector. Thus $x\mid_C=\sum_{s\in S^*}\alpha_s (\phi(s)\mid_C)$ is the zero vector.
  
  Note that by definition, $\phi(s) \mid_L = a^*$   and  $c^*\cdot (\phi(s)\mid_C)=1$ hold for every~$s\in S^*$ . 
  Thus over the field $\mathbb{F}_p$ we have 
  \begin{align*}
  \sum_{s\in S^*}\alpha_s=\sum_{s\in S^*}\alpha_s (c^*\cdot (\phi(s)\mid_C))=
  c^*\cdot \left(\sum_{s\in S^*}\alpha_s (\phi(s)\mid_C)\right)=\\
   c^*\cdot \left(x\mid_C\right)=
  0\,.
  \end{align*}
 
 Now let us consider $x\mid_L$. We have
 \[
 x\mid_L=\sum_{s\in S^*}\alpha_s (\phi(s)\mid_L)=\left(\sum_{s\in S^*}\alpha_s\right) a^*\,,
 \]
 where the last expression equals the zero vector since $\sum_{s\in S^*}\alpha_s=0$.
 Thus we have a vector $x\in L\oplus C$ such that both projections $x\mid_L$ and $x\mid_C$ are the zero vector. Hence, the vector $x$ is the zero vector, finishing the proof.
 
\end{proof}

\subsection{Tree Decompositions}

Similarly  to the approach~\cite{tonypeter2020matroid} for the matroid secretary problem, we extensively use the tree decomposition of matroids. A tree decomposition  of bounded thickness allows us to construct non-adaptive mechanisms with good approximation ratios. Before proceeding with these constructions, let us introduce tree decompositions.

 A \emph{tree decomposition} of a matroid $M=(E,\mathcal S)$ is a pair $(T, \mathcal{X})$ where $T$ is a tree and $\mathcal{X} = \{ X_v \subseteq E\,:\, v \in V(T)\}$, where  sets in $\mathcal X$ form a partition of $E$. Here, we refer to the vertex and edge sets of the tree $T$ as $V(T)$ and $E(T)$, respectively.

 Given an edge $e = \{v_1, v_2\} \in E(T)$ of the tree $T$, let $T_1$ and $T_2$ be two connected components of  $T - e$, in other word the removal of the edge $e$ from $T$ leads to two connected components $T_1$ and $T_2$. The thickness of the edge $e=(v_1,v_2)$ is denoted as $\lambda(e)$ and is defined as follows
  \[\lambda(e) := \lambda_M(\cup_{v \in V(T_1)} X_v)\,.\]
The \emph{thickness of the tree decomposition} is the maximum thickness of the edge $e$ in $E(T)$.

Let $\mathcal{M}$ be a family of matroids, $M$ be a matroid and $(T, \mathcal{X})$ be a tree decomposition of $M$. We say that tree decomposition $(T, \mathcal{X})$ is \emph{$\mathcal{M}$-tree decomposition} if $M \mid_{cl_M(X_v)} \in \mathcal{M}$ holds for every $v\in V(T)$. Let $t_k(\mathcal{M})$ be a set of matroids which have $\mathcal{M}$-tree decomposition of thickness at most $k$.

\begin{theorem}\label{thm:td_of_representable}
  Let $\mathcal{M}_{\alpha, p}$ be the family of matroids which admit $\alpha$-competitive non-adaptive mechanisms and are representable over the finite field $\mathbb{F}_p$.
  Then for every natural number $k$ and every matroid $M$ in $t_k(\mathcal{M}_{\alpha,p})$, the matroid $M$ has an $(\alpha p^{k+1})$-competitive non-adaptive mechanism.
\end{theorem}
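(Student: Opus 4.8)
The plan is to induct on the number of edges of the tree $T$ in the $\mathcal{M}_{\alpha,p}$-tree decomposition $(T, \mathcal X)$ of $M$, peeling off one leaf of $T$ at a time and charging the loss at each step to a single application of Lemma~\ref{lm:representable_matroids}. The base case is when $T$ has a single vertex $v$: then $M = M\mid_{\cl_M(X_v)} \in \mathcal{M}_{\alpha,p}$ (since $\cl_M(X_v) = E$), so $M$ already has an $\alpha$-competitive mechanism, which certainly is $(\alpha p^{k+1})$-competitive. For the inductive step, pick a leaf $\ell$ of $T$ with unique neighbor $u$, let $e = \{\ell, u\}$ be the corresponding tree edge, and set $T := \cup_{v \in V(T_\ell)} X_v = X_\ell$, where $T_\ell$ is the component of $T - e$ containing $\ell$. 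By the definition of thickness, $\lambda_M(X_\ell) = \lambda(e) \le k$.

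The key structural observation is that $M\mid_{\cl_M(X_\ell)}$ is, up to the loops sitting in $\cl_M(X_\ell)\setminus X_\ell$, the restriction of $M$ we want, and it lies in $\mathcal{M}_{\alpha,p}$ by the $\mathcal{M}_{\alpha,p}$-tree-decomposition hypothesis. I would first apply Lemma~\ref{lm:representable_matroids} with the set $T = X_\ell$: it produces $S \subseteq X_\ell$ such that every set independent in $M\mid_S$ is independent in $M/_{\overline{X_\ell}}$, and
\[
\eproph_{M\mid_S} \ge \frac{1}{p^{k+1}}\,\eproph_{M/_{\overline{X_\ell}}}\,.
\]
Since $M\mid_S$ is a restriction of $M\mid_{\cl_M(X_\ell)}\in\mathcal{M}_{\alpha,p}$, Lemma~\ref{lm:apx_of_restriction} gives an $\alpha$-competitive non-adaptive mechanism for $M\mid_S$; extending it by $+\infty$ thresholds on all of $E\setminus S$, and using that $S$-independence implies $M/_{\overline{X_\ell}}$-independence (so the extra matroid constraint of $M$ never blocks an item of $S$ whose $M\mid_S$-based mechanism would accept it), yields a non-adaptive mechanism that collects at least $\tfrac1\alpha \eproph_{M\mid_S} \ge \tfrac{1}{\alpha p^{k+1}} \eproph_{M/_{\overline{X_\ell}}}$ on the items of $X_\ell$.

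Meanwhile, on the complementary side, delete the leaf $\ell$ from $T$ and contract the ground set $X_\ell$: the matroid $M' := M/X_\ell$ — or more precisely $M$ restricted to $\overline{X_\ell}$ after contracting $X_\ell$, matching whichever convention makes $(T - \ell, \mathcal X \setminus \{X_\ell\})$ a valid $\mathcal{M}_{\alpha,p}$-tree decomposition of thickness $\le k$ — has one fewer tree-edge, so by the induction hypothesis it admits an $(\alpha p^{k+1})$-competitive non-adaptive mechanism $\mathcal A'$. Running $\mathcal A'$ on the items of $\overline{X_\ell}$ (with $+\infty$ thresholds on $X_\ell$) and the $X_\ell$-mechanism above on the items of $X_\ell$, but with the latter given strict priority — or, more simply, observing that the two mechanisms operate on disjoint item sets and that accepting items of $\overline{X_\ell}$ first only relaxes the constraint for items of $X_\ell$ — gives a combined non-adaptive mechanism. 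Its expected value is at least $\tfrac{1}{\alpha p^{k+1}}\bigl(\eproph_{M/_{\overline{X_\ell}}} + \eproph_{M/X_\ell}\bigr)$, and since the prophet's optimum independent set in $M$ decomposes along the local-connectivity split into a part living in $\cl_M(X_\ell)$ and a part living in the contraction, one gets $\eproph_{M/_{\overline{X_\ell}}} + \eproph_{M/X_\ell} \ge \eproph_M$.

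**The main obstacle** I anticipate is exactly this last decomposition of the prophet's value and the bookkeeping needed to make the two sub-mechanisms coexist as one genuinely non-adaptive mechanism on $M$. Showing $\eproph_{M/_{\overline{X_\ell}}} + \eproph_{M/X_\ell} \ge \eproph_M$ requires that an optimal $M$-independent set $I_{opt}$ restricted to $X_\ell$ be independent in $M/_{\overline{X_\ell}}$ (true, since it is independent in $M$ after throwing in a basis of $\overline{X_\ell}$, by the augmentation/closure properties) and that $I_{opt}\cap\overline{X_\ell}$ be independent in $M/X_\ell$ — and then that the weights add up. The subtle point is the non-adaptivity and the matroid-constraint compatibility: because the $X_\ell$-mechanism only ever accepts items forming $M\mid_S$-independent (hence $M/_{\overline{X_\ell}}$-independent) sets, any items previously accepted from $\overline{X_\ell}$ cannot, via the circuit structure of $M$, prevent such an acceptance — this is precisely the property Lemma~\ref{lm:representable_matroids} was designed to guarantee, and it is the reason the result is stated only for $\mathbb F_p$-representable matroids. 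I would state the priority/ordering convention carefully and verify that under it the selection probabilities of the $X_\ell$-mechanism are unchanged, so that its contribution really is $\ge \tfrac{1}{\alpha p^{k+1}}\eproph_{M/_{\overline{X_\ell}}}$ even inside the combined run.
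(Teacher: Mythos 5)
Your handling of the leaf part itself matches the paper: apply Lemma~\ref{lm:representable_matroids} with $T=X_\ell$ (the thickness bound gives $\lambda_M(X_\ell)\le k$), get $S\subseteq X_\ell$ with an $\alpha$-competitive mechanism via Lemma~\ref{lm:apx_of_restriction}, and use the fact that the accepted items of $X_\ell$ are independent in $M/\overline{X_\ell}$ so that selections on the two sides of the split never block each other. The divergence, and the genuine gaps, are in how you treat the rest of the matroid. You apply the induction hypothesis to the contraction $M/X_\ell$, but that is not available: deleting the leaf from $T$ yields an $\mathcal{M}_{\alpha,p}$-tree decomposition of the deletion $M\setminus X_\ell$ (because $\mathcal{M}_{\alpha,p}$ is closed under restriction, by Lemma~\ref{lm:apx_of_restriction}), not of $M/X_\ell$: after contracting $X_\ell$, the matroids $(M/X_\ell)\mid_{\cl(X_w)}$ need not admit $\alpha$-competitive non-adaptive mechanisms, since admitting such a mechanism is not known to be preserved under contraction --- this is precisely the obstruction the paper highlights and the sole reason Lemma~\ref{lm:representable_matroids} exists, and it is invoked only on the leaf side. (Relatedly, even granting a mechanism for $M/X_\ell$, running it inside $M$ exposes the $\overline{X_\ell}$-items to the relaxed constraint $M\setminus X_\ell$, under which a non-adaptive mechanism's greedy acceptances, and hence its guarantee, can change.) The paper's induction is therefore on $M\setminus X_\ell$.

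The second gap is your concluding inequality $\eproph_{M/\overline{X_\ell}}+\eproph_{M/X_\ell}\ge\eproph_M$, which is false, and your justification for it (that $I_{opt}\cap X_\ell$, being independent in $M$, stays independent ``after throwing in a basis of $\overline{X_\ell}$'') is incorrect: independence in $M$ does not survive adding a basis of the complement. For a concrete counterexample, let $M$ be the graphic matroid of a triangle on edges $a,b,c$, take $X_\ell=\{a,b\}$ and weights $w_a=w_b=1$, $w_c=0$; then $\eproph_M=2$ while $\eproph_{M/c}=1$ and $\eproph_{M/\{a,b\}}=0$, and note $\lambda_M(X_\ell)=1$, so bounded thickness does not rescue the inequality. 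The paper instead combines the two sub-mechanisms using the split into restrictions, $\eproph_{M\mid_{X_\ell}}+\eproph_{M\setminus X_\ell}\ge\eproph_M$, with the complement side covered by the induction on the deleted matroid $M\setminus X_\ell$; contraction enters only through the cross-compatibility property of the leaf-side selections guaranteed by Lemma~\ref{lm:representable_matroids}.
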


\begin{proof} For a natural number~$m$, let $t_{k,m}(\mathcal{M}_{\alpha, p})$ be the set of matroids which have an $\mathcal{M}_{\alpha, p}$-tree decomposition $(T, \mathcal{X})$ of thickness at most $k$ satisfying $|V(T)| = m$.

 Let us prove the statement of the lemma by induction on $m$. The base case follows from the definition of the family~$\mathcal{M}_{\alpha, p}$ and the fact that $\mathcal{M}_{\alpha, p}=t_{k,1}(\mathcal{M}_{\alpha, p})$.

Let us now show how to do the inductive step.  Let us assume $m\geq 2$ and consider a matroid~$M=(E, \mathcal S)$ in $t_{k, m}(\mathcal{M}_{\alpha, p})$ with its $\mathcal{M}_{\alpha, p}$-tree decomposition $(T, \mathcal{X})$ of thickness at most $k$ and with $|V(T)|=m$. Let $\ell$ be a leaf of the tree~$T$ and let $u$ be the neighbour of the vertex $\ell$ in the tree $T$. 

  Observe that the tree $(V(T)\setminus \{\ell\}, E(T)\setminus\{\ell u\})$ together with the subfamily $\{X_w\,:\, w\in V(T)\setminus \{\ell\}\}$ defines  an $\mathcal{M}_{\alpha, p}$-tree decomposition of the matroid $M\setminus X_\ell$. Thus the matroid~$M\setminus X_\ell$ is in $M \in t_{k, m-1}(\mathcal{M}_{\alpha, p})$. Hence, by the inductive hypothesis there are thresholds $T'_{e}$, $e\in E\setminus X_\ell$ guaranteeing $\alpha p^{k+1}$-competitiveness of the gambler in comparison to the prophet on the matroid~$M\setminus X_\ell$.

  \begin{claim} There are thresholds $T''_{e}$, $e\in X_\ell$ leading to  an $(\alpha \cdot p^{k+1})$-competitive non-adaptive mechanism for matroid $M\mid_{X_\ell}$, such that the gambler always selects a set that  is independent in $M/_{\overline{X_\ell}}$.
  \end{claim}
  \begin{proof} 
    By Lemma~\ref{lm:representable_matroids} there exists a set $S \subseteq X_l$ such that every set independent in $M\mid_S$ is also independent in the matroid $M/ \overline{X_\ell}$ and 
    \[
    \eproph_{M \mid_{S}} \ge \frac{1}{p^{k+1}} \eproph_{M{/}_{\overline{X_\ell}}}\,.
    \]
By definition of $\mathcal{M}_{\alpha, p}$ and the appearance of $X_\ell$ in the tree decomposition, we have that  $M\mid_{X_\ell}$ is in the family $\mathcal{M}_{\alpha, p}$. By Lemma~\ref{lm:apx_of_restriction}, since $S$ is a subset of $X_\ell$ the matroid~$M\mid_S$ is also in the family~$\mathcal{M}_{\alpha, p}$. Thus, there are thresholds $T''_e$, $e\in S$ that lead to an $\alpha$-competitive non-adaptive mechanism on~$M\mid_S$. The thresholds $T''_e$, $e\in X_\ell\setminus S$ can be defined as $+\infty$, finishing the proof of the claim.
  \end{proof}
    
  Now we can define thresholds~$T_e$, $e\in E$ for all elements of the matroid $M$ as follows
  \[
  T_e:=\begin{cases}
  T'_e &\text{if } e \not\in X_\ell\\
  T''_e&\text{otherwise}\,.
  \end{cases}
  \]
 Let us now demonstrate that such thresholds~$T_e$, $e\in E$ lead to an $(\alpha p^{k+1})$-competitive non-adaptive mechanism for $M$.

First, by the above claim the selected items from $X_\ell$ always form an independent set in $M/\overline{X_\ell}$ when used with the thresholds $T_e$, $e\in X_\ell$ on the matroid $M\mid_{X_\ell}$. Thus the definition of the thresholds guarantees that in expectation the value of selected items from $X_\ell$ is at least~$ \eproph_{M\mid_{X_\ell}}/(\alpha p^{k+1})$; and  in expectation the value of selected items from $E\setminus X_\ell$ is at least~$ \eproph_{M\setminus {X_\ell}}/(\alpha p^{k+1})$. 
 
To finish the proof, note that we have
 \[
 \proph_{M\mid_{X_\ell}}+ \proph_{M\setminus{X_\ell}}\geq \proph_M
 \]
 and so 
  \[
 \eproph_{M\mid_{X_\ell}}+ \eproph_{M\setminus{X_\ell}}\geq \eproph_M\,.
 \]

\end{proof}

\subsection{Regular matroids}

In this section, we prove Theorem~\ref{thm:regularcomp}.
Before we proceed to the proof, let us define key notions related to regular matroids.

 A subset of the matroid's ground set is called a \emph{circuit},  if it is an inclusion-minimal dependent set. A \emph{cycle} is a subset of the ground set which can be partitioned into a disjoint union of circuits.

 Let $M_1=(E_1, \mathcal{S}_1)$, $M_2=(E_2, \mathcal{S}_2)$ be two binary matroids. Then the matroid sum $M_1 \triangle M_2$ has the ground set
  $E_1 \triangle E_2$ and the cycles of $M_1 \triangle M_2$ are all
  sets of the form $C_1 \triangle C_2$, where $C_1$ is a cycle for $M_1$ and
  $C_2$ is a cycle for~$M_2$.

\begin{definition} Consider two binary matroids $M_1=(E_1, \mathcal{S}_1)$, $M_2=(E_2, \mathcal{S}_2)$ and $M=M_1 \triangle M_2$.

  \begin{enumerate}
  \item If $|E_1 \cap E_2| = 0$,
    and $E_1 \ne \varnothing$, $E_2 \ne \varnothing$,
    $M$ is called a \emph{$1$-sum} of $M_1$ and $M_2$.
  \item If $|E_1 \cap E_2| = 1$, $|E_1|\geq3$, $|E_2|\geq3$ and $E_1\cap E_2$ is not a loop
    of $M_1$ or $M_2$ or their dual matroids, $M$ is called a \emph{$2$-sum} of $M_1$ and $M_2$.
  \item If $|E_1 \cap E_2| = 3$, $|E_1|\geq 7$, $|E_2|\geq7$ and $E_1 \cap E_2$ is a circuit in both $M_1$ and $M_2$, and $E_1 \cap E_2$ does not contain
    a circuit in their dual matroids,
    then $M$ is called a \emph{$3$-sum} of $M_1$ and $M_2$.
  \end{enumerate}
\end{definition}

\begin{proof}[Proof of  Theorem~\ref{thm:regularcomp}] By Seymour's regular matroid decomposition theorem~\cite{seymour1980decomposition},  every regular
  matroid~$M$ can be obtained from graphic, cographic or a special matroid $R_{10}$ through a sequence of $1$-sums, $2$-sums or $3$-sums.

  This gives a tree decomposition $(T, \mathcal{X})$ of thickness at most $2$ 
  so that each~$M \mid_{X_v}$, $v\in V(T)$ is either a graphic, cographic or a special matroid ~$R_{10}$.
  
  By performing parallel extensions of the elements to be deleted before each
  $2$-sum and $3$-sum, we construct a matroid $M'$, so that $M$ is a restriction
  of $M'$ and $M'$ has a tree decomposition $(T, \mathcal{X}')$ so that each~
  $M' \mid_{\cl_{M'}(X'_v)}$, $v\in V(T)$ is either graphic, cographic or a parallel extension of~$R_{10}$.
  
  By Theorem~\ref{thm:graphic32}, every graphic matroid has a
  $32$-competitive non-adaptive mechanism. By Theorem~\ref{thm:cographic}, every cographic matroid has a $6$-competitive non-adaptive mechanism.
  Since matroid $R_{10}$ has ground set of size $10$, by Theorem~\ref{thm:twocompetitive} every parallel extension of $R_{10}$  has a $20$-competitive
  non-adaptive mechanism.

 Note that by definition every regular matroid is representable over finite field $\mathbb{F}_2$.
 Thus, by Theorem~\ref{thm:td_of_representable} with $p = 2$, $k= 2$ and $\alpha = 32$ there is  a $256$-competitive non-adaptive mechanism for matroid $M'$.
  Since $M$ is a restriction of $M'$, by Lemma~\ref{lm:apx_of_restriction},
  there is a $256$-competitive non-adaptive mechanism for $M$, finishing the proof.
\end{proof}

\subsection{Minor-closed representable matroid families}\label{sec:minor_closed}

In this section we show that every minor-clossed subclass of
matroids representable over $\mathbb{F}_p$ has a constant-competitive non-adaptive mechanism, where the constant is a function only of $p$.
The proof of this fact is analogous to the proof in~\cite{tonypeter2020matroid}.

\begin{theorem}[Theorem 4.3 in \cite{geelen2011small}]
Given natural numbers $q \ge 2$ and
  $n \ge 1$, let $M=(E,\mathcal S)$ be a matroid with no~$U_{2,q+2}$ or~$M(K_n)$ minors.
  Then we have $|E| \leq q^{q^{3n}} r_M(E)$.
\end{theorem}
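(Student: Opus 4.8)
The plan is to treat this as an instance of the matroid growth-rate phenomenon: the excluded $U_{2,q+2}$ caps the length of every line, while the excluded $M(K_n)$ forces the number of elements to grow only linearly — rather than exponentially — in the rank. Throughout I read the statement for simple matroids (a loop or a large parallel class violates it outright, and the simple case is the one needed in the application), and write $r := r_M(E)$. The first, routine observation is that for a simple matroid ``no $U_{2,q+2}$ minor'' is the same as ``every rank-$2$ flat carries at most $q+1$ points'', since the restriction of $M$ to a line with $q+2$ points is already a copy of $U_{2,q+2}$. From this one recovers Kung's exponential bound by induction on $r$: contract one non-loop element $e$, pass to the simplification of $M/e$ (rank $r-1$, with elements in bijection with the lines of $M$ through $e$, each contributing at most $q$ elements of $E$ besides $e$), and conclude $|E| \le 1 + q\cdot\frac{q^{r-1}-1}{q-1} = \frac{q^{r}-1}{q-1}$. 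This already bounds $|E|$, but only by $q^{r}$; the whole point is to upgrade ``exponential in $r$'' to ``linear in $r$''.

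Qualitatively this upgrade is immediate from the Growth Rate Theorem of Geelen, Kung and Whittle: a minor-closed class with no arbitrarily long line has growth rate that is linear, quadratic, or base-$q'$ exponential for some prime power $q' \le q$; a quadratically dense class contains every graphic matroid, and a base-$q'$ exponentially dense one contains every $\mathbb{F}_{q'}$-representable matroid, so in either case it contains $M(K_n)$ for all $n$ (graphic matroids being representable over every field). Hence excluding $M(K_n)$ leaves only the linear regime, giving $|E| \le c\,r$ for some $c = c(q,n)$. To pin down the explicit tower constant $q^{q^{3n}}$ one runs the effective version: show directly that a simple, long-line-free matroid with $|E| > q^{q^{3n}}\,r$ has, as a minor, a Dowling geometry $Q_{n-1}(\Gamma)$ for some finite group $\Gamma$ — the line-length bound forces $|\Gamma| \le q-1$, a line through two joints of a Dowling geometry having $|\Gamma|+2$ points — and then use that $Q_{n-1}(\Gamma)$ contains the cycle matroid $M(K_n)$ as a restriction, namely on its $n-1$ joints together with the identity-labelled point of each pair of joints. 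Choosing this to equal $M(K_n)$ exhibits the forbidden minor, a contradiction; the tower-type dependence on $n$ comes from building the Dowling structure up one rank at a time, on the order of $n$ rounds, and tracking how the density threshold compounds between rounds.

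The step I expect to be hard is exactly this extraction — passing from mere density to a \emph{structured} (Dowling, hence graphic) minor — which is the substance of the Growth Rate Theorem and of its quantitative refinements. A self-contained proof would build up the Dowling structure incrementally, in the spirit of the extremal arguments that locate large clique minors inside dense graphs, while verifying at every contraction that no line is lengthened past $q+1$ points so that $U_{2,q+2}$-freeness is maintained throughout. By contrast, the reduction to simple matroids, the reformulation of the line hypothesis, and Kung's induction are routine.
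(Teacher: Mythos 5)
The paper does not prove this statement at all: it is quoted verbatim as Theorem~4.3 of the cited work of Geelen~\cite{geelen2011small} and used as a black box, so there is no in-paper argument to compare yours against. Judged on its own terms, your write-up correctly identifies the surrounding machinery — the reduction to simple matroids, the reformulation of $U_{2,q+2}$-freeness as ``every line has at most $q+1$ points,'' Kung's induction giving $|E|\le (q^{r}-1)/(q-1)$, and the fact that excluding $M(K_n)$ should force linear rather than exponential density — and your remark that $M(K_n)$ sits inside a rank-$(n-1)$ Dowling geometry on the joints together with the identity-labelled points is accurate.

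However, as a proof of the stated inequality there is a genuine gap, and you name it yourself: the entire content of the theorem is the quantitative extraction of an $M(K_n)$ (equivalently, Dowling/clique-like) minor from any simple matroid with no long lines and more than $q^{q^{3n}}r$ elements, and this step is only gestured at (``build the Dowling structure up one rank at a time \ldots tracking how the density threshold compounds''). Invoking the qualitative Growth Rate Theorem of Geelen, Kung and Whittle only yields $|E|\le c(q,n)\,r$ with an ineffective constant, not the specific tower $q^{q^{3n}}$; and note that the Growth Rate Theorem postdates and is substantially harder than the bound being proved, so routing through it is circular in spirit as well as non-explicit. If you intend this result to be used (as the paper does, in Corollary~\ref{cor:geelen}) you should either cite it, as the paper does, or carry out the effective density-increment argument of~\cite{geelen2011small}; also be explicit that the bound is applied to simple matroids (or to the number of points $\varepsilon(M)$ rather than $|E|$), since loops and large parallel classes create no $U_{2,q+2}$ or $M(K_n)$ minor yet violate the inequality as literally written.
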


\begin{corollary}\label{cor:geelen}
Given natural numbers $q \ge 2$ and
  $n \ge 1$, let $M=(E,\mathcal S)$ be a matroid with no~$U_{2,q+2}$ or~$M(K_n)$ minors.
  Then there exists a $q^{q^{3n}}$-competitive non-adaptive mechanism for~$M$.
\end{corollary}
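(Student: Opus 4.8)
The plan is to deduce that $M$ is $\gamma$-sparse with $\gamma := q^{q^{3n}}$ and then invoke Theorem~\ref{thm:gamma_sparse}.

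First I would observe that the class of matroids having no $U_{2,q+2}$ minor and no $M(K_n)$ minor is minor-closed: a minor of a minor of $M$ is again a minor of $M$, so deleting or contracting elements cannot create a forbidden minor that $M$ did not already contain. In particular, for every subset $S \subseteq E$ the restriction $M\mid_S$ is a minor of $M$, hence $M\mid_S$ also has neither a $U_{2,q+2}$ nor an $M(K_n)$ minor.

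Next I would apply Theorem~4.3 of~\cite{geelen2011small} not to $M$ itself but to each restriction $M\mid_S$. Since the ground set of $M\mid_S$ is $S$ and $r_{M\mid_S}(S)=r_M(S)$, this yields $|S| \le q^{q^{3n}}\, r_M(S)$ for every $S \subseteq E$. By definition this is exactly the statement that $M$ is $\gamma$-sparse with $\gamma = q^{q^{3n}}$. Finally, Theorem~\ref{thm:gamma_sparse} produces a $\gamma$-competitive non-adaptive mechanism for any $\gamma$-sparse matroid, which is the claimed bound.

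I do not expect a genuine obstacle here; the only point needing a moment's care is that the cited theorem bounds only $|E|$ in terms of $r_M(E)$, whereas $\gamma$-sparsity requires the inequality for \emph{all} subsets — but, as noted above, this follows at once by passing to restrictions and using minor-closedness of the forbidden-minor class.
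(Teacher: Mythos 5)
Your proposal is correct and follows essentially the same route as the paper: the paper likewise notes that every restriction of $M$ inherits the absence of $U_{2,q+2}$ and $M(K_n)$ minors, applies the size bound of Theorem 4.3 of \cite{geelen2011small} to each restriction to conclude that $M$ is $q^{q^{3n}}$-sparse, and then invokes Theorem~\ref{thm:gamma_sparse}. No gaps.
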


\begin{proof} If $M$ has no~$U_{2,q+2}$ or~$M(K_n)$ minors,
  then every restriction of $M$ also has no~$U_{2,q+2}$ or~$M(K_n)$ minors.
  Thus for every $X \subseteq E$ we have $|X| \le q^{q^{3n}} r_M(X)$.
So, $M$ is  a $q^{q^{3n}}$-sparse matroid and by Theorem~\ref{thm:gamma_sparse}
 there exists a $q^{q^{3n}}$-competitive non-adaptive mechanism for~$M$.
\end{proof}

\subsubsection{Projections and lifts}

Let $M$ be a matroid and  $x$ be an element of the ground set, which  is a not a loop and not a free element of the matroid~$M$.
Then $M/x$ is called a \emph{projection} of $M\setminus x$;
$M\setminus x$ is called a \emph{lift} of $M/x$. Note that here and later we write $M/x$ and $M\setminus x$ instead of  $M/{\{x\}}$ and $M\setminus{\{x\}}$, repsectively.

Let $M$ and $N$ be two matroids with the same ground set. 
We say that the distance between $M$ and $N$ is $t$, denoted by $\dist(M, N) = t$ if $t$ is the smallest integer such that there
exists a sequence of matroids $P_0$, $P_1$, \ldots, $P_t$ where $P_0=M$ and $P_t=N$ and for every $i \in [t]$ the matroid $P_i$ is either a lift or a projection of $P_{i-1}$.

\begin{lemma}\label{lm:apx_of_lift} Let $N$ be a lift
of the matroid $M$. If there is an $\alpha$-competitive non-adaptive mechanism for $M$ then there exists a $(2\alpha+2)$-competitive non-adaptive mechanism for $N$.
\end{lemma}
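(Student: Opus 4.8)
The plan is to unpack the notion of a lift, charge the ``extra rank'' that $N$ has over $M$ to a single-item instance, and then run the given mechanism for $M$ together with a single-item mechanism under a fair coin flip.

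By the definition of a lift there is a matroid $P$ on ground set $E\cup\{x\}$, with $x$ neither a loop nor a free element of $P$, such that $M=P/x$ and $N=P\setminus x$, where $E=E(M)=E(N)$. Since $x$ is not a loop of $P$, a set $S\subseteq E$ is independent in $M$ precisely when $S\cup\{x\}$ is independent in $P$, while $S$ is independent in $N$ precisely when $S$ is independent in $P$; hence $\mathcal I(M)\subseteq\mathcal I(N)$. Since $x$ is not free, contracting it drops the rank by one, and in fact every $S\in\mathcal I(N)$ satisfies $r_M(S)\ge|S|-1$, so $S$ has a subset $S'\in\mathcal I(M)$ with $|S\setminus S'|\le 1$. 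Fixing a realization of $\vec X$ and taking $S$ to attain $\proph_N$, the single discarded element lies in $S$ and is therefore a non-loop of $N$, so $w(S)\le w(S')+R'\le\proph_M+R'$ where $R'$ is the largest value among non-loops of $N$; taking expectations gives $\eproph_N\le\eproph_M+R$ with $R:=E[R']$.

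The mechanism for $N$ flips a fair coin. On heads it uses the thresholds of the given $\alpha$-competitive non-adaptive mechanism for $M$ (run now with $N$'s independence test); on tails it uses a $2$-competitive single-item mechanism from Theorem~\ref{thm:twocompetitive} on the non-loop items of $N$ (and $+\infty$ on the loops). The tails branch collects in expectation at least $\tfrac12 R$: the first non-loop item clearing the single-item threshold is always feasible in $N$, so the $N$-mechanism accepts at least the item the single-item rule would accept. The heads branch is the delicate part: I claim that running the $M$-thresholds under the coarser independence oracle of $N$ still collects expected value at least $\tfrac1\alpha\eproph_M$. One couples the $N$-run with the $M$-run on the same realization and internal randomness, and argues by induction on the arrivals that the weight collected so far by the $N$-run never drops below that of the $M$-run; the supporting invariant is that the $M$-run's current set is spanned, in $M$, by the $N$-run's current set (so in particular the $N$-run is never behind in cardinality). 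The only case threatening the weight comparison is when $N$ skips an item that $M$ accepts, and there one uses that $M=P/x$ and $N=P\setminus x$ share the parent $P$: the very $x$-induced dependence that made $M$ skip the item also makes it dependent in $N$ once the $N$-run has absorbed the corresponding extra dimension, which is exactly the situation the invariant guarantees, so no net loss ever accumulates. Hence the heads branch delivers at least $\tfrac1\alpha\eproph_M$ in expectation.

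Combining the branches, the mechanism collects at least $\tfrac12\cdot\tfrac1\alpha\eproph_M+\tfrac12\cdot\tfrac12 R=\tfrac1{2\alpha}\eproph_M+\tfrac14 R$, which is at least $\tfrac1{2\alpha+2}(\eproph_M+R)$ because $\alpha\ge 1$, and this is at least $\tfrac1{2\alpha+2}\eproph_N$ by the second paragraph; thus the mechanism is $(2\alpha+2)$-competitive for $N$. The main obstacle is precisely the heads-branch claim: a non-adaptive mechanism cannot restrict its feasibility test to $\mathcal I(M)$ once it is run on $N$ (this is the paper's recurring difficulty with contractions), so one must verify that being forced to use the strictly larger family $\mathcal I(N)$ never costs the gambler value. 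Making the coupling and its invariant fully rigorous — rather than merely plausible on small examples — is where the real work lies, while the single-item patch, the prophet bound $\eproph_N\le\eproph_M+R$, and the final arithmetic are routine.
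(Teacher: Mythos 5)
Your outer structure is sound and, where it is complete, it is a legitimate (slightly different) accounting from the paper's: you flip a fair coin between the $M$-thresholds and a single-item rule on all non-loops of $N$, and replace the paper's circuit-based prophet comparison (which conditions on the circuit of $I_{opt}\cup\{x\}$ in the parent matroid $L$ and uses a biased coin $\alpha/(\alpha+1)$ together with a single-item mechanism only on the parallel class of $x$) by the cleaner rank bound $\eproph_N\le\eproph_M+R$; that bound, the $\tfrac12 R$ guarantee for the tails branch, and the final arithmetic with $\alpha\ge 1$ are all correct.

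However, there is a genuine gap exactly at the crux of the lemma: the heads-branch claim that running the $M$-mechanism's thresholds under $N$'s independence test still collects at least $\tfrac1\alpha\eproph_M$. You assert a coupling ``invariant'' but do not prove it, and the sketch is incoherent: the case you single out as dangerous is ``$N$ skips an item that $M$ accepts,'' yet your resolution appeals to ``the $x$-induced dependence that made $M$ skip the item,'' which is the opposite scenario. In fact the dangerous case never occurs, and proving this is the real content. The paper's argument: fix the realization and order, let $e_1,\dots,e_k$ be the threshold-passing items; both gamblers greedily maintain a basis of the passed items seen so far (in $N\mid_S$ and $M\mid_S$ respectively). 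Since independence in $M=L/x$ implies independence in $N=L\setminus x$, the first discrepancy can only be $N$ accepting an item $e_i$ that $M$ rejects; from then on, if $T$ is the $M$-gambler's set, both $T\cup\{e_i\}$ and $T\cup\{x\}$ are bases of $L\mid_{S\cup\{x\}}$, so any later item on which the two runs disagreed would give two bases of different cardinality of $L$ restricted to a set, a contradiction. Hence the $N$-run's selection is always a superset of the $M$-run's, which is the statement your proof needs (a weight comparison via a ``spanned in $M$'' invariant would not by itself survive the step where $N$ allegedly skips an item $M$ takes). Without this argument the lemma is not established, since the inability to impose $\mathcal I(M)$ as a subconstraint when running on $N$ is precisely the difficulty the lemma is meant to overcome.
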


\begin{proof}

Since $N$ is a lift of $M$, there exists a matroid~$L=(E,\mathcal S)$ and an element $x$ of its ground set, such that $M = L{/}x$, $N = L{\setminus}x$. Here, $x$ is not a loop and not a free element of $L$.

Let $P$ be the set of elements in $L$ that are parallel to $x$, in other words $P:= \{x' \in E\,:\,x' \parallel x\}$. Note that $N\mid_{P \setminus \{x\}}$ is a uniform matroid of rank $1$. Note also that elements in $P \setminus \{x\}$ are loops in $M$ and so $\eproph_M = \eproph_{M \setminus P}$.

Let $T'_e$, $e\in E\setminus \{x\}$ be the thresholds imposed by an $\alpha$-competitive non-adaptive mechanism for the matroid~$M$. Let $T''_e$, $e\in P$ be the thresholds guaranteeing $2$-competitive non-adaptive mechanism as in Theorem~\ref{thm:twocompetitive} for the uniform matroid of rank $1$ on the ground set  $P \setminus \{x\}$; and let $T''_e$, $e\in E\setminus (P\cup \{x\})$ be $+\infty$. 
We select one of these two sets of thresholds for the matroid $N$ as described below. The constructed mechanism for the matroid $N$
selects one of those two sets at random, where first set of thresholds $T'_e$, $e\in E\setminus \{x\}$  is selected with probability
$\gamma := \alpha{/}(\alpha+1)$ and the second set $T''_e$, $e\in E\setminus \{x\}$ with probability $1 - \gamma = 1{/}(\alpha+1)$.

Next part is dedicated to the analysis of how thresholds $T'_e$, $e\in E\setminus \{x\}$ perform on the matroid $N$. Note, that these thresholds are coming from a mechanism for the matroid $M$, while they are used for the matroid $N$ with probability $\gamma$. We show that the total expected value achieved by  thresholds $T'_e$, $e\in E\setminus \{x\}$ on $N$ is at least   the total expected value achieved by these thresholds on $M$. For this we can assume that for every realization of item values, the orders of items in matroid $N$ and $M$ are the same. To see that this assumption is valid, we can assume that the order for $N$ is chosen in an adversarial way and is used also as the items order for $M$.

\begin{claim} Let us assume that the items order for $M$ and $N$ is the same for a given realization of item values. Let us also assume that for  every item $e\in E\setminus \{x\}$ the  threshold $T'_e$ is used. Then the gambler with matroid $N$ selects all items that the gambler with matroid $M$ selects.
\end{claim}

\begin{proof} We fix the item values realization and items order. Let $e_1$, $e_2$,\ldots, $e_k$ be the items with their values being at least their threshold and with the corresponding order. 

Now we need to show that if the gambler with matroid $N$ selects items greedily from 
$e_1$, $e_2$,\ldots, $e_k$ starting from $e_1$, then the set of selected items is a superset of the items greedily selected by the gambler with matroid $M$.
 If both gamblers end up selecting exactly the same set of items, then proof of the claim is complete.  Otherwise consider the first index $i\in [k]$ such that the item $e_i$ is selected by exactly one of the two gamblers.
  Since $N = L{\setminus}x$ and $M = L{/}x$ we have that it is only possible
 if $e_i$ is selected by the gambler with the matroid $N$ and rejected by the gambler with the matroid $M$.

  Now we claim that every subsequent item, in other words an item in $e_{i+1}$, \ldots, $e_k$, is either selected by both gamblers or rejected by both gamblers. 
  Suppose the contrary and consider the first item $e_j$, $i+1\leq j\leq k$ that is selected by one gambler and rejected by another gambler. Let $S:= \{e_1, e_2, \ldots, e_{j-1}\}$ and let $T$ be the set of items selected by the gambler with $M$ from the set $S$. Thus the gambler with $N$ selected $T \cup \{e_i\}$ from the set $S$. So $T \cup \{e_i\}$ is a basis of $(L\setminus x)\mid_S$ and $T$ is a basis of $(L/x)\mid_S$.
Thus, both $T \cup \{e_i\}$   and $T \cup \{x\}$ are bases of   $L\mid_S$. If only one of the two gamblers accepts the item $s_j$ then the matroid  $L\mid_{S\cup \{s_j\}}$ has two bases of different cardinality, attaining a contradiction and finishing the proof.
 \end{proof}
 \medskip

Thus we have that the thresholds $T'_e$, $e\in E\setminus \{x\}$ guarantee at least $\eproph_M$ as the expected total value of the gambler with $N$. To prove that the constructed mechanism is $1/(2\alpha+2)$-competitive it is enough to show the following claim. Note that in our construction we used $\alpha$-competitive non-adaptive mechanism for the matroid $M$ and $2$-competitive non-adaptive mechanism for the uniform matroid of rank $1$ on $P\setminus\{x\}$.

\begin{claim} $\gamma\frac{1}{\alpha}\eproph_M+ (1-\gamma) \frac{1}{2}\eproph_{P\setminus \{x\}} \ge \frac{1}{2\alpha+2} \eproph_N$
\end{claim}

\begin{proof} 
  
Let us consider the inclusion-maximal set $I_{opt}$ on which the prophet achieves $\proph_N$. Let $C_{opt}$ be a random variable corresponding to the unique circuit of $I_{opt} \cup \{x\}$ in $L$. Recall that $x$ is not a free element of $L$ so such a circuit exists and is unique and contains $x$.

First consider the events when $|C_{opt}| \ge 3$.
Note that by definition of a circuit, for every $y \in C_{opt}\setminus \{x\}$ the set $(I_{opt}\cup \{x\}) \setminus \{y\}$ is independent in $L$. 
Hence, for every $y \in C_{opt}\setminus \{x\}$ the set  $I_{opt} \setminus \{y\}$ is independent in $M$. So we have that
conditioned on $|C_{opt}| \ge 3$ we have $\proph_M \geq w(I_{opt} \setminus \{y\})$ for every $y \in C_{opt}\setminus \{x\}$.
Let $y_{opt}$ be the random variable representing the element in $C_{opt}\setminus \{x\}$ of smallest value. Then conditioned on $|C_{opt}| \ge 3$, we have $w(C_{opt} \setminus \{y_{opt},x\}) \geq  w(C \setminus \{x\})/2$. Thus, conditioned on  $|C_{opt}| \ge 3$ we have
\begin{align*}
\proph_M \ge w(I_{opt} \setminus \{y_{opt}\}) = w(I_{opt}  \setminus C_{opt}) + w(C_{opt} \setminus \{y_{opt}\})\\\geq w(I_{opt} \setminus C_{opt}) + \frac{1}{2} w(C_{opt}\setminus \{x\}) \geq \frac{1}{2}  w(I_{opt}) = \frac{1}{2} \proph_N\,.
\end{align*}

Second consider the event that $|C_{opt}| < 3$. Since $x$ is not a loop of $L$ by definition, we have $|C_{opt}| = 2$ and
so $C_{opt} = \{x, x_{opt}\}$ for some random variable element $x_{opt} \in P \setminus \{x\}$. For the event $|C_{opt}| \geq 3$ let us define the random variable element $x_{opt}$ to be an arbitrary element in $C_{opt}\setminus \{x\}$. Thus,  if $|C_{opt}| < 3$ we have $\proph_{P\setminus\{x\}} \geq  w(x_{opt})$.
Now let us define $J_{opt} := I_{opt} \setminus \{x_{opt}\}$ and note that $J_{opt}$ is independent in the matroid~$M$. Moreover, since $I_{opt}$ is the set on which the prophet achieves $\proph_N$, 
we have that conditioned on $|C_{opt}| < 3$ the prophet achieves $\proph_M$ on the set ~$J_{opt}$.

Combining everything together we have
\begingroup
\allowdisplaybreaks
\begin{align*}
&\gamma\frac{1}{\alpha}\eproph_M+ (1-\gamma) \frac{1}{2}\eproph_{P\setminus \{x\}}=\\
&\frac{1}{\alpha+1}\eproph_M+ \frac{1}{2\alpha+2}\eproph_{P\setminus \{x\}}\geq\\
&\qquad E \left[\frac{w(x_{opt})}{2\alpha+2} + \frac{\proph_M}{\alpha+1} \,\middle\vert\, |C_{opt}| <3 \right] P\left[|C_{opt}| <3\right] \\
&\qquad+E\left[\frac{\proph_M}{\alpha+1} \,\middle\vert\, |C_{opt}| \ge 3 \right] P\left[|C_{opt}| \ge 3\right] =\\
&\qquad E \left[\frac{w(x_{opt})}{2\alpha+2} + \frac{w(I_{opt}\setminus \{x_{opt}\})}{\alpha+1} \,\middle\vert\, |C_{opt}| <3 \right] P\left[|C_{opt}| <3\right] \\
&\qquad+E\left[\frac{\proph_M}{\alpha+1} \,\middle\vert\, |C_{opt}| \ge 3 \right] P\left[|C_{opt}| \ge 3\right] \geq\\ 
&\qquad E \left[\frac{\proph_N}{2\alpha+2} \,\middle\vert\, |C_{opt}| <3 \right] P\left[|C_{opt}| <3\right]\\
&\qquad+E\left[\frac{\proph_M}{\alpha+1} \,\middle\vert\, |C_{opt}| \ge 3 \right] P\left[|C_{opt}| \ge 3\right]\geq\frac{1}{2\alpha+2}\eproph_N\,.
\end{align*}
\endgroup
\end{proof}

\end{proof}

\begin{lemma}\label{lem:sequence_projections} Let $N$ be a matroid obtained from a matroid $M$ by a sequence of $t$ projections. Let $L$ be the set of loops in the matroid~$N$. Let there exist an $\alpha$-competitive non-adaptive mechanism for the matroid~$M$.   Then there exists a non-adaptive mechanism for $N{\setminus}L$ such that the expected total value of this mechanism is at least~$\frac{1}{\alpha \cdot 3^t} \eproph_{M{\setminus}L}$.
\end{lemma}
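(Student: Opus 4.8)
The plan is to proceed by induction on the number $t$ of projections, using Lemma~\ref{lm:apx_of_lift} in reverse together with Lemma~\ref{lm:apx_of_restriction}. Write the sequence of projections as $M = P_0, P_1, \ldots, P_t = N$, where each $P_i$ is a projection of $P_{i-1}$, i.e. $P_{i-1} = L_i {\setminus} x_i$ and $P_i = L_i {/} x_i$ for some matroid $L_i$ and a non-loop, non-free element $x_i$. The base case $t = 0$ is immediate: $N = M$, $N {\setminus} L$ is a restriction of $M$, and Lemma~\ref{lm:apx_of_restriction} gives an $\alpha$-competitive mechanism for $N {\setminus} L$, which certainly has expected value at least $\frac{1}{\alpha} \eproph_{M {\setminus} L} \ge \frac{1}{\alpha \cdot 3^0} \eproph_{M {\setminus} L}$.

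For the inductive step, I would split the sequence as $M = P_0 \to P_1 \to \cdots \to P_{t}$ and apply the hypothesis to the first projection: let $M' := P_1$, which is a projection of $M$ (equivalently, $M$ is a lift of $M'$). Lemma~\ref{lm:apx_of_lift} is phrased as ``lift of $M$'' producing a $(2\alpha+2)$-competitive mechanism; here I want the other direction — going from a mechanism for $M$ to one for its projection $M' = M_1/x_1$ while controlling loops. So the cleaner route is to induct by peeling off the \emph{last} projection: $N = P_t = L_t {/} x_t$ is a projection of $P_{t-1} = L_t {\setminus} x_t$, and $P_{t-1}$ is obtained from $M$ by $t-1$ projections. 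By the inductive hypothesis applied to $P_{t-1}$ with its own loop set $L'$, there is a non-adaptive mechanism for $P_{t-1} {\setminus} L'$ with expected value at least $\frac{1}{\alpha \cdot 3^{t-1}} \eproph_{M {\setminus} L}$ (noting that loops present earlier remain loops, so the relevant ground sets are consistent after restriction — this bookkeeping needs care). Then I pass from a mechanism for $P_{t-1} {\setminus} L'$ to one for $N {\setminus} L = (L_t{/}x_t) {\setminus} L$ by replaying the single-projection argument inside the proof of Lemma~\ref{lm:apx_of_lift}: set thresholds $+\infty$ on the newly-created loops of $N$, and use the already-established thresholds on the rest, losing a multiplicative factor of $3$.

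The factor $3$ (rather than the $2\alpha+2$ appearing in Lemma~\ref{lm:apx_of_lift}) is the point worth checking carefully: it comes from the observation that, once we have already restricted away the loops, moving from the lift $L_t{\setminus}x_t$ to $L_t{/}x_t$ loses only a bounded \emph{additive/multiplicative constant} in the prophet value — the circuit-splitting argument in the proof of Lemma~\ref{lm:apx_of_lift} shows $\proph_{M{\setminus}L}$ (the lift) and $\proph_{N{\setminus}L}$ (the projection) are related by a factor of at most $2$ in one direction, but the subtlety is which direction, and that a single extra randomized choice between two threshold vectors (selecting between the inherited mechanism and a rank-$1$ sub-mechanism) recovers a factor $3$ overall when combined appropriately. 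I expect this to require re-running the two claims from the proof of Lemma~\ref{lm:apx_of_lift} with $\alpha$ replaced by $\alpha \cdot 3^{t-1}$ and verifying that the resulting competitive ratio is $\alpha \cdot 3^{t-1} \cdot 3 = \alpha \cdot 3^t$ rather than $2(\alpha \cdot 3^{t-1}) + 2$; the improvement hinges on the fact that Lemma~\ref{lem:sequence_projections} only asks for the weaker guarantee ``expected value $\ge \frac{1}{\alpha 3^t}\eproph_{M{\setminus}L}$'' against the lift's prophet $M{\setminus}L$, not against $N{\setminus}L$'s own prophet, which removes one of the two factors of $2$.

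\textbf{Main obstacle.} The real difficulty is not any single inequality but the bookkeeping of loops across the sequence of projections: a projection $P_{i-1} \to P_i$ can turn a non-loop into a loop, and these accumulated loops must be tracked so that ``$N {\setminus} L$'' in the statement matches what the induction actually produces at each stage. I would handle this by proving the slightly stronger statement that for \emph{every} $i$ there is a mechanism for $P_i {\setminus} L_i$ (where $L_i$ is the loop set of $P_i$) with expected value $\ge \frac{1}{\alpha 3^i} \eproph_{M{\setminus}L}$, using that $L_0 \subseteq L_1 \subseteq \cdots$ (loops are never destroyed by projection) so that all these restrictions are consistent and the final prophet bound $\eproph_{M {\setminus} L}$ — with $L = L_0$ the loops of $M$ — dominates everything via $\proph_{P_i{\setminus}L_i} \le \proph_{P_{i-1}{\setminus}L_{i-1}}$-type comparisons that come out of the circuit argument.
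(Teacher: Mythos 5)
Your overall route is the same as the paper's: peel off the last projection, reuse the inherited thresholds on the contracted matroid, and mix them (with probabilities $1/3$ and $2/3$) with the $2$-competitive single-item mechanism of Theorem~\ref{thm:twocompetitive}; and you correctly identify why only a factor $3$ is lost per projection, namely that the guarantee is measured against $\eproph_{M\setminus L}$ rather than against the projection's own prophet, so only the single item lost in the divergence argument of Lemma~\ref{lm:apx_of_lift} has to be compensated (incidentally, only the first claim of that proof is needed; the circuit-splitting claim plays no role here).

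The genuine gap is exactly at the point you flag as the main obstacle, and your proposed fix does not work. The strengthened statement you suggest --- a mechanism for $P_i\setminus L_i$ with value at least $\frac{1}{\alpha 3^i}\eproph_{M\setminus L}$ where you take $L=L_0$ to be the loops of $M$ --- is false: a projection can turn a non-loop of $M$ carrying almost all of the prophet's value into a loop. For instance, let $Q$ have ground set $\{x,y,w\}$ with $x\parallel y$ and $w$ spanning a second dimension, let $M=Q\setminus x$, $N=Q/x$, and let $X_y=H$ be huge and $X_w=1$; then $N\setminus L$ has ground set $\{w\}$, so no mechanism for it attains $\frac{1}{3\alpha}\eproph_{M\setminus L_0}=\frac{1}{3\alpha}(H+1)$. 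The benchmark must use the loops of $N$, as in the statement. But even with the correct loop set, inducting on a bare value bound does not close: the inherited guarantee for $P_{t-1}\setminus L'$ may be carried by the one selectable element of the parallel class of $x_t$ (an element of $L\setminus L'$), which disappears from the ground set of $N\setminus L$, and its value is not controlled by the maximum over $E\setminus L$, so your rank-$1$ compensation cannot recover it. The paper resolves this by invoking the induction hypothesis with the values of all items of $L$ (the loops of the final matroid $N$) set to zero when the thresholds are computed; then the inherited guarantee equals $\frac{1}{\alpha 3^{t-1}}\eproph_{M\setminus L}$ and is achieved entirely on items of $E\setminus L$, after which your step of assigning $+\infty$ thresholds to the new loops and mixing with the single-item mechanism goes through. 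Some such device (equivalently, proving the lemma for arbitrary distributions and applying it with modified distributions) is missing from your plan, and it is the essential bookkeeping idea rather than a routine detail.
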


In the context of Lemma~\ref{lem:sequence_projections}, every set that is independent for the matroid $N\setminus L$ is also independent for the matroid $M\setminus L$. Hence, we have~$\eproph_{M\setminus L}\geq \eproph_{N\setminus L}$. Thus in case $t=1$, Lemma~\ref{lem:sequence_projections}  leads us to the following corollary.

\begin{corollary}\label{cor:apx_with_projection}

Let $N$ be a projection
of the matroid $M$. If there is an $\alpha$-competitive non-adaptive mechanism for $M$ then there exists a $3\alpha$-competitive non-adaptive mechanism for $N$.
\end{corollary}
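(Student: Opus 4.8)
The plan is to obtain this corollary as the $t=1$ special case of Lemma~\ref{lem:sequence_projections}. Since $N$ is a projection of $M$, by definition there is a matroid $L'$ on the ground set $E\cup\{x\}$ with $x$ neither a loop nor a free element such that $M = L'{\setminus}x$ and $N = L'{/}x$; in particular $N$ is obtained from $M$ by a single projection, so Lemma~\ref{lem:sequence_projections} applies with $t=1$. Let $L$ denote the set of loops of the matroid $N$, and note that a projection may create new loops, so $L$ need not be empty even when $M$ has no loops — this is exactly why the lemma is phrased in terms of $N{\setminus}L$ and $M{\setminus}L$.

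Applying Lemma~\ref{lem:sequence_projections} with $t=1$ yields a non-adaptive mechanism for $N{\setminus}L$ whose expected total value is at least $\frac{1}{\alpha\cdot 3}\eproph_{M{\setminus}L}$. Next I would invoke the observation recorded immediately before the corollary: every set independent in $N{\setminus}L$ is also independent in $M{\setminus}L$ (a projection only introduces dependencies), hence $\eproph_{M{\setminus}L}\ge \eproph_{N{\setminus}L}$, so this mechanism already guarantees at least $\frac{1}{3\alpha}\eproph_{N{\setminus}L}$. Finally I would extend the mechanism from $N{\setminus}L$ to all of $N$ by assigning threshold $+\infty$ to every loop $e\in L$: such an element is dependent on its own, so it can be selected neither by the gambler nor by the prophet, whence $\eproph_N = \eproph_{N{\setminus}L}$ and the extended thresholds form a valid non-adaptive mechanism for $N$ with expected value at least $\frac{1}{3\alpha}\eproph_N$, i.e. a $3\alpha$-competitive mechanism.

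Since the entire argument is just unwinding Lemma~\ref{lem:sequence_projections} together with the remark preceding the corollary, there is no genuine obstacle here. The only point that deserves a word of care is that the loops of $N$ are loops with respect to $N$ and in general not loops of $M$, so they must be handled relative to the matroid $N$ that the non-adaptive mechanism actually uses in its independence check — but assigning them infinite thresholds disposes of them immediately and does not affect $\eproph_N$.
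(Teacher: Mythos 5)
Your proposal is correct and takes essentially the same route as the paper: it instantiates Lemma~\ref{lem:sequence_projections} with $t=1$ and combines it with the observation that $\eproph_{M\setminus L}\ge \eproph_{N\setminus L}$, noting that loops of $N$ can be discarded via $+\infty$ thresholds since $\eproph_N=\eproph_{N\setminus L}$. Your additional care about loops of $N$ not being loops of $M$ is a correct (if implicit in the paper) point and does not change the argument.
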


\begin{proof}[Proof of Lemma~\ref{lem:sequence_projections}] Let us prove the statement by induction. Of course, in case $t=0$ we have $M=N$ and the statement is trivially true. 

Let us now assume that $t$ is at least $1$. Let $N'$ be a matroid such that $N'$ is obtained from the matroid~$M$ by a sequence of $t-1$ projections and $N$ is a projection of $N'$. Since $N$ is a projection of $N'$ there is a matroid $P=(E,\mathcal S)$ and $x\in E$ such that $P\setminus x =N'$ and $P/x=N$. Let $L'$ be the set of loops in the matroid~$N'$.

  By induction hypothesis, there exist thresholds $T'_e$, $e\in E\setminus (L'\cup \{x\})$ such that the gambler with the matroid~$N'{\setminus}L'$ achieves at least $\frac{1}{\alpha \cdot 3^{t-1}} \eproph_{M{\setminus}L'}$ as the expected total value.  Let us assume that to  compute thresholds $T'_e$, $e\in E\setminus (L'\cup \{x\})$ the values of items in $L$ were set to be~$0$ while the distribution of values for other items remain the same.
Since $L' \subseteq L$,  analogously to Lemma~\ref{lm:apx_of_restriction} we can define thresholds 
\[
T''_e:=
\begin{cases} 
+\infty&\text{if  } e\in L\\
T'_e&\text{otherwise}
\end{cases}
\]
such that the gambler with the matroid~$N'{\setminus}L$ achieves at least $\frac{1}{\alpha \cdot 3^{t-1}} \eproph_{M{\setminus}L}$ as the expected total value.  Let $T'''_e$, $e\in E\setminus (L\cup \{x\})$ be the thresholds guaranteeing $2$-competitive non-adaptive mechanism as in Theorem~\ref{thm:twocompetitive} for the uniform matroid of rank $1$ on the ground set  $E\setminus (L\cup \{x\})$. 

The constructed mechanism for the matroid $N\setminus L$
selects one of two threshohold sets at random, where first set of thresholds $T''_e$, $e\in E\setminus (L\cup \{x\})$  is selected with probability
$1/3$ and the thresholds $T'''_e$, $e\in E\setminus (L\cup \{x\})$ with probability $2/3$. Note that the thresholds $T''_e$, $e\in E\setminus (L\cup \{x\})$ were designed for the matroid $N'\setminus L$ but are used for the matroid $N\setminus L$; hence less items might be selected than when it is used for $N'\setminus L$. Also note, that  the thresholds $T'''_e$, $e\in E\setminus (L\cup \{x\})$ are used for $N\setminus L$ but were designed for the uniform matroid of rank $1$. 

For the analysis, let $I_{alg}$ be the random variable indicating the items set selected by the gambler with matroid $N'\setminus L$ when the thresholds $T''_e$, $e\in E\setminus (L\cup \{x\})$ are used. Analogously to a claim in the proof of Lemma~\ref{lm:apx_of_lift}, we can assume that when the thresholds $T''_e$, $e\in E\setminus (L\cup \{x\})$ are used the gambler with $N\setminus L$ select all items in $I_{alg}$ with an exception for possibly one item. Let $x_{opt}$ be the random variable indicating the element of maximum value in $E\setminus (L\cup \{x\})$. 

To finish the proof it is enough to show the following inequality
\[
\frac{1}{3} E[w(I_{alg})-w(x_{opt})]+\frac{2}{3}\frac{1}{2}E[w(x_{opt})]\geq \frac{1}{\alpha \cdot 3^t} \eproph_{M{\setminus}L}\,.
\]
To obtain this inequality we can do estimations as follows
\[
\frac{1}{3} E[w(I_{alg})-w(x_{opt})]+\frac{2}{3}\frac{1}{2}E[w(x_{opt})]=\frac{1}{3}E[w(I_{alg})]\geq\frac{1}{3}\frac{1}{\alpha \cdot 3^{t-1}} \eproph_{M{\setminus}L}\,.
\]
\end{proof}

Now let us combine Corollary~\ref{cor:apx_with_projection} and Lemma~\ref{lm:apx_of_lift}.

\begin{lemma}\label{lm:dist_apx}
Let $M$ and $N$ be matroids such that $\dist(M, N) \leq t$. If there exists an $\alpha$-competitive non-adaptive mechanism for the matroid $M$ with $\alpha\geq 2$ then  there exists a $3^t \alpha$-competitive non-adaptive mechanism for the matroid $N$.
\end{lemma}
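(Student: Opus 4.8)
The plan is to prove Lemma~\ref{lm:dist_apx} by induction on $t$, using the definition of $\dist$ to decompose the transformation from $M$ to $N$ into a sequence of single lifts and projections, and applying Corollary~\ref{cor:apx_with_projection} and Lemma~\ref{lm:apx_of_lift} one step at a time. First I would handle the base case $t=0$: then $N=M$ and the $\alpha$-competitive mechanism for $M$ works for $N$, and $\alpha \le 3^0\alpha$ trivially. For the inductive step, suppose $\dist(M,N)\le t$ with $t\ge 1$. By definition there is a sequence $P_0=M, P_1,\ldots, P_t=N$ where each $P_i$ is a lift or projection of $P_{i-1}$; in particular $N=P_t$ is a lift or a projection of $P_{t-1}$, and $\dist(M,P_{t-1})\le t-1$.

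By the inductive hypothesis there is a $3^{t-1}\alpha$-competitive non-adaptive mechanism for $P_{t-1}$; call its guarantee $\beta := 3^{t-1}\alpha$, and note $\beta\ge\alpha\ge 2$. Now there are two cases. If $N$ is a projection of $P_{t-1}$, then by Corollary~\ref{cor:apx_with_projection} there is a $3\beta = 3^t\alpha$-competitive non-adaptive mechanism for $N$, as desired. If $N$ is a lift of $P_{t-1}$, then by Lemma~\ref{lm:apx_of_lift} there is a $(2\beta+2)$-competitive non-adaptive mechanism for $N$. It remains to check that $2\beta+2 \le 3\beta$, i.e.\ that $\beta\ge 2$, which holds since $\beta\ge\alpha\ge 2$. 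Hence in either case $N$ admits a $3^t\alpha$-competitive non-adaptive mechanism, completing the induction.

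The only subtlety worth spelling out is the bookkeeping that keeps the competitiveness parameter above $2$ throughout the induction, since Lemma~\ref{lm:apx_of_lift} turns an $\alpha$-guarantee into a $(2\alpha+2)$-guarantee rather than $3\alpha$; the inequality $2\beta+2\le 3\beta$ is exactly what lets us uniformly absorb every step (lift or projection) into a factor of $3$, and it relies on the hypothesis $\alpha\ge 2$ propagating up the chain because $3^{i}\alpha$ is non-decreasing in $i$. I expect this trivial-looking monotonicity check to be the main (and essentially only) obstacle: everything else is a direct invocation of the two cited results together with the definition of $\dist$. No heavy calculation is required.
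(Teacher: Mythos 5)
Your proof is correct and follows essentially the same route as the paper: the paper also observes that $3\alpha \ge 2\alpha+2$ for $\alpha \ge 2$ and then applies Corollary~\ref{cor:apx_with_projection} or Lemma~\ref{lm:apx_of_lift} step by step along the lift/projection sequence. Your explicit induction with the parameter $\beta = 3^{t-1}\alpha \ge 2$ just makes the paper's brief bookkeeping precise.
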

\begin{proof}
Note that for $\alpha\geq 2$ we have $3\alpha\geq 2\alpha +2$. Since $N$ can be obtained from $M$ by a sequence of $t$ projection and lift steps, we can use
Corollary~\ref{cor:apx_with_projection} or Lemma~\ref{lm:apx_of_lift} for each of these steps to obtain the desired competitiveness ratio. 
\end{proof}

\subsubsection{Minor-closed families theorem}

\begin{lemma}[Lemma 6 in \cite{tonypeter2020matroid}]\label{lm:u2p2_not_fp}
Let  $p$ and $n$ be integers such that $p \le n - 2$ and $p$ is prime. The matroid $U_{2, n}$ is not representable over the field $\mathbb{F}_p$.
\end{lemma}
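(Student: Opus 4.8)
The plan is to prove the contrapositive-flavored statement by contradiction, reducing everything to an elementary count of one-dimensional subspaces of $\mathbb{F}_p^2$. Assume $U_{2,n}$ is representable over $\mathbb{F}_p$. Since $p \le n-2$ forces $n \ge p+2 \ge 4 > 2$, the matroid $U_{2,n}$ has rank exactly $2$, so there is a map $\phi\colon [n] \to V$ into an $\mathbb{F}_p$-vector space $V$ with $\{i\in[n]\}$-sets independent exactly as prescribed by $U_{2,n}$ and $\dim \Span \phi([n]) = 2$. Replacing $V$ by $\Span \phi([n])$, I may assume $V = \mathbb{F}_p^2$ (here we use that $p$ prime makes $\mathbb{F}_p$ a field; the same argument in fact gives that $U_{2,n}$ is representable over $\mathbb{F}_q$, $q$ a prime power, iff $n\le q+1$, but only the prime case is needed).

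First I would extract two consequences of the matroid structure of $U_{2,n}$. Every singleton is independent in $U_{2,n}$, so $\phi(i) \ne 0$ for all $i\in[n]$. Every two-element set is independent in $U_{2,n}$, so for $i\ne j$ the vectors $\phi(i)$ and $\phi(j)$ are linearly independent in $\mathbb{F}_p^2$; in particular no two of $\phi(1),\dots,\phi(n)$ are scalar multiples of one another.

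The core step is then a pigeonhole count. The nonzero vectors of $\mathbb{F}_p^2$ split into the one-dimensional subspaces, of which there are $(p^2-1)/(p-1) = p+1$, each containing exactly $p-1$ nonzero vectors; two nonzero vectors lie in the same one-dimensional subspace precisely when one is a scalar multiple of the other. Hence the pairwise non-proportional vectors $\phi(1),\dots,\phi(n)$ occupy $n$ distinct one-dimensional subspaces, giving $n \le p+1$. This contradicts $n \ge p+2$, so $U_{2,n}$ is not representable over $\mathbb{F}_p$.

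I do not expect a real obstacle: the only points that merit a line of care are the reduction to a two-dimensional ambient space, so that the number of available "directions" is exactly $p+1$ (rather than merely bounded in a higher-dimensional space), and the standard verification that $\mathbb{F}_p^2$ has exactly $p+1$ one-dimensional subspaces. Both are routine, and the rest follows directly from the definitions of $U_{2,n}$ and of representability.
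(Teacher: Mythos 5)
Your argument is correct: a rank-$2$ representation over $\mathbb{F}_p$ places $n$ nonzero, pairwise non-proportional vectors into the $(p^2-1)/(p-1)=p+1$ one-dimensional subspaces of a two-dimensional $\mathbb{F}_p$-space, forcing $n\le p+1$ and contradicting $n\ge p+2$. Note that the paper itself gives no proof of this lemma --- it is quoted as Lemma~6 of \cite{tonypeter2020matroid} --- and your pigeonhole count over the projective line is exactly the standard argument behind that cited fact, so there is nothing to add or fix.
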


The following Structural Hypothesis is due to Geelen, Gerards and Whittle. The proof of this Structural Hypothesis has not appeared in print.

\begin{hypothesis}\label{hypothesis:tree_decomp} Let $p$ be a prime number and $\mathcal{M}$ is a proper minor-closed  class of matroids representable over $\mathbb{F}_p$.

  Then there exist $k$, $n$, $t$ such that every $M \in \mathcal{M}$ is a restriction
  of an $\mathbb{F}_p$-representable matroid $M'$ having a full tree-decomposition $(T, \mathcal X)$ of
  thickness at most $k$ so that for every $v \in V(T)$ if $M' \mid_{\cl_{M'}(X_v)}$
  has a $M(K_n)$ minor, then there exists a $2$-column sparse matroid $N$ with
  $\dist(M' \mid_{\cl_{M'}(X_v)}, N) \le t$.
\end{hypothesis}

\begin{proof}[Proof of Theorem~\ref{thm:minor_closed}]  Let $k$, $n$, $t$ are as stated in the Structural Hypothesis~\ref{hypothesis:tree_decomp} on $\mathcal{M}$.

  Let $\mathcal{M}_1$ be the set of matroids on distance $t$ or less from some $2$-column sparse matroid and are representable over $\mathbb{F}_p$.
  By Theorem~\ref{thm:k_column_sparse} all $2$-column
  sparse matroids have a $32$-competitive non-adaptive mechanism. By Lemma~\ref{lm:dist_apx}
  there exists a $(3^t \cdot 32)$-competitive mechanism for matroids in $\mathcal{M}_1$.
  
  Let $\mathcal{M}_2$ be the set of matroids without $M(K_n)$ minor and are
  representable over $\mathbb{F}_p$. By Lemma~\ref{lm:u2p2_not_fp} all matroids
  in $\mathcal{M}_2$ do not have $U_{2,p+2}$ as a minor.
  Then by Corollary~\ref{cor:geelen}, we have that there is a
  $p^{p^{3n}}$-competitive
  non-adaptive mechanism for every matroid in $\mathcal{M}_2$.

By the Structural Hypothesis~\ref{hypothesis:tree_decomp} we have that every $M \in \mathcal{M}$
  is a restriction of some $M'$ with a full tree-decomposition $(T, \mathcal X)$ of
  thickness at most $k$ so that for every $v \in V(T)$ $M' \mid_{\cl_{M'}(X_v)} \in \mathcal{M}_1 \cup \mathcal{M}_2$.

  Thus by Theorem~\ref{thm:td_of_representable}, matroid $M'$ has a $\gamma := (\max(3^t \cdot 32, p^{p^{3n}}) \cdot p^{k+1})$-competitive non-adaptive mechanism. By Lemma~\ref{lm:apx_of_restriction} the
  matroid $M$ has also a $\gamma$-competitive non-adaptive mechanism.

\end{proof}

\newpage
\bibliographystyle{alpha}
\bibliography{literature.bib}

\end{document}